\documentclass[10pt,journal,twocolumn,letterpaper]{IEEEtran}

\usepackage[T1]{fontenc}
\usepackage[utf8]{inputenc}
\usepackage{amsmath,amssymb,amsfonts,amsthm}
\usepackage{graphicx}
\usepackage{booktabs}
\usepackage{array}
\usepackage{enumitem}
\usepackage{url}
\usepackage{cite}
\usepackage{tikz}
\usetikzlibrary{positioning,arrows.meta,shapes.geometric}
\usepackage[unicode]{hyperref}
\hypersetup{
  colorlinks=true,
  linkcolor=black,
  urlcolor=blue!55!black,
  citecolor=black,
  pdftitle={Algebraic Diversity: Principles of a Group-Theoretic Approach to Signal Processing (v7.1)},
  pdfauthor={M.A. Thornton},
}

\DeclareMathOperator{\Tr}{Tr}
\DeclareMathOperator{\Aut}{Aut}

\DeclareMathOperator{\diag}{diag}
\DeclareMathOperator{\Var}{Var}

\newtheorem{theorem}{Theorem}
\newtheorem{proposition}[theorem]{Proposition}
\newtheorem{lemma}[theorem]{Lemma}
\newtheorem{corollary}[theorem]{Corollary}
\newtheorem{definition}[theorem]{Definition}
\newtheorem{remark}[theorem]{Remark}
\newtheorem{conjecture}[theorem]{Conjecture}

\graphicspath{{figures/}{./}}

\begin{document}

\title{Algebraic Diversity: Principles of a\\Group-Theoretic Approach to Signal Processing}

\author{Mitchell~A.~Thornton
\thanks{E-mail: \texttt{mitch@smu.edu}.}\\
Southern Methodist University\\
May 16, 2026}

\maketitle

\begin{abstract}
We present principles of a group-theoretic approach to signal processing in which the algebraic symmetry of a signal is exploited for variance reduction in second-order estimation. The classes of rearrangements that leave a signal's statistics invariant form a mathematical group; this matched group determines the natural transform through which the signal should be analyzed; and estimators constructed as averages over the group action extract more information from each observation than do conventional methods that implicitly assume only the trivial group. The viewpoint unifies spectral estimation, single-observation measurement, blind equalization, and the dictionary construction step of compressed sensing within a common algebraic frame, and casts the law of large numbers as the trivial-group degenerate case of a $(G, L)$ continuum that combines classical sample-count averaging with algebraic group-orbit averaging. Several techniques are developed in detail: a rank-promotion procedure that admits the group-averaged estimator on scalar data streams; an eigentensor hierarchy for signals with nested symmetry; a blind group-matching methodology that identifies the matched group from data via a polynomial-time eigenvalue problem on the Lie algebra $\mathfrak{u}(M)$; and a continuous-relaxation treatment of transform selection that places the DFT, DCT, and Karhunen--Lo\`{e}ve transforms as distinguished points on a single transform manifold. The Constant Modulus Algorithm's residual phase ambiguity is predicted analytically and matched within $1.6^\circ$ on 3GPP TDL multipath channels; an orbit-dictionary construction is proposed conjecturally for compressed sensing. Four theorems formalize a structural capacity $\kappa$ as the operational measure of single-observation estimation efficiency, complemented by a finite-dimensional conjugate capacity bound for non-commuting Hermitian generators. The structural capacity is the R\'{e}nyi-2 analog of Shannon and von Neumann's R\'{e}nyi-1 entropies and quantifies how a signal's information is organized rather than how much it carries. We hope this work serves as an early step in a broader application of group theory to signal processing, and that the concept of algebraic diversity will be regarded as a contribution alongside existing algebraic approaches to the field such as compressed sensing and algebraic signal processing.
\end{abstract}

\begin{IEEEkeywords}
Algebraic diversity, spectral estimation, single-snapshot estimation, blind signal processing, blind equalization, rank promotion, algebraic waveform diversity, eigentensor, group representation, von Neumann entropy.
\end{IEEEkeywords}

\section{Introduction}
\label{sec:intro}

\IEEEPARstart{T}{his} paper is a broader and more general consideration of the concept of algebraic diversity that was introduced in the first arXiv paper on the topic~\cite{thornton2026ad}, where the concept was developed from the narrower viewpoint of reducing the number of samples required to compute a covariance estimate and the interrelationship this gives rise to among the Karhunen--Lo\`{e}ve (KL), discrete Fourier (DFT), and discrete cosine (DCT) transforms. Before the broader development is undertaken, it is useful to review some basic concepts from the mathematical subjects of group theory and representation theory as they apply to discrete signal processing.

\subsection{Background}
\label{sec:background}

\subsubsection{Group Theory}
\label{sec:group_theory}

Because the beneficial effects of AD depend upon the algebraic symmetry of a signal, a brief review of basic concepts is in order. A signal has algebraic symmetry when there is some way to rearrange its components that leaves its statistics unchanged. As an example, a time-varying sinusoid of constant amplitude is invariant under time shifts and can be said to have temporal symmetry since delaying it by one period leaves it indistinguishable from its original form. This same signal, a constant-amplitude sinusoid, has left-right reflective symmetry since it can be time-reversed, or ``flipped'' around the time axis, and still have the same form. Likewise, a multi-dimensional signal that is sensed by $M$ sensors simultaneously can have algebraic symmetry with respect to its spatially-separated sensor array. Since the ordering or indexing of the sensors is arbitrary and their labels do not affect the signal, any cyclic shift of the labels, or even any permutation of the sensor labels, will not affect the signal amplitudes that are sensed; it will only interchange the component values of the $M$-dimensional vector (or tensor of rank one) that represents the measured sensor values at a particular time instant. This indicates a cyclic or permutation symmetry.

Understanding the algebraic symmetry present in a signal is an important basic concept since it describes the classes of signals that benefit the most from algebraic diversity principles. These examples of temporally-varying signals, and in the case of the sensor array, spatially-varying signals, are basic concepts, and many techniques in signal processing can be classified based on temporal interactions or spatial interactions with the signals of interest; that is, they can be viewed as exploiting the temporal diversity or the spatial diversity of the sensor array. Here we consider a third type of diversity that is present due to the algebraic symmetry of a signal, and we accordingly call this class of methods \textit{algebraic diversity} (AD).

The symmetry present in a signal is represented mathematically through the use of group theory. Group theory is a topic in abstract algebra that is concerned with a structure that is formally referred to as a \textit{group}. Because the symmetrical diversity present in a given signal varies according to its structure, different groups are used to represent signals based on their structure. A group that captures exactly the symmetries present in a given signal, neither missing any that are present nor including any that are not, is called a \textit{matched group}, and the formal statement is as follows.

\begin{definition}[Matched group]\label{def:matched}
Let $\mathbf{x}$ be a signal whose statistics are invariant under a collection of component rearrangements that includes the identity rearrangement. The \textit{matched group} of $\mathbf{x}$, denoted $G^*$, is the unique group (containing the identity element $e$) such that (i) the statistics of $\mathbf{x}$ are invariant under the rearrangement induced by every element of $G^*$, and (ii) no rearrangement outside $G^*$ leaves those statistics invariant. The qualifier \textit{matched} reflects that $G^*$ is the most efficient group in terms of representing the symmetry of the signal: smaller groups miss symmetries that are present, and larger groups include symmetries that are not.
\end{definition}

Moreover, we will focus mainly on sampled signals in this work, so we are mainly concerned with discrete groups; however, continuous groups are also available to be used for signals that are modeled as continuous mathematical functions. A discrete algebraic group is a structure comprised of two objects: a set of elements and a binary operator that acts on any pair of elements from the set. The textbook example of a group is the set of all real values and the binary multiplication operator. These two objects form a group because any two values that are multiplied result in another real value or member of the group, yielding the \textit{closure} property. Another required property of a group is the presence of an \textit{identity element} with respect to the operator. Since the set of real values includes the unity value, or ``one,'' this property is satisfied. Yet another requirement is the presence of \textit{inverses} with respect to the operator, meaning that the inverse of any group element is also a member of the group. The multiplicative inverse of any real value $x$ is $x^{-1}$, which happens to be a real value and a member of the set, so this property is also present. The final and fourth property is \textit{associativity}, which means that the order of application of the operator does not matter. In this example consider $x, y, z \in \mathbb{R}$; since $(x \times y) \times z = x \times (y \times z)$, the fourth property is satisfied and the pair of objects $\mathbb{R}$ and $\times$ form a group. The operator is not required to be multiplicative, but since multiplication is the typical example used in defining a group $G$, the group operator is often referred to as the \textit{product operator}, and in fact one may see a group defined simply as $\mathbb{R}$, in which case the multiplicative operator is assumed. However, it should be noted that any binary operation may be used and, in fact, the set of elements need not even be numerical values.

As we are interested in the algebraic symmetry of a sampled signal, the type of group we are interested in here is one whose elements are a finite set of $M$ signal samples and whose operator is not multiplicative but rather one that preserves the symmetric property of interest. These are operators that, in general, interchange the elements of the set. Since a mathematical set has no fixed ordering, a bijection on the set can be viewed as a rearrangement in which each element may be mapped to any other element so long as the map is one-to-one and onto. The collection of all such possible rearrangements are the bijections on the set. The most general group of this form represents all possible symmetries of the set and is called the \textit{symmetric group}, defined as follows.

\begin{definition}[Symmetric group]\label{def:symmetric}
Let $\{1, 2, \ldots, M\}$ be a finite set of cardinality $M$. The \textit{symmetric group} $S_M$ is the group whose elements are all bijections of the set onto itself, with group operator composition of bijections. Its cardinality is $|S_M| = M!$, and it represents every symmetry that the $M$-element set can possess. When $S_M$ acts on a signal $\mathbf{x} \in \mathbb{C}^M$, it does so by permuting the components of $\mathbf{x}$ according to the corresponding bijection of component indices.
\end{definition}

Because a particular signal may exhibit one form of symmetry but not another, $S_M$ can be used to represent the signal's symmetry, but it also contains symmetries that are not present in the signal, and thus it is not the signal's matched group in general. At the opposite extreme of richness, every signal is compatible with the smallest possible group, called the \textit{trivial group}, defined next.

\begin{definition}[Trivial group]\label{def:trivial}
The \textit{trivial group}, denoted $\{e\}$, is the group comprising the single element $e$ whose action leaves every object unchanged. That is, for any object $\mathbf{x}$ on which $\{e\}$ can act, $e \cdot \mathbf{x} = \mathbf{x}$, and $|\{e\}| = 1$. When realized on a single scalar signal sample, the action of $e$ is multiplication by unity; when realized as a subgroup of the symmetric group $S_M$ on an $M$-vector signal, $e$ is the identity permutation leaving every component fixed.
\end{definition}

\begin{remark}\label{rem:trivial_matched}
Every signal is compatible with the trivial group, but $\{e\}$ is the matched group for a given signal if and only if the signal possesses no algebraic symmetry beyond the identity. For signals with additional symmetry, $\{e\}$ is a proper subgroup of the matched group.
\end{remark}

The symmetric group and the trivial group bracket a family of groups that arise in practice. Every permutation group arising in the applications of this paper is the symmetry group of the same $M$-element set endowed with some particular structure, and is therefore a subgroup of $S_M$. Imposing a cyclic (directed) order on the $M$ elements gives the cyclic group $\mathbb{Z}_M$; adding a reflection gives the dihedral group $D_M$; imposing a graph structure on the elements gives the graph automorphism group $\Aut(\mathcal{G})$; and so on. In practice, the matched group for a given signal is almost always a proper subgroup of $S_M$, because measurement systems impose specific structural constraints (cyclic lattices, symmetric boundaries, graph topologies) under which only a subset of all possible bijections leaves the signal statistics invariant. A useful distinction among these groups is whether their operator is commutative: a group is \textit{Abelian} if $g \cdot h = h \cdot g$ for every pair of elements and \textit{non-Abelian} otherwise. The cyclic group $\mathbb{Z}_M$ and direct products of cyclic groups are Abelian; the dihedral group $D_M$ (for $M \geq 3$), the symmetric group $S_M$ (for $M \geq 3$), and most nontrivial graph automorphism groups are non-Abelian. This distinction will have computational consequences in the developments that follow. The map from abstract group to signal-processing transform is supplied by representation theory, which we take up next: each group acts on $\mathbb{C}^M$ by a unitary representation, and the characters of that representation's irreducible components form the columns of the matched transform.

\subsubsection{Representation Theory}
\label{sec:rep_theory}

For a discrete group $G$ to act on a vector of $M$ signal samples $\mathbf{x} \in \mathbb{C}^M$, a concrete mechanism must be specified by which each abstract group element $g \in G$ is associated with a definite operation on the vector. This association is called a \textit{representation} of the group, and is written $\pi: G \to \mathrm{GL}(M, \mathbb{C})$, where $\mathrm{GL}(M, \mathbb{C})$ is the set of invertible $M \times M$ complex matrices. In this notation, $\pi$ is the representation itself, understood as the rule that assigns a matrix to each group element, and $\pi_g$ or $\pi(g)$ denotes the particular matrix that the representation assigns to the element $g$. The key property that makes $\pi$ a representation, rather than merely an assignment of matrices to group elements, is that it preserves the group's structure: for any two elements $g, h \in G$, the matrix associated with their product satisfies $\pi(gh) = \pi(g)\pi(h)$, so that the group's binary operator is realized as ordinary matrix multiplication. In particular, the identity element $e$ of the group is always sent to the $M \times M$ identity matrix, $\pi(e) = \mathbf{I}_M$, and the inverse of a group element is always sent to the inverse of the associated matrix, $\pi(g^{-1}) = \pi(g)^{-1}$. In this way, the abstract group is realized concretely as a collection of matrices, and the action of a group element on a signal vector is simply the matrix-vector product $\pi_g \mathbf{x}$.

The representations of interest for signal processing applications are those that preserve energy, meaning that the norm of the signal is unchanged by the group action: $\|\pi_g \mathbf{x}\| = \|\mathbf{x}\|$ for every $g \in G$ and every $\mathbf{x} \in \mathbb{C}^M$. Matrices with this property are called \textit{unitary} and satisfy $\pi_g^H \pi_g = \mathbf{I}_M$, where $\pi_g^H$ denotes the conjugate transpose. A representation whose matrices are all unitary is called a \textit{unitary representation}, and it is a foundational result that every representation of a finite group is equivalent to a unitary representation. This is fortunate, because the energy-preserving property is precisely what allows AD to combine multiple group-transformed views of a signal without any artificial rescaling: each view $\pi_g \mathbf{x}$ is statistically equivalent in energy to the original signal $\mathbf{x}$, differing only in how that energy is distributed across the $M$ components of the vector. The group-averaged estimator introduced below, in equation~\eqref{eq:gae}, takes advantage of precisely this property.

A unitary representation of a group on $\mathbb{C}^M$ may or may not be the simplest possible representation of that group; it is often the case that the $M$-dimensional space can be decomposed into a direct sum of smaller subspaces, each of which is invariant under the group action, and each of which cannot be decomposed further. When no such nontrivial decomposition is possible, the representation is called \textit{irreducible}, and the subspace on which it acts is the smallest one on which the group action ``fits'' without leaving any invariant sub-subspace untouched. An irreducible representation is the basic building block of representation theory in the same way that a prime number is the basic building block of integer arithmetic: every unitary representation of a finite group can be decomposed uniquely (up to isomorphism) into a direct sum of irreducible representations, a result known as the Peter--Weyl decomposition.

For an Abelian group such as $\mathbb{Z}_M$, every irreducible representation is one-dimensional, and the irreducibles correspond exactly to the columns of the discrete Fourier transform matrix; this is why the DFT diagonalizes any covariance matrix that commutes with $\mathbb{Z}_M$, and why the matched group for cyclically-sampled signals produces the DFT as its natural transform. For a non-Abelian group, some of the irreducible representations have dimension $d_i > 1$, and the associated matched transform is correspondingly block-diagonal rather than fully diagonal, with each block of size $d_i \times d_i$ corresponding to one irreducible representation. The signal processing benefit of identifying and using a non-Abelian matched group comes directly from this block structure, as will become evident in the development that follows.

With the notion of a unitary representation in hand, the informal ``invariance of statistics'' discussed in the definition of the matched group takes a precise form. For a signal $\mathbf{x}$ with covariance $\mathbf{R} = \mathbb{E}[\mathbf{x} \mathbf{x}^H]$, the statistics of $\mathbf{x}$ are invariant under the action of $G$ through the representation $\pi$ if and only if the covariance matrix commutes with every $\pi_g$, written $[\pi_g, \mathbf{R}] \equiv \pi_g \mathbf{R} - \mathbf{R} \pi_g = 0$ for every $g \in G$, or equivalently $\pi_g \mathbf{R} \pi_g^H = \mathbf{R}$. This commutation relation is the formal expression of the symmetry property and will be used throughout the paper.

\subsubsection{Subgroups and the Containment Lattice}
\label{sec:subgroups}

Throughout this paper, groups are frequently compared by containment: one group is said to be \textit{inside} another when every element of the smaller group is also an element of the larger, and the group operator of the smaller group is inherited unchanged from the larger. Formally, for two groups $H$ and $G$, we say that $H$ is a \textit{subgroup} of $G$, written $H \subset G$, if $H$ is a subset of $G$ that is itself closed under the group operator of $G$ and contains the identity element of $G$. When $H \subset G$ and $H \neq G$, we say $H$ is a \textit{proper subgroup} of $G$.

Subgroup containment organizes the universe of matched-group candidates into a lattice with the trivial group $\{e\}$ at the bottom and the symmetric group $S_M$ at the top. Between these extremes lie the cyclic group $\mathbb{Z}_M$, the dihedral group $D_M$, and the graph automorphism group $\Aut(\mathcal{G})$ for various graph structures, each related to the others by inclusion: $\{e\} \subset \mathbb{Z}_M \subset D_M \subset S_M$, for example. The matched group of a given signal occupies one specific position on this lattice, determined by the structure of the measurement system, as discussed in Section~\ref{sec:apparent_rarity}. Many results in the developments below take the form of comparisons between a candidate group and one of its subgroups, and the relative efficiency of a larger matched group over a smaller one scales with the ratio of their cardinalities. Figure~\ref{fig:hasse} displays this lattice organization as a Hasse diagram.

\begin{figure}[t]
\centering
\resizebox{\columnwidth}{!}{%
\begin{tikzpicture}[
  node distance=10mm and 6mm,
  every node/.style={font=\scriptsize},
  group/.style={draw, rounded corners=2pt, inner sep=3pt, minimum width=10mm, align=center},
  card/.style={font=\scriptsize, gray},
  edge/.style={-, thick}
]
\node[group] (SM)  at (0, 4.0) {$S_M$};
\node[card]  at (1.5, 4.0) {$|S_M|=M!$};

\node[group] (AM)  at (-1.0, 2.8) {$A_M$};
\node[card]  at (-2.5, 2.8) {$|A_M|=M!/2$};

\node[group] (DM)  at (1.0, 2.0) {$D_M$};
\node[card]  at (2.4, 2.0) {$|D_M|=2M$};

\node[group] (ZM)  at (1.0, 1.0) {$\mathbb{Z}_M$};
\node[card]  at (1.95, 0.6) {$|\mathbb{Z}_M|=M$};

\node[group] (Z2k) at (-1.0, 1.0) {$\mathbb{Z}_2^k$};
\node[card]  at (-2.5, 1.0) {$|\mathbb{Z}_2^k|=2^k$};

\node[group] (e)   at (0, -0.2) {$\{e\}$};
\node[card]  at (-1.4, -0.2) {$|\{e\}|=1$};

\draw[edge] (SM) -- (AM);
\draw[edge] (SM) -- (DM);
\draw[edge] (DM) -- (ZM);
\draw[edge] (ZM) -- (e);
\draw[edge] (Z2k) -- (e);
\draw[edge] (AM) -- (e);

\node[group, dashed] (autG) at (4.4, 2.0) {$\Aut(\mathcal{G})$};
\node[card, align=left] at (4.4, 3.0)
  {position depends\\on graph $\mathcal{G}$;\\$|\Aut(\mathcal{G})|$ from\\$1$ to $M!$};

\draw[dashed, gray, ->] (autG) to[bend left=15] (SM);
\draw[dashed, gray, ->] (autG) to[bend left=30] (e.east);

\end{tikzpicture}%
}
\caption{Hasse diagram of the matched-group lattice for algebraic diversity. Each node is annotated with its cardinality as a function of $M$; the combinatorial growth of $|S_M| = M!$ relative to its subgroups indicates the rapidly expanding space of matched-group choices as $M$ increases. Lines between nodes denote subgroup containment; the absence of a line means the two groups are either unrelated by containment or occupy parallel branches of the lattice (for example, $D_M \not\subset A_M$ for $M \geq 3$, and $\mathbb{Z}_M$ and $\mathbb{Z}_2^k$ are non-isomorphic for $k \geq 2$ even when $M = 2^k$). The graph automorphism group $\Aut(\mathcal{G})$ occupies a level determined by the graph $\mathcal{G}$, ranging from the trivial group (rigid graph) to $S_M$ (complete graph); its variable position is indicated separately. The diagram is schematic: the full subgroup lattice of $S_M$ contains many more intermediate subgroups than are shown, including additional cyclic $\mathbb{Z}_d$ for $d \mid M$, additional dihedral subgroups, and sporadic subgroups arising from specific graph geometries.}
\label{fig:hasse}
\end{figure}

\subsubsection{Group Action and Orbits}
\label{sec:action_orbits}

The representation $\pi$ defined above specifies how an abstract group element $g \in G$ operates on the signal vector $\mathbf{x} \in \mathbb{C}^M$: the rule is the matrix-vector product $\pi_g \mathbf{x}$, which produces another vector in $\mathbb{C}^M$. The collection of vectors that arise from applying every group element to a fixed $\mathbf{x}$ is the central object on which algebraic diversity operates, and is defined as follows.

\begin{definition}[Group orbit]\label{def:orbit}
Let $G$ be a group with unitary representation $\pi: G \to U(M)$ acting on $\mathbb{C}^M$, and let $\mathbf{x} \in \mathbb{C}^M$ be a signal vector. The \textit{group orbit} of $\mathbf{x}$ under the action of $G$, denoted $\mathcal{O}_G(\mathbf{x})$, is the set of vectors
\begin{equation}
\mathcal{O}_G(\mathbf{x}) \;=\; \{\pi_g \mathbf{x} : g \in G\}.
\end{equation}
The orbit has at most $|G|$ distinct elements, with equality when no nontrivial group element fixes $\mathbf{x}$ (a \textit{free action}).
\end{definition}

The orbit is the concrete object on which the group-averaged estimator of algebraic diversity operates: each orbit element $\pi_g \mathbf{x}$ is an equivalent view of the original signal (equivalent in the sense that its statistics match those of $\mathbf{x}$ whenever $G$ is the matched group), and averaging some statistic over the orbit produces an estimator with reduced variance compared to the same statistic evaluated on $\mathbf{x}$ alone.

The size of the orbit depends on both the group $G$ and the signal $\mathbf{x}$. For a generic signal and a finite group $G$, the orbit has exactly $|G|$ distinct elements, so the orbit size equals the group cardinality. For a signal with extra structure (for example, an all-ones vector under $S_M$), many group elements can map $\mathbf{x}$ to itself, and the orbit may be much smaller than $|G|$. The number of algebraically distinct orbit elements, for a given statistic $f$, is the \textit{effective group order} $d_{\mathrm{eff}}(G, f)$, which will appear explicitly in the variance expression for the group-averaged estimator in Section~\ref{sec:gl_manifold}.

\subsubsection{Characters and the Matched Transform}
\label{sec:characters}

The connection from an abstract group to a concrete signal-processing transform passes through a distinguished scalar quantity attached to each representation. For a representation $\pi: G \to \mathrm{GL}(M, \mathbb{C})$, the \textit{character} of $\pi$ is the function $\chi_\pi: G \to \mathbb{C}$ defined by
\begin{equation}
\chi_\pi(g) \;=\; \mathrm{tr}(\pi_g),
\end{equation}
the trace of the matrix assigned to $g$. The character assigns a single complex number to each group element, and it is invariant under conjugation in the group because trace is cyclic, so $\chi_\pi$ depends only on the conjugacy class of its argument. In this sense the character is a ``scalar fingerprint'' of the representation, and for finite groups different inequivalent irreducible representations have different characters.

The characters of the irreducible representations of a finite group satisfy an orthogonality relation that is central to the connection with matched transforms. If $\chi_i$ and $\chi_j$ are the characters of two inequivalent irreducible representations of $G$, then
\begin{equation}
\label{eq:char_orth}
\frac{1}{|G|} \sum_{g \in G} \overline{\chi_i(g)} \,\chi_j(g) \;=\; \delta_{ij},
\end{equation}
where $\delta_{ij}$ is the Kronecker delta. The characters, viewed as vectors indexed by group elements, therefore form an orthonormal system under a group-averaged inner product. This orthogonality is the algebraic statement that underwrites the diagonalization of the covariance by the matched transform.

For an Abelian group every irreducible representation is one-dimensional, so $\pi_g$ is a single complex number and the character coincides with $\pi_g$ itself. Arranging the characters of the distinct irreducibles as columns of an $|G| \times |G|$ matrix, indexed by irreducible representation and by group element, produces exactly the matched transform matrix. For the cyclic group $\mathbb{Z}_M$ on $\mathbb{C}^M$, the characters are $\chi_k(n) = e^{2\pi i k n / M}$, which are exactly the columns of the discrete Fourier transform, and equation~\eqref{eq:char_orth} becomes the familiar orthonormality of the DFT basis. For the dihedral group $D_M$, the characters of the one-dimensional irreducibles produce cosine and sine columns that (after appropriate reflection normalization) assemble into the DCT basis. For a non-Abelian group, the irreducibles of dimension $d_i > 1$ produce characters that label blocks rather than individual columns, and the resulting matched transform is block-diagonal with blocks of sizes $d_i \times d_i$, as described in the preceding subsubsection.

The character paragraph closes the chain of concepts that this Background builds up: \textit{matched group} $\to$ \textit{unitary representation} $\to$ \textit{irreducible components} $\to$ \textit{characters} $\to$ \textit{matched transform columns}. With this chain in hand, the remainder of the paper can proceed to use each of its links without further pause for definitions.

\begin{remark}[Commutation criterion for matchedness]\label{rem:commutation}
The commutation relation $[\pi_g, \mathbf{R}] = \mathbf{0}$ established at the end of Section~\ref{sec:rep_theory} admits two consequences that are worth stating explicitly. First, because $\pi_g$ and $\mathbf{R}$ are Hermitian (and in the $\pi_g$ case unitary), their commutation implies that they share a common eigenbasis, so the matched transform of $G$ simultaneously diagonalizes $\mathbf{R}$, and the Karhunen-Lo\`{e}ve transform of a structured signal therefore coincides with the transform matched to its symmetry group. Equivalently, the Cayley-graph spectrum of $G$ and the spectrum of $\mathbf{R}$ share a common eigenbasis~\cite{thornton2026ad}. Second, because commutation need only be checked on a generating set, a candidate group is tested for matchedness by the simple algebraic check of whether $[\pi_g, \mathbf{R}] = \mathbf{0}$ for a single generator $g$ of $G$; this commutator criterion, relaxed to a soft residual $\delta(G, \mathbf{R})$, underlies the blind group-matching methodology of Section~\ref{sec:blindmatching}.
\end{remark}

\begin{remark}[Statistical symmetry]\label{rem:statistical_symmetry}
The commutation condition $[\pi_g, \mathbf{R}] = \mathbf{0}$ of Remark~\ref{rem:commutation} admits a statistical rereading that connects the algebraic and statistical pictures of AD. Because $\pi_g$ is unitary, commutation of $\pi_g$ with $\mathbf{R}$ is equivalent to $\pi_g \mathbf{R} \pi_g^H = \mathbf{R}$: the second-order statistics of $\mathbf{x}$ are invariant under the action of $G$. We refer to this invariance as the \textit{statistical symmetry} of the signal under the group action. Statistical symmetry unifies several classical signal-processing concepts under a single umbrella: exchangeability is the statistical symmetry of a signal under the full symmetric group $S_M$; wide-sense stationarity is the statistical symmetry of a length-$M$ signal under the cyclic group $\mathbb{Z}_M$ at the second-order level; and rotational invariance of a random field is statistical symmetry under the relevant rotation group. The AD framework may be read as the exploitation of statistical symmetry for variance reduction in second-order estimation, and the matched group of Definition~\ref{def:matched} is equivalently characterized as the largest group under which the signal possesses statistical symmetry.
\end{remark}

\subsection{The $(G, L)$ Continuum}
\label{sec:gl_manifold}

The group-averaged estimator of algebraic diversity~\cite{thornton2026ad} takes the form
\begin{equation}\label{eq:gae}
\hat{\mathbf{R}}_G(\mathbf{x}) \;=\; \frac{1}{|G|}\sum_{g \in G} \bigl(\pi_g \mathbf{x}\bigr)\bigl(\pi_g \mathbf{x}\bigr)^H,
\end{equation}
where $\mathbf{x} \in \mathbb{C}^M$ is a single observation, $G$ is a finite group, and $\pi: G \to U(M)$ is a unitary representation. When the signal covariance $\mathbf{R} = \mathbb{E}[\mathbf{x}\mathbf{x}^H]$ commutes with every $\pi_g$, the estimator is unbiased, and two distinct quantities describe what the group buys. The first is the \emph{effective group order} $d_{\mathrm{eff}}(G, f)$, the number of algebraically distinct values the statistic $f$ takes on the group orbit for generic $\mathbf{x}$: an algebraic degrees-of-freedom count that fixes the number of free coordinates the orbit average populates and, with it, the effective sample size available to any functional of the estimate. For the outer-product statistic $f(\mathbf{x}) = \mathbf{x}\mathbf{x}^H$ with a matched group, $d_{\mathrm{eff}}$ coincides with $|G|$ for the regular representation and with $\sum_i d_i^2$ across irreducible representations of dimensions $\{d_i\}$ under the Peter-Weyl decomposition~\cite{thornton2026ad}. For other statistics, $d_{\mathrm{eff}}$ may be smaller than $|G|$ when the statistic's symmetry collapses the orbit (for example, any $S_M$-invariant scalar has $d_{\mathrm{eff}} = 1$ regardless of $|G|$). The second quantity is the \emph{realized} per-entry variance-reduction factor, which depends on how the energy of the averaged data distributes over those coordinates: it equals $d_{\mathrm{eff}}$ exactly when that distribution is flat---a constant-modulus deterministic signal in white noise, a Gaussian signal whose matched-basis spectrum is flat, or noise-only (signal-free) covariance estimation---and for concentrated profiles it equals a participation-ratio functional bounded above by $d_{\mathrm{eff}}$; for a Gaussian signal with matched-basis spectrum $\{\lambda_m\}$ it is exactly $(\sum_m \lambda_m)^2/\sum_m \lambda_m^2$, the spectral form of the structural capacity of Section~\ref{sec:kappa_two}. Theorem~\ref{thm:outer} records both roles. In the $L$-observation extension,
\begin{equation}\label{eq:gae_L}
\hat{\mathbf{R}}_{G,L} \;=\; \frac{1}{L|G|}\sum_{\ell=1}^L \sum_{g \in G} \bigl(\pi_g \mathbf{x}_\ell\bigr)\bigl(\pi_g \mathbf{x}_\ell\bigr)^H,
\end{equation}
each entry's variance scales as $1/(L \cdot g)$, where $g$ is the realized factor above, reaching its algebraic ceiling $1/(L \cdot d_{\mathrm{eff}})$ in the flat case. The product $L \cdot d_{\mathrm{eff}}$ is the \emph{effective sample size}, and the classical law of large numbers corresponds to the degenerate choice $G = \{e\}$, $d_{\mathrm{eff}} = 1$.

This structure admits a natural reading as a \emph{two-axis tradeoff}. A system designer with a budget of $N$ effective samples may spend it at any $(G, L)$ point on the hyperbola $L \cdot d_{\mathrm{eff}} = N$. At $(G = \{e\}, L = N)$ one performs conventional multi-snapshot averaging. At $(L = 1, |G| = N)$ one performs pure single-snapshot algebraic diversity. Intermediate points trade physical repetition for algebraic structure. The choice among points on the hyperbola is driven by practical constraints: availability of independent snapshots, knowledge of the signal's algebraic symmetry, computational cost of the group average, and whether the signal structure is stationary across the required observation window. Figure~\ref{fig:GL} illustrates the tradeoff for $M = 4$.

The variance reduction afforded by the group-averaged estimator can be combined with classical covariance regularization. A companion development~\cite{thornton2026shrinkage} casts the group average as the structured target of a convex shrinkage estimator in the style of Ledoit and Wolf~\cite{ledoit2004}, in which the sample covariance is shrunk toward the Reynolds-projected estimator $\hat{\mathbf{R}}_G$ rather than toward the scaled identity; when the signal supports the group, the symmetry-aware shrinkage target improves on the isotropic Ledoit-Wolf target in high-dimensional regimes. The present paper develops the group-averaged estimator and its blind identification; the shrinkage extension is treated separately.

\begin{figure}[t]
\centering
\includegraphics[width=\columnwidth]{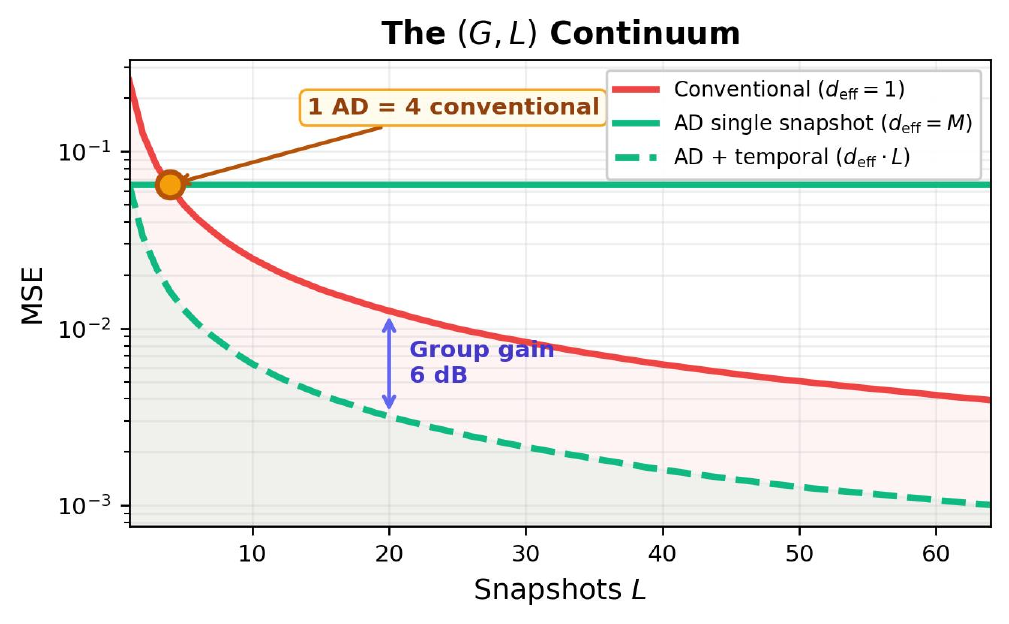}
\caption{The $(G, L)$ continuum, illustrated for $M = 4$ in the flat-profile case where the realized gain attains its ceiling $d_{\mathrm{eff}}$. The conventional sample-covariance estimator (red, $d_{\mathrm{eff}} = 1$) requires $L$ independent snapshots and achieves MSE proportional to $1/L$. The single-snapshot AD estimator (green solid, $d_{\mathrm{eff}} = M$) achieves equivalent MSE from one observation when the matched group acts. Combining AD with temporal averaging (green dashed, $d_{\mathrm{eff}} \cdot L$) shifts the curve down by the group gain, $10\log_{10}(M)$~dB at all snapshot counts in this case; for concentrated energy profiles the per-entry gain is the participation-ratio factor of Theorem~\ref{thm:outer}. The orange marker indicates that one AD measurement reaches the same MSE as four conventional snapshots; the blue arrow indicates the persistent group gain.}
\label{fig:GL}
\end{figure}

\subsection{The transform manifold}
\label{sec:manifold}

The previously published AD paper~\cite{thornton2026ad} established the correspondence between groups and transforms, and showed that the Karhunen--Lo\`{e}ve (KL) transform is optimal in a universal sense because it is associated, through a Cayley graph construction, with the symmetric group $S_M$, which contains all the other permutation groups as subgroups. When the signals of interest are matched to one of these subgroups, the transform associated with that subgroup is the KL-optimal transform for those signals. As an example, signals measured with sensor arrays that are cyclically symmetric, such as uniform linear or circular arrays with uniform sampling period, naturally map to the cyclic group $\mathbb{Z}_M$, whose irreducible representations are the Fourier basis functions; the group-averaged estimator under $\mathbb{Z}_M$ is the periodogram.

The fact that single-snapshot direction-of-arrival estimation is possible for these geometries is already known in the classical array-processing literature. Shan, Wax, and Kailath~\cite{shan1985} established spatial smoothing on uniform linear arrays by exploiting the Toeplitz structure of the array manifold, and Pillai and Kwon~\cite{pillai1989} subsequently obtained the forward/backward spatial smoothing variant via an associated Hankel construction; Tewfik and Hong~\cite{tewfik1992} extended the shift-invariance argument to uniform circular arrays. Although the original justifications for these results appeal to different matrix structures, from the AD viewpoint both the Toeplitz and Hankel constructions are side effects of the same underlying fact: the array admits the cyclic group $\mathbb{Z}_M$ as a matched group, so a single observation vector already carries $M$ algebraically distinct views of the signal through the cyclic shift action, and any second-order estimator that exploits those views is an instance of the group-averaged estimator under $\mathbb{Z}_M$.

A further extension of the spatial-smoothing viewpoint to arbitrary array geometries is due to Friedlander and Weiss~\cite{friedlander1992}, who proposed an interpolation construction that maps the response of a non-uniform array onto that of a virtual uniform linear array within which the cyclic-group machinery of the ULA case applies. The interpolation is performed sector-by-sector through a least-squares fit $\mathbf{T}\,\mathbf{a}(\theta) \approx \mathbf{a}_{\mathrm{ULA}}(\theta)$ on a discrete grid of look angles within each sector. From the AD viewpoint the interpolation transformation $\mathbf{T}$ is an approximate equivariant embedding: a sector-by-sector intertwiner between the non-uniform array's natural representation and the regular representation of $\mathbb{Z}_M$ on the virtual ULA. The construction is not unitary in general; the interpolation transformation changes the noise covariance and the energy normalization, so downstream estimation must account for this nonlinear change to the noise spectrum, and the clean variance-reduction guarantees of the matched-group theorems apply only on the virtual ULA up to the interpolation error.

Likewise, signals with symmetric boundary conditions, such as images or sequences of equally sampled image frames, naturally admit the dihedral group $D_M$, whose irreducible representations are the DCT basis functions.

This family of correspondences suggests viewing all transforms as points on a manifold on which matching a signal's symmetry group maps to a KL-optimal transform. We refer to this as the \emph{transform manifold}. A unified development of classical transforms from this group-theoretic standpoint is given in a companion arXiv paper~\cite{thornton2026unification}. A convenient way to classify points on this manifold is to consider the tensorial form of the signal data representation. An equi-spaced array of $M$ sensors sampled with constant period for $L$ snapshots maps to a tensor of rank two with dimensions $M$ and $L$, admitting the cyclic group and the corresponding DFT. When boundary conditions are dihedral or symmetric (for example, even-symmetric images), the dihedral group appears as a different point on the manifold, with the DCT as the associated transform.

\subsection{The tensor-group correspondence and the apparent rarity of non-cyclic matched groups}
\label{sec:apparent_rarity}

A sharper statement of the above is the \emph{tensor-group correspondence}: the matched group is determined by the lattice geometry of the data tensor, which is determined in turn by the physical construction of the measurement system. Concretely:
\begin{itemize}[nosep]
\item A single periodic index (uniform sampling, ULA, circular array) has symmetry group $\mathbb{Z}_M$ (cyclic).
\item A symmetric index (palindromic boundary, even-symmetric image) has symmetry group $D_M$ (dihedral).
\item Two independent periodic indices (rectangular sensor array) have symmetry group $\mathbb{Z}_{M_1} \times \mathbb{Z}_{M_2}$ (product).
\item An unstructured index set has symmetry group $S_M$ (symmetric), which is computationally intractable for moderate $M$ except via the Cayley-graph construction of~\cite{thornton2026ad}.
\end{itemize}

Extensive experimentation across diverse signal classes (tonal, chirp, AR, MIMO, attention heads in transformer models) has produced a consistent empirical observation: the cyclic group $\mathbb{Z}_M$ is the matched group in the overwhelming majority of cases. Signals for which a non-cyclic matched group produces strictly better estimation than $\mathbb{Z}_M$ are sufficiently rare in practice that their apparent rarity demands explanation.

The explanation is that cyclic groups dominate because \emph{humans build measurement systems on regular lattices with uniform sampling}. A ULA samples space uniformly; an ADC samples time uniformly; a pixel grid samples an image uniformly. Uniform sampling on a one-dimensional lattice has cyclic symmetry, and the DFT (the spectral decomposition of $\mathbb{Z}_M$) diagonalizes the covariance of any signal whose statistics are translation-invariant on that lattice. The DFT is not a mathematical accident of signal theory; it is the inevitable consequence of engineering design choices, and its ubiquity in classical signal processing is a reflection of the engineering culture that built the sensors on which those signals are recorded. Signals measured on irregular lattices, graph-structured networks, or nonuniform sampling grids have different matched groups, and the tensor-group correspondence predicts which groups those are.

\emph{Non-cyclic matched groups are therefore not rare because non-cyclic groups are mathematically unusual}. They are rare because non-cyclic measurement geometries are engineered to be rare. Graph-structured data (social networks, molecular structures, neural connectivity, sensor networks with nontrivial topology) produces non-cyclic matched groups naturally, and this is the setting in which the AD framework's broader range of group structures becomes operationally important.

Many practical signal acquisition scenarios nevertheless fall within the cyclic or dihedral classes, which accounts for the historical predominance (and correctness) of the DFT and DCT in classical signal processing. The AD viewpoint says: the DFT and DCT are correct choices not because they are canonical transforms but because they are the matched transforms for the sampling geometries that engineers have historically preferred.

\subsection{Two foundational questions}
\label{sec:twoQ}

These observations point to two foundational questions for the AD viewpoint:
\begin{enumerate}[label=(\roman*)]
\item \emph{How do we choose the matched group?} When prior knowledge of the data acquisition geometry is available, the matched group is known (as in the array examples above). When no such knowledge is available, the group must be selected from the data; we call this the \emph{blind group matching problem}. Section~\ref{sec:blindmatching} presents the current methodology.
\item \emph{How can data that is scalar (a single time series) benefit from AD?} A scalar stream $\{x[n]\}$ does not immediately admit a nontrivial group action on its samples, because the samples lack the ambient vector space structure that AD requires. The answer is \emph{rank promotion}, the reorganization of a scalar stream into a higher-rank tensor to which a nontrivial group can act. This question is addressed first, in Section~\ref{sec:rankpromote}.
\end{enumerate}

A standing observation, repeated where helpful, is that AD has always been present: every estimator that averages over $L$ samples implicitly uses the trivial group $G = \{e\}$ acting on a single snapshot, $L$ times. The law of large numbers is the trivial-group case of a broader algebraic averaging theorem, and the AD benefit is realized whenever the trivial group is not the matched group.

\subsection{Organization}
\label{sec:organization}

The principle of algebraic diversity (AD) as applied to second-order estimators and single-snapshot measurement was developed in a companion work~\cite{thornton2026ad}. This paper broadens the AD viewpoint beyond measurement to address rank promotion of scalar data streams, blind group matching, blind equalization, and a family of other blind signal processing problems. The framework is further extended in a set of companion papers that fall outside the present scope: a continuous formulation on Lie groups acting on $L^2(\mathbb{R})$~\cite{thornton2026continuous}, a polynomial-time treatment of the blind group-selection problem via the double-commutator eigenvalue problem~\cite{thornton2026gevp}, a unified development of classical signal transforms from the group-theoretic standpoint~\cite{thornton2026unification}, a symmetry-aware convex shrinkage estimator for high-dimensional covariance~\cite{thornton2026shrinkage}, and a quantum extension in which a group-structured measurement set recovers a density matrix from a single copy~\cite{thornton2026quantum}.

We develop the $(G, L)$ continuum that relates the classical sample-count axis $L$ of the law of large numbers to the group-order axis $|G|$ of algebraic averaging, and we articulate the \emph{tensor-group correspondence}: the matched group is determined by the lattice geometry of the measurement system, which explains the empirical dominance of the cyclic group across classical signal processing applications. We summarize and extend the rank-promotion principle that stratifies scalar data into tensorial form admitting nontrivial group action, state the \emph{structural coding rate conjecture} $n^* \approx \lceil 2^{H_{\mathrm{struct}}} \rceil$ as the source-coding analog for structured estimation, and extend the eigentensor hierarchy and algebraic waveform diversity.

We formulate the blind group matching problem and present a current methodology that moves from an early spectral-concentration criterion (which we show carries an orbit-size bias) to a cross-validation criterion $D_{CV}$, thence to a polynomial-time continuous relaxation via a double-commutator generalized eigenvalue problem on the finite-dimensional Lie algebra $\mathfrak{u}(M)$. Supporting closed-form results include an eigenvalue-difference expression for the permutation commutativity residual and a sequential GEVP with group-theoretic deflation whose accepted permutations are guaranteed to lie in $\Aut(\mathbf{R})$; full recovery of $\Aut(\mathbf{R})$ holds trivially when $\Aut(\mathbf{R}) = S_M$, and for a proper subgroup it is recovered by a span-search form of the procedure whenever the candidate basis spans a generating set, leaving the a~priori basis-richness question as the residual open part, as we describe in Section~\ref{sec:seqgevp}. The methodology is organized in two stages, library discovery followed by a two-tier selection, and we report partial resolutions of three cases previously left open: the proper-subgroup recovery just noted, signals carrying several symmetries at once (addressed by an iterative-stripping analog of the CLEAN algorithm), and the degenerate double-commutator spectrum (addressed by a null-space tiebreaker). Empirical support for the methodology is supplied by a variance-scaling dichotomy distinguishing matched groups (convergent power-law variance $\sigma \propto \mathrm{SNR}^{-\beta}$ with $\beta$ strictly positive) from mismatched groups (constant variance, $\beta \approx 0$). The continuous relaxation operates on what we characterize as the \emph{transform manifold} on which the DFT, DCT, and KL transforms are distinguished points.

Using blind group matching as a bridge, we extend AD from measurement to blind signal processing generally; blind equalization provides the first detailed example, with the Constant Modulus Algorithm's residual phase standard deviation $45^\circ/\sqrt{3} \approx 26^\circ$ predicted analytically and matched within $1.6^\circ$ on 3GPP TR~38.901 TDL multipath channels. We state four theorems of an \emph{information structure theory} that parallel Shannon's foundational results and that together make the structural capacity $\kappa$ the operational measure of single-observation estimation efficiency, complemented by a finite-dimensional conjugate capacity bound $\kappa_A \cdot \kappa_B \leq 4/c^2$ for non-commuting Hermitian generators on $\mathbb{C}^M$. An expanded Erlangen-program reading situates AD among Shannon's classical information content, the AD structural capacity, and von Neumann's entropy of quantum measurement. We close with the relationship of AD to four streams of prior work, namely classical invariant estimation, minimax and convex-optimization approaches to robust estimation, algebraic signal processing, and compressed sensing, noting where AD can contribute to or compose with those frameworks without diminishing them.

Sections~\ref{sec:rankpromote}--\ref{sec:eigentensor} take up the rank-promotion question; Section~\ref{sec:blindmatching} develops the blind group matching methodology; Section~\ref{sec:blindproblems} uses that machinery to extend AD to blind signal processing generally, with blind equalization as the lead detailed example and a briefly indicated set of further blind problems to which the framework applies, with detailed treatments left to forthcoming companion papers; Section~\ref{sec:erlangen} presents the expanded Erlangen-program interpretation; Section~\ref{sec:related} discusses the relationship of AD to four streams of prior work (classical invariant estimation, minimax and convex-optimization approaches to robust estimation, algebraic signal processing, and compressed sensing); and Section~\ref{sec:conclusion} closes.

\section{Rank Promotion: Admitting AD on Scalar Data}
\label{sec:rankpromote}

The Background and the $(G, L)$ Continuum subsection both proceed on the assumption that the measurement is a vector observation $\mathbf{x} \in \mathbb{C}^M$, which provides the $M$-dimensional state space on which a matched group can act. A significant fraction of signal processing practice, however, deals with signals that are scalar at the point of measurement: a temperature sensor reading, a blood pressure measurement, a single-microphone recording, or any stream of samples $\{x[n]\}$ taken one at a time. The present section takes up the question of how such scalar data can be brought into the AD framework, introduces the resolution we call \emph{rank promotion}, and establishes the formal connection between the classical law of large numbers and the general algebraic averaging principle that AD extends.

\subsection{The scalar-data obstruction and its resolution}
\label{sec:scalar}

The obstruction that motivates this section is a direct consequence of the formulation given in the previous section. The group-averaged estimator of equation~\eqref{eq:gae} was defined for a vector-valued observation $\mathbf{x} \in \mathbb{C}^M$ and a unitary representation $\pi: G \to U(M)$ that acts on that vector by matrix multiplication. A purely scalar observation $x[n] \in \mathbb{C}$ is, in this sense, a vector of dimension one, and the only unitary representation available on a one-dimensional space is the trivial representation $\pi_g = 1$ of the trivial group $\{e\}$. The group-averaged estimator collapses to $\hat{\theta} = f(x[n])$, which is no estimator at all: a single scalar evaluation of the statistic. No algebraic benefit is available in this state of affairs, and the only way to reduce variance is to collect additional independent samples and average them in the classical fashion. The classical treatment of such scalar data is the sample mean, $\hat{\mu}_N = N^{-1}\sum_{n=1}^N x[n]$, whose variance scales as $\sigma^2/N$ by the law of large numbers. This is the situation in which every conventional scalar signal processing estimator operates, and it is the target of the rank-promotion construction.

The resolution of the obstruction is the \emph{stratification principle}, which we now describe. Suppose the scalar stream has length $N = M \cdot L$ for some factorization $N = M \cdot L$ with $M \geq 2$. The stream can then be reorganized into an $M$-by-$L$ array by grouping the samples into $L$ blocks of $M$ samples each, and each block can be treated as a single vector observation $\mathbf{x}_\ell \in \mathbb{C}^M$. The key question is whether the block structure admits a meaningful group action. If the $M$ samples within a block possess an algebraic symmetry (for example the cyclic group $\mathbb{Z}_M$ acting by circular shifts of the within-block sample indices), then the group-averaged estimator of equation~\eqref{eq:gae} becomes applicable to the block, and the $(G, L)$ continuum takes over: $L$ independent snapshots of a $\mathbb{Z}_M$-structured vector observation, with group gain $|G| = M$ on each snapshot. The scalar observation has been \emph{rank-promoted} from rank zero (a single number) to rank one (a vector with a nontrivial group acting on its components). The sample mean becomes an instance of the group-averaged estimator, the law of large numbers becomes an instance of the general algebraic averaging theorem (stated as Theorem~\ref{thm:gaat} below), and the framework of the previous section applies unchanged.

It is important to distinguish rank promotion from Permutation-Adapted Signal Estimation (PASE). PASE~\cite{thornton2026ad} is a \emph{sampling} scheme within a given group: given a signal whose matched group is already known to be $G$, PASE prescribes which group elements to evaluate in the sum of equation~\eqref{eq:gae}, ordered according to one of three structural coding levels (Level~1 draws antithetic pairs, Level~2 draws conjugacy-diverse elements together with their antithetic complements, Level~3 draws coset representatives). Rank promotion, in contrast, is a \emph{reorganization} of the data itself, in which a scalar stream that admits no nontrivial group action is restructured into a tensorial form on which a group acts. The two constructions are therefore complementary: rank promotion is the prior step that creates the state space on which a matched group can act, and PASE is a refinement that chooses which elements of that group to evaluate once the state space has been set up. Throughout what follows we take rank promotion as the foundational construction and use the term PASE only when discussing the sampling refinement within a given rank-promoted setting.

\subsection{The Trivial Group Embedding Theorem and GAAT}
\label{sec:gaat}

The law of large numbers is the universal statement that the sample mean of $N$ independent observations converges to the population mean at rate $1/\sqrt{N}$. Its status in the present framework is not that of a foundational result independent of AD, but rather that of the degenerate case of a broader algebraic-averaging principle in which the group acting on each observation is trivial. We now make this correspondence precise. The first result below is a reframing: it asserts that conventional temporal averaging is not an alternative to AD but a special case of it, namely the $(G = \{e\}, L)$ corner of the $(G, L)$ continuum introduced in the previous section.

\begin{theorem}[Trivial Group Embedding, TET]\label{thm:tet}
Conventional temporal averaging of $L$ independent observations is algebraic diversity with the trivial group $G = \{e\}$ applied to each observation independently. The sample covariance $\hat{\mathbf{R}}_L = L^{-1} \sum_{\ell=1}^L \mathbf{x}_\ell \mathbf{x}_\ell^H$ is the group-averaged estimator with $|G| = 1$ and $L$ snapshots, occupying the $(G = \{e\}, L)$ point on the continuum of Section~\ref{sec:gl_manifold}.
\end{theorem}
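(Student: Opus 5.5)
The plan is to verify the statement by direct substitution into the definitions, since the theorem is essentially a reframing. First I will fix the trivial group $G=\{e\}$ of Definition~\ref{def:trivial} and identify its representations on $\mathbb{C}^M$. Any representation must satisfy $\pi(e)=\mathbf{I}_M$, as recorded in Section~\ref{sec:rep_theory}, so the only unitary representation of $\{e\}$ on $\mathbb{C}^M$ is $\pi_e=\mathbf{I}_M$. This choice is therefore forced, not assumed, and I will state that explicitly because it is what makes the embedding canonical.

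Next I will substitute into the $L$-observation group-averaged estimator \eqref{eq:gae_L}. Since $|G|=1$, the inner sum has the single term $g=e$, and we get
\begin{equation*}
\hat{\mathbf{R}}_{\{e\},L}=\frac{1}{L\cdot 1}\sum_{\ell=1}^{L}(\mathbf{I}_M\mathbf{x}_\ell)(\mathbf{I}_M\mathbf{x}_\ell)^H=\frac{1}{L}\sum_{\ell=1}^{L}\mathbf{x}_\ell\mathbf{x}_\ell^H=\hat{\mathbf{R}}_L .
\end{equation*}
This gives the identity of estimators. For $L=1$ the same computation applied to \eqref{eq:gae} shows that single-snapshot AD with the trivial group is the rank-one outer product. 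So ``applied to each observation independently'' means exactly that each $\mathbf{x}_\ell$ is mapped to its trivial-group orbit average $\hat{\mathbf{R}}_{\{e\}}(\mathbf{x}_\ell)$, and these are then averaged over $\ell$. That decomposition $\hat{\mathbf{R}}_{G,L}=L^{-1}\sum_\ell \hat{\mathbf{R}}_G(\mathbf{x}_\ell)$ holds for every $G$ and I will note it as a one-line rearrangement of \eqref{eq:gae_L}.

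To place the estimator on the continuum, I will check the two quantities of Section~\ref{sec:gl_manifold}. The orbit $\mathcal{O}_{\{e\}}(\mathbf{x})=\{\mathbf{x}\}$ has one element, so $d_{\mathrm{eff}}(\{e\},f)=1$ for every statistic $f$. The effective sample size $L\cdot d_{\mathrm{eff}}$ therefore equals $L$, which is the point $(G=\{e\},L)$ on the hyperbola, and the per-entry variance scaling $1/(L\cdot d_{\mathrm{eff}})$ reduces to the classical $1/L$. I will also remark that unbiasedness needs no assumption here. The commutation condition $[\pi_e,\mathbf{R}]=[\mathbf{I}_M,\mathbf{R}]=\mathbf{0}$ holds for every covariance, which is consistent with Remark~\ref{rem:trivial_matched}: every signal is compatible with $\{e\}$.

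There is no real obstacle; the only delicate point is interpretive. One has to make precise that the theorem asserts an identity of estimators together with a placement on the continuum, not a new variance bound. I will handle this by separating the algebraic identity, which is exact, from the variance statement, which is inherited from the $(G,L)$ discussion with $d_{\mathrm{eff}}=1$.
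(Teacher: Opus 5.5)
Your proposal is correct and is essentially the paper's own argument. The paper gives no separate proof, treating TET as a reframing that follows from substituting $G=\{e\}$ with the forced representation $\pi_e=\mathbf{I}_M$ into \eqref{eq:gae_L} to recover $\hat{\mathbf{R}}_L$, and from reading off $d_{\mathrm{eff}}=1$ to place the estimator at the $(\{e\},L)$ corner of the continuum. One small tightening: $d_{\mathrm{eff}}(\{e\},f)=1$ requires that $f$ not vanish almost everywhere, since otherwise the span is zero-dimensional; this is harmless here because $\mathbf{x}\mathbf{x}^H\neq\mathbf{0}$ whenever $\mathbf{x}\neq\mathbf{0}$.
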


The Trivial Group Embedding Theorem, which we will refer to as TET throughout the paper, has the following interpretation. Every estimator that is built on the sample covariance has, implicitly, been operating at the trivial-group end of the $(G, L)$ continuum, and has been relying on physical repetition (collecting $L$ independent snapshots) as its sole source of variance reduction. The AD machinery extends this by replacing the trivial group with a matched group of cardinality $|G| > 1$, producing a multi-fold within-observation variance reduction that is algebraic in origin. The two reductions compose multiplicatively to yield the $(G, L)$ continuum scaling $1/(|G| \cdot L)$. TET is therefore a consistency statement between AD and classical averaging: AD does not replace the law of large numbers but rather subsumes it as the specific case in which no algebraic structure is available or being exploited.

Before stating the extension of TET beyond the outer product, we fix a quantity that governs the variance-reduction factor precisely, and that has appeared informally in the Background and the $(G, L)$ continuum subsection under the name \emph{effective group order}. The full definition is slightly subtle because the quantity depends on both the group $G$ and the statistic $f$, and the subtlety is worth spelling out before it is invoked in the theorems that follow.

\begin{definition}[Effective group order]\label{def:deff}
Let $G$ be a finite group with unitary representation $\pi: G \to U(M)$ acting on $\mathbb{C}^M$, let $f: \mathbb{C}^M \to \mathcal{V}$ be a statistic taking values in a complex vector space $\mathcal{V}$, and let $\mathbf{x} \in \mathbb{C}^M$ be a signal admitting $G$ as a matched group. The \textit{effective group order} of the pair $(G, f)$, denoted $d_{\mathrm{eff}}(G, f)$, is the dimension of the complex vector space spanned by the statistic evaluated on every element of the group orbit, that is
\begin{equation}\label{eq:deff}
d_{\mathrm{eff}}(G, f) \;=\; \dim_{\mathbb{C}} \mathrm{span}\bigl\{\, f(\pi_g \mathbf{x}) \,:\, g \in G \,\bigr\},
\end{equation}
where the dimension is evaluated for a \emph{generic} signal $\mathbf{x}$, meaning for almost every $\mathbf{x}$ with respect to the natural Lebesgue measure on the signal space.
\end{definition}

The effective group order is the number of algebraically distinct evaluations of the statistic that the group orbit produces, counted in the linear-algebra sense: two evaluations are treated as distinct if they are not related by a linear combination of each other, and counted as one if they are. The quantity $d_{\mathrm{eff}}(G, f)$ is therefore the arithmetic resource made available by averaging the statistic $f$ over the orbit of $\mathbf{x}$ under $G$, and it is this quantity (rather than the raw group order $|G|$) that bounds the variance reduction in the theorems of this section. Its role is that of a degrees-of-freedom count: it fixes the number of free coordinates the orbit average populates, and hence the effective sample size available to any functional of the resulting estimate. The per-entry variance reduction itself is a companion quantity: it equals $d_{\mathrm{eff}}$ exactly when the energy of the averaged data is distributed flatly over those coordinates, and in general it is a participation-ratio functional of the orbit energy distribution bounded above by $d_{\mathrm{eff}}$; Theorem~\ref{thm:outer} states both roles precisely for the outer product. Three consequences of the definition merit attention, because each is needed at a different point in the development that follows.

First, whenever the group acts on the observation with nontrivial intersections among its orbit points (for instance when $\mathbf{x}$ has extra symmetry that allows some $\pi_g \mathbf{x}$ to coincide), $d_{\mathrm{eff}}$ may be strictly smaller than $|G|$. For a generic signal under a free action, however, the orbit has $|G|$ distinct points and the statistic evaluated on them is generically linearly independent, so $d_{\mathrm{eff}}(G, f) = |G|$ is the default case.

Second, for the outer-product statistic $f(\mathbf{x}) = \mathbf{x} \mathbf{x}^H$ with a matched group acting on $\mathbf{x}$, the Peter-Weyl decomposition gives a sharp accounting: the orbit of the outer product spans the regular representation of $G$, and its dimension equals the sum of squared irreducible representation dimensions, $d_{\mathrm{eff}}(G, f_{\mathrm{outer}}) = \sum_i d_i^2$. For an Abelian group every $d_i = 1$, so $d_{\mathrm{eff}} = |G|$ and the group order is fully available as a variance-reduction factor; for a non-Abelian group with some $d_i > 1$, the same relation holds and the non-Abelian irreducibles contribute $d_i^2$ each to the effective order.

Third, for a statistic that is invariant under the full group (for example, any $S_M$-invariant scalar statistic such as a function of the elementary symmetric polynomials of the components of $\mathbf{x}$), every orbit element produces the same value, the span collapses to a one-dimensional subspace, and $d_{\mathrm{eff}}(G, f) = 1$ regardless of $|G|$. This is the structural reason that the law of large numbers scaling $\sigma^2 / N$ governs the $S_M$-symmetric scalar case: the group-averaged estimator under $S_M$ is a single number, not an $|G|$-fold average, and the only available source of variance reduction is classical repetition across independent observations.

With the effective group order pinned down, the extension of the trivial-group embedding to a broader class of statistics becomes clean to state. We begin with the outer-product case, for which a sharp variance expression is available.

\begin{theorem}[Outer-product algebraic averaging]\label{thm:outer}
Let $\mathbf{x} = \mathbf{s} + \mathbf{n}$ with $\mathbf{s}$ deterministic and $\mathbf{n}$ having zero mean. Let $G$ be a finite group with unitary representation $\pi$. Under signal equivariance and noise ergodicity, the outer-product group-averaged estimator of equation~\eqref{eq:gae} is unbiased for the $G$-invariant projection of $\mathbf{R} = \mathbb{E}[\mathbf{x}\mathbf{x}^H]$ and has variance on each entry reduced by a factor $g(G, \mathbf{x})$ satisfying $1 \leq g(G, \mathbf{x}) \leq d_{\mathrm{eff}}(G, f)$, a participation-ratio functional of the distribution of the averaged energy over the orbit coordinates. The ceiling $g = d_{\mathrm{eff}}$ is attained exactly when that distribution is flat: for constant-modulus $\mathbf{s}$ in white noise, for noise-only estimation ($\mathbf{s} = \mathbf{0}$, white $\mathbf{n}$), and, in the Gaussian stochastic extension $\mathbf{x} \sim \mathcal{CN}(\mathbf{0}, \mathbf{R})$ with $[\pi_g, \mathbf{R}] = 0$, when the matched-basis spectrum of $\mathbf{R}$ is flat. In the Gaussian Abelian case the factor is exact and computable from the spectrum $\{\lambda_m\}$ of $\mathbf{R}$:
\begin{equation}\label{eq:g_realized}
g(G, \mathbf{x}) \;=\; \frac{\bigl(\sum_m \lambda_m\bigr)^2}{\sum_m \lambda_m^2},
\end{equation}
a consequence of the per-coordinate law $\Var(\hat{\lambda}_m) = \lambda_m^2$ of Theorem~\ref{thm:converse} together with Parseval's identity on the character basis. The $(G, L)$ continuum gives $\Var(\hat{R}_{G, L, ij}) = \Var(\hat{R}_{\{e\}, ij}) / (L \cdot g(G, \mathbf{x}))$, with the flat-profile ceiling $\Var(\hat{R}_{\{e\}, ij}) / (L \cdot d_{\mathrm{eff}})$.
\end{theorem}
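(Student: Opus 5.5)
The plan is to prove the theorem in three stages: unbiasedness through the Reynolds projection, the variance factor in the character (matched) basis, and then the $(G,L)$ extension by independence. The matched basis is where the participation-ratio structure becomes visible.

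\emph{Unbiasedness.} Linearity of expectation applied to equation~\eqref{eq:gae} gives
\[
\mathbb{E}[\hat{\mathbf{R}}_G] = \frac{1}{|G|}\sum_{g}\pi_g \mathbf{R}\pi_g^H =: \mathcal{P}_G(\mathbf{R}).
\]
Because $\pi$ is a unitary representation, $\mathcal{P}_G$ is the orthogonal (Reynolds) projection onto the commutant of $\{\pi_g\}$, so the estimator is unbiased for the $G$-invariant projection. Under signal equivariance, $\pi_g\mathbf{s}\mathbf{s}^H\pi_g^H$ averages to the invariant part, and when $[\pi_g,\mathbf{R}]=0$ this projection is $\mathbf{R}$ itself, as in Remark~\ref{rem:statistical_symmetry}.

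\emph{Variance factor.} I will pass to the matched transform $\mathbf{U}$ of Section~\ref{sec:characters}, which block-diagonalizes every $\pi_g$ and hence every element of the commutant. In the Abelian case, $\mathbf{y}=\mathbf{U}^H\mathbf{x}$ and $\pi_g$ becomes $\diag(\chi_m(g))$. Character orthogonality~\eqref{eq:char_orth} then forces $\hat{\mathbf{R}}_G$ to be $\mathbf{U}\diag(|y_m|^2)\mathbf{U}^H$: the cross terms $y_m\bar y_n$ with $m\neq n$ are annihilated by the group sum. Writing $\hat\lambda_m=|y_m|^2$, each entry is
\[
\hat R_{G,ij}=\sum_m U_{im}\bar U_{jm}\hat\lambda_m .
\]
For the DFT-type basis $|U_{im}|^2=1/M$, so independence of the $\hat\lambda_m$ and Theorem~\ref{thm:converse} ($\Var\hat\lambda_m=\lambda_m^2$) give $\Var(\hat R_{G,ij})=M^{-2}\sum_m\lambda_m^2$. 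The trivial-group single-snapshot entry variance is $R_{ii}R_{jj}=(M^{-1}\sum_m\lambda_m)^2$ by the Isserlis identity and Parseval. The ratio of the two is exactly~\eqref{eq:g_realized}.

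This ratio is a participation ratio of $\{\lambda_m\}$, so $1\le g\le M=d_{\mathrm{eff}}$ follows from Cauchy--Schwarz, with equality iff the spectrum is flat. For the deterministic-plus-noise model I will repeat the same computation, replacing $\lambda_m^2$ by the per-coordinate variance $\sigma^4+2\sigma^2|\tilde s_m|^2$. The flat cases, constant modulus or $\mathbf{s}=0$, then give equality. For non-Abelian $G$, the orbit average collapses each $d_i$-dimensional isotypic block to a Schur-scalar on a multiplicity space. The count of free coordinates is the Peter--Weyl count $\sum d_i^2=d_{\mathrm{eff}}$ from Definition~\ref{def:deff}, and the same Cauchy--Schwarz participation bound applies blockwise.

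\emph{$(G,L)$ extension.} Averaging $L$ independent snapshots divides every variance by $L$, which gives the stated continuum formula. This step is immediate.

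\emph{Main obstacle.} I expect the hardest step to be making the non-Abelian and deterministic-signal cases as sharp as the Gaussian Abelian one, where exactness is the clean part. Two things need care there. First, decorrelation of distinct-coordinate estimates must be justified; the stated ``noise ergodicity'' hypothesis is what I would lean on for that. Second, for non-Abelian $G$ the entries weight block coordinates unevenly, so I would define $g$ as the participation ratio of the weights $|U_{im}\bar U_{jm}|$ times the coordinate variances. I would then verify the lower bound $g\ge1$ separately via Jensen: averaging the estimator never increases variance, because the identity element contributes one term of a convex combination.
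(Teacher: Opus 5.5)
Your Gaussian Abelian computation is correct and follows the route the paper indicates: character orthogonality removes the cross terms, the per-coordinate law $\Var(\hat\lambda_m)=\lambda_m^2$ with flat weights $|U_{im}|^2=1/M$ gives $M^{-2}\sum_m\lambda_m^2$, and Isserlis gives the baseline $R_{ii}R_{jj}=(M^{-1}\sum_m\lambda_m)^2$. Cauchy--Schwarz then yields $1\le g\le M$, with equality exactly for a flat spectrum.

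The gap is the transfer to $\mathbf{x}=\mathbf{s}+\mathbf{n}$ by replacing $\lambda_m^2$ with $\sigma^4+2\sigma^2|\tilde s_m|^2$ and reusing Cauchy--Schwarz. In that model the averaged entry variance is $M^{-2}\sum_m(\sigma^4+2\sigma^2|\tilde s_m|^2)=M^{-2}(M\sigma^4+2\sigma^2\|\mathbf{s}\|^2)$. By Parseval it depends on the matched-domain profile only through its sum. The trivial baseline, by contrast, is $\sigma^4+\sigma^2(|s_i|^2+|s_j|^2)$, which depends on the observation-domain moduli. Hence
\[
g_{ij} \;=\; M\,\frac{\sigma^4+\sigma^2\bigl(|s_i|^2+|s_j|^2\bigr)}{\sigma^4+2\sigma^2\|\mathbf{s}\|^2/M},
\]
which is not a participation ratio of any matched-domain profile, so Cauchy--Schwarz on that profile proves nothing. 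Flatness of $\{\sigma^4+2\sigma^2|\tilde s_m|^2\}$ is neither necessary nor sufficient for the ceiling:
\begin{itemize}
\item A tone (constant modulus, spectrally concentrated) gives $g_{ij}=M$ for every entry.
\item An impulse $\mathbf{s}=A\mathbf{e}_1$ (spectrally flat) gives $g_{11}\to M^2$. For $k\neq 1$ it gives $g_{kk}=M/(1+2A^2/(M\sigma^2))$, which falls below $1$ once $A^2/\sigma^2>M(M-1)/2$.
\end{itemize}
So the equality cases must be checked directly: constant modulus in the observation domain makes $g_{ij}\equiv M$. The bounds $1\le g\le d_{\mathrm{eff}}$ cannot hold entrywise for arbitrary deterministic $\mathbf{s}$. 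They must either come from the signal-equivariance hypothesis, which for a transitive $G$ is what forces $|s_i|$ to be constant, or be restated for the aggregate Frobenius variance. That aggregate ratio is exactly $M$ for every $\mathbf{s}$ in white noise.

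The same example shows your Jensen argument for $g\ge1$ is misjustified. Convexity gives $\Var\bigl(|G|^{-1}\sum_g X_g\bigr)\le |G|^{-1}\sum_g \Var(X_g)$ with $X_g=(\pi_g\mathbf{x}\mathbf{x}^H\pi_g^H)_{ij}$. The right-hand side averages the variances of transformed entries, which for a permutation action are other entries $x_{\sigma(i)}\bar x_{\sigma(j)}$; the identity term plays no special role. You need $\Var(X_g)=\Var(X_e)$ for all $g$. That holds when $\pi_g\mathbf{x}$ has the same law as $\mathbf{x}$, as in the Gaussian case with $[\pi_g,\mathbf{R}]=0$, and it fails for the impulse.

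Finally, in the non-Abelian case a single per-entry factor does not exist, so your blockwise participation-ratio definition cannot rescue the statement. For the two-dimensional irreducible representation of $S_3$ the estimator is $\tfrac12\Tr(\mathbf{x}\mathbf{x}^H)\,\mathbf{I}$: diagonal entries gain $2$, while off-diagonal entries have zero variance against a trivial-group variance $R_{11}R_{22}>0$. Schur's lemma gives $A_i\otimes\mathbf{I}_{d_i}$ on each isotypic block. The free coordinates therefore number $\dim\mathcal{C}=\sum_i m_i^2$, which equals $\sum_i d_i^2$ only for the regular representation. What survives is an aggregate (Frobenius) statement with white-signal ceiling $M^2/\dim\mathcal{C}$, as measured in Section~\ref{sec:gl-surface}.
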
A sharper statement is available in the Gaussian case through a separation of the Abelian and non-Abelian regimes. The following result makes this precise and identifies the group-averaged estimator with the maximum-likelihood estimator, thereby establishing that the variance reduction it delivers attains the information-theoretic lower bound and cannot be improved by any other estimator. The statement is drawn from the Converse Theorem of the companion paper~\cite{thornton2026ad}, to which we refer for the proof; the quantities it involves are precisely those defined in the preceding Definition~\ref{def:deff} and will be used throughout the remainder of the paper.

\begin{theorem}[Converse Theorem, Abelian and non-Abelian cases]\label{thm:converse}
Let $\mathbf{x} \sim \mathcal{CN}(\mathbf{0}, \mathbf{R})$ with $[\pi_g, \mathbf{R}] = 0$ for all $g \in G$. Let $\mathbf{R}$ have eigenvalues $\{\lambda_i\}$ with multiplicities tied to the irreducible representations of $G$ of dimensions $\{d_i\}$. Then the group-averaged estimator is the maximum-likelihood estimator for the $G$-invariant parameters and achieves the Cram\'{e}r-Rao bound with equality:
\begin{equation}\label{eq:converse}
\Var(\hat{\lambda}_i) \;=\; \frac{\lambda_i^2}{d_i} \;=\; \mathrm{CRB}(\lambda_i).
\end{equation}
For Abelian groups all $d_i = 1$ and $\Var(\hat{\lambda}_k) = \lambda_k^2$. For non-Abelian groups with $d_i > 1$, irreducible representations of dimension $d_i$ provide a $1/d_i$ within-block variance reduction that Abelian groups structurally cannot access.
\end{theorem}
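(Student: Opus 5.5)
The plan is to work in the matched basis supplied by the Peter--Weyl decomposition and to reduce everything to independent scalar exponential statistics. First, I decompose $\mathbb{C}^M = \bigoplus_i V_i$ into the isotypic subspaces of $\pi$, with orthogonal projectors $\mathbf{P}_i$ and $\dim V_i = d_i$. I read the hypothesis that the multiplicities of $\{\lambda_i\}$ are ``tied to'' the irreducibles as saying the representation is multiplicity-free. Schur's lemma then forces every $\mathbf{R}$ with $[\pi_g,\mathbf{R}]=0$ for all $g$ to have the form $\mathbf{R} = \sum_i \lambda_i \mathbf{P}_i$. So the $G$-invariant parameters are exactly the scalars $\lambda_i$, and the matched transform diagonalizes $\mathbf{R}$, consistent with Remark~\ref{rem:commutation}.

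Second, I identify the group-averaged estimator of \eqref{eq:gae} with the Reynolds projection of the single outer product $\mathbf{x}\mathbf{x}^H$. By the same Schur argument, for any Hermitian $\mathbf{A}$,
\begin{equation*}
\frac{1}{|G|}\sum_{g}\pi_g \mathbf{A}\pi_g^H = \sum_i \frac{\Tr(\mathbf{P}_i\mathbf{A})}{d_i}\,\mathbf{P}_i .
\end{equation*}
Applying this with $\mathbf{A}=\mathbf{x}\mathbf{x}^H$ gives $\hat{\mathbf{R}}_G = \sum_i \hat\lambda_i \mathbf{P}_i$ with $\hat\lambda_i = \|\mathbf{P}_i\mathbf{x}\|^2/d_i$.

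Third, I compute the likelihood restricted to the invariant family:
\begin{equation*}
\log p(\mathbf{x};\{\lambda_i\}) = -\sum_i d_i\log\lambda_i - \sum_i \frac{\|\mathbf{P}_i\mathbf{x}\|^2}{\lambda_i} + \text{const}.
\end{equation*}
This separates across $i$. Setting each partial derivative to zero yields exactly $\hat\lambda_i = \|\mathbf{P}_i\mathbf{x}\|^2/d_i$, so the group-averaged estimator is the MLE. Differentiating twice and taking the expectation (using $\mathbb{E}\|\mathbf{P}_i\mathbf{x}\|^2 = d_i\lambda_i$) gives a diagonal Fisher information with entries $d_i/\lambda_i^2$. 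Hence $\mathrm{CRB}(\lambda_i) = \lambda_i^2/d_i$.

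Finally, I compute the estimator's variance directly. In an orthonormal basis of $V_i$, the coordinates of $\mathbf{P}_i\mathbf{x}$ are $d_i$ i.i.d.\ $\mathcal{CN}(0,\lambda_i)$ variables, so each squared modulus is exponential with mean $\lambda_i$ and variance $\lambda_i^2$. Therefore $\|\mathbf{P}_i\mathbf{x}\|^2$ has variance $d_i\lambda_i^2$, and $\Var(\hat\lambda_i)=\lambda_i^2/d_i$, which equals the bound. The estimator is also unbiased, and independent across blocks since the $V_i$ are orthogonal and $\mathbf{R}$ is block diagonal. This establishes \eqref{eq:converse}. The Abelian specialization $d_i=1$ is immediate, and the $1/d_i$ within-block gain for non-Abelian irreducibles is read off directly.

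The main obstacle is the first step, namely justifying that the invariant covariance family is parametrized by one scalar per irreducible. If an irreducible occurs with multiplicity $m_i>1$, Schur's lemma only gives $\mathbf{R}|_{\text{isotypic}} = \mathbf{C}_i\otimes \mathbf{I}_{d_i}$ with $\mathbf{C}_i$ an arbitrary $m_i\times m_i$ positive matrix. In that case the MLE becomes a small Wishart problem with $d_i$ effective snapshots of $\mathbf{C}_i$, and the eigenvalue variances are no longer exactly $\lambda^2/d_i$. I would therefore state the multiplicity-free assumption explicitly as the content of ``multiplicities tied to the irreducibles''. I would then remark that the general case yields the same $1/d_i$ scaling only for $\Tr$-type functionals of $\mathbf{C}_i$.
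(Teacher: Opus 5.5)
Your proposal is correct and follows the route the paper indicates; the paper defers the full proof to the companion work but sketches exactly this argument. Schur's lemma makes the Reynolds-averaged estimator scalar on each $d_i$-dimensional irreducible block, so $\hat\lambda_i = \|\mathbf{P}_i\mathbf{x}\|^2/d_i$ is both the MLE of the separable likelihood and an average of $d_i$ i.i.d.\ exponential variables, which gives unbiasedness and $\Var(\hat\lambda_i)=\lambda_i^2/d_i=\mathrm{CRB}(\lambda_i)$.

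Your multiplicity-free reading of the hypothesis is what the paper's block-by-block sketch tacitly assumes, and your caveat is sound. When an irreducible recurs with $m_i>1$, the block has the form $\mathbf{C}_i\otimes\mathbf{I}_{d_i}$, and the group average returns a sample covariance of $\mathbf{C}_i$ from $d_i$ effective snapshots, whose eigenvalues are biased, so exact equality with the bound is lost.
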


The non-Abelian part of the Converse Theorem deserves a brief comment because it is the point at which the distinction between Abelian and non-Abelian matched groups has concrete estimation-theoretic consequences. The proof proceeds via Schur's lemma, applied within each isotypic block of the representation: averaging the covariance over an irreducible representation of dimension $d_i$ forces the resulting estimator to be proportional to the $d_i \times d_i$ identity matrix on that block. The effect is a $d_i$-fold within-block variance reduction that Abelian groups, whose irreducible representations are all one-dimensional, structurally cannot deliver. The consequence at the estimator level is that a non-Abelian matched group estimates $r < M$ distinct parameters at variance $\lambda_i^2 / d_i$ per parameter, whereas an Abelian matched group of the same cardinality estimates $M$ parameters at variance $\lambda_k^2$ per parameter. When a signal's algebraic structure genuinely supports a non-Abelian matched group, the non-Abelian choice is therefore strictly superior in estimation; when the algebraic structure is only Abelian, no such additional within-block reduction is available. The following result, which we call Supergroup Dominance, uses this observation to establish a partial order among matched-group candidates related by subgroup containment.

\begin{theorem}[Supergroup Dominance]\label{thm:supergroup}
Let $G$ act on $\{1, \ldots, M\}$ by a unitary representation with $\delta(G, \mathbf{R}) = 0$, and let $H \subset G$ be any proper subgroup. Then
\begin{enumerate}[label=(\roman*), nosep]
\item $\mathrm{MSE}(\hat{\mathbf{R}}_G) \leq \mathrm{MSE}(\hat{\mathbf{R}}_H)$,
\item $\bar{D}_{CV}(G) \leq \bar{D}_{CV}(H)$,
\end{enumerate}
with strict inequality whenever the commutant containment $\mathrm{Comm}(G) \subset \mathrm{Comm}(H)$ is proper and the observation has an absolutely continuous distribution; when $\mathrm{Comm}(G) = \mathrm{Comm}(H)$ the two estimators coincide identically and equality holds.
\end{theorem}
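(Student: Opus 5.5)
The plan is to recast the group-averaged estimator of equation~\eqref{eq:gae} as an orthogonal projection and then use the contraction property of projections. For a finite group $K$ acting unitarily, define the Reynolds operator $\mathcal{P}_K(\mathbf{A}) = |K|^{-1}\sum_{k\in K}\pi_k \mathbf{A}\pi_k^H$ on $\mathbb{C}^{M\times M}$ with the Frobenius inner product. Three standard facts will be checked first.
\begin{enumerate}[label=(\alph*), nosep]
\item $\mathcal{P}_K$ is idempotent and self-adjoint, because each map $\mathbf{A}\mapsto\pi_k\mathbf{A}\pi_k^H$ is unitary and the set of maps is closed under inverses. Hence $\mathcal{P}_K$ is the orthogonal projection onto $\mathrm{Comm}(K)$.
\item $\hat{\mathbf{R}}_K = \mathcal{P}_K(\mathbf{x}\mathbf{x}^H)$.
\item If $H\subset G$, then $\mathcal{P}_G\mathcal{P}_H=\mathcal{P}_H\mathcal{P}_G=\mathcal{P}_G$. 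The first equality holds because summing over $G$ after averaging over $H$ just reindexes the $G$-sum. Equivalently, $\mathrm{Comm}(G)\subseteq\mathrm{Comm}(H)$.
\end{enumerate}

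For part~(i), the hypothesis $\delta(G,\mathbf{R})=0$ means $\mathbf{R}\in\mathrm{Comm}(G)\subseteq\mathrm{Comm}(H)$. Therefore $\mathcal{P}_G\mathbf{R}=\mathcal{P}_H\mathbf{R}=\mathbf{R}$. Write $\mathbf{E}=\mathbf{x}\mathbf{x}^H-\mathbf{R}$. Then $\hat{\mathbf{R}}_H-\mathbf{R}=\mathcal{P}_H\mathbf{E}$, and $\hat{\mathbf{R}}_G-\mathbf{R}=\mathcal{P}_G\mathbf{E}=\mathcal{P}_G(\mathcal{P}_H\mathbf{E})$. Since $\mathcal{P}_G$ is an orthogonal projection, Pythagoras gives the pathwise identity
\begin{equation*}
\|\hat{\mathbf{R}}_H-\mathbf{R}\|_F^2=\|\hat{\mathbf{R}}_G-\mathbf{R}\|_F^2+\|(\mathcal{P}_H-\mathcal{P}_G)\mathbf{E}\|_F^2 .
\end{equation*}
Taking expectations yields (i). If $\mathrm{Comm}(G)=\mathrm{Comm}(H)$, the two projections coincide, so $\hat{\mathbf{R}}_G\equiv\hat{\mathbf{R}}_H$ and equality holds identically.

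For strictness, assume the containment of commutants is proper, and let $\mathcal{W}=\mathrm{Comm}(H)\ominus\mathrm{Comm}(G)\neq\{0\}$. Because $\mathcal{P}_H-\mathcal{P}_G$ is the orthogonal projection onto $\mathcal{W}$ and kills $\mathbf{R}$, the gap equals $\mathbb{E}\|\Pi_{\mathcal{W}}(\mathbf{x}\mathbf{x}^H)\|_F^2$. The set of $\mathbf{x}$ on which $\Pi_{\mathcal{W}}(\mathbf{x}\mathbf{x}^H)=0$ is the zero set of real quadratic polynomials in the real and imaginary parts of $\mathbf{x}$. This zero set is not all of $\mathbb{C}^M$: rank-one matrices $\mathbf{x}\mathbf{x}^H$ span the Hermitian matrices over $\mathbb{R}$, and these span all of $\mathbb{C}^{M\times M}$ over $\mathbb{C}$, so they cannot all be orthogonal to a nonzero subspace $\mathcal{W}$. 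A proper real algebraic variety has Lebesgue measure zero. Hence, under an absolutely continuous distribution, the gap is strictly positive.

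For part~(ii), I will use the form of $\bar{D}_{CV}$ as a held-out discrepancy: the estimator built from one fold is compared with the raw outer product $\mathbf{y}\mathbf{y}^H$ of an independent fold, and the result is averaged. Expand $\mathbb{E}\|\hat{\mathbf{R}}_K-\mathbf{y}\mathbf{y}^H\|_F^2$ around $\mathbf{R}$. The cross term vanishes by independence and because $\mathbb{E}[\mathbf{y}\mathbf{y}^H]=\mathbf{R}$. What remains is $\mathrm{MSE}(\hat{\mathbf{R}}_K)+\mathbb{E}\|\mathbf{y}\mathbf{y}^H-\mathbf{R}\|_F^2$, and the second term does not depend on $K$. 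Part~(ii), including its strict and equality cases, then follows directly from part~(i).

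I expect the main obstacle to be the definitional step for (ii). $D_{CV}$ is only formally defined in Section~\ref{sec:blindmatching}. If it is normalized differently, or compares two group-averaged estimates rather than an estimate against a raw outer product, I would redo the same bias--variance split and verify that the $K$-independent terms still separate out. The algebraic-variety argument for strictness is the only other point that needs care.
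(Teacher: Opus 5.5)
Your proof of (i) is the paper's own argument: the Reynolds projectors nest as $\mathcal{P}_G = \mathcal{P}_G\mathcal{P}_H$, $\mathbf{R}$ is fixed by both, and Pythagoras splits $\|\hat{\mathbf{R}}_H - \mathbf{R}\|_F^2$ pathwise. Your strictness argument (the quartic $\|\Pi_{\mathcal{W}}(\mathbf{x}\mathbf{x}^H)\|_F^2$ is a nonzero polynomial because the matrices $\mathbf{x}\mathbf{x}^H$ span $\mathbb{C}^{M\times M}$, so it vanishes only on a Lebesgue-null set) supplies a step the paper only asserts; you should also assume finite fourth moments of $\mathbf{x}$, as the paper does for (i), so the expectations are finite and a strict inequality means something.

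For (ii), the paper's $D_{CV}$ in~\eqref{eq:dcv} compares two group-averaged estimates formed from independent snapshots, not an estimate against a raw outer product. The relevant identity is therefore $\mathbb{E}\|\hat{\mathbf{R}}_K - \hat{\mathbf{R}}_K'\|_F^2 = 2\,\mathrm{MSE}(\hat{\mathbf{R}}_K)$, with the cross term killed by independence and unbiasedness. This gives a factor of two rather than an additive $K$-independent constant, and (ii), including its strict and equality cases, then follows from (i) just as you planned.
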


\begin{proof}
Since $H \subset G$, every element of $H$ lies in $G$, and the commutant algebras satisfy $\mathrm{Comm}(G) \subseteq \mathrm{Comm}(H)$, so the Reynolds projectors nest: $\mathcal{P}_G = \mathcal{P}_G \circ \mathcal{P}_H$, and hence $\hat{\mathbf{R}}_G = \mathcal{P}_G(\hat{\mathbf{R}}_H)$. From $\delta(G, \mathbf{R}) = 0$ we have $\mathbf{R} \in \mathrm{Comm}(G) \subseteq \mathrm{Comm}(H)$, so $\delta(H, \mathbf{R}) = 0$ as well, both estimators are unbiased, and $\mathcal{P}_G(\mathbf{R}) = \mathbf{R}$. For every realization, decompose $\hat{\mathbf{R}}_H - \mathbf{R}$ orthogonally under the Hilbert--Schmidt inner product into its components inside and outside $\mathrm{Comm}(G)$:
\[
\bigl\|\hat{\mathbf{R}}_H - \mathbf{R}\bigr\|_F^2
\;=\; \bigl\|\hat{\mathbf{R}}_G - \mathbf{R}\bigr\|_F^2
\;+\; \bigl\|(\mathcal{I} - \mathcal{P}_G)\,\hat{\mathbf{R}}_H\bigr\|_F^2 .
\]
Taking expectations gives $\mathrm{MSE}(\hat{\mathbf{R}}_H) = \mathrm{MSE}(\hat{\mathbf{R}}_G) + \mathbb{E}\|(\mathcal{I} - \mathcal{P}_G)\hat{\mathbf{R}}_H\|_F^2$, which proves~(i) with no distributional assumption beyond finite second moments of $\hat{\mathbf{R}}_H$; the inequality is strict exactly when $\hat{\mathbf{R}}_H$ has a nonvanishing component outside $\mathrm{Comm}(G)$ with positive probability, which holds for absolutely continuous observations whenever $\mathrm{Comm}(G) \subset \mathrm{Comm}(H)$ is proper, and fails---with the two estimators coinciding---when the commutants are equal. For~(ii), the cross-validation criterion $D_{CV}$ defined in equation~\eqref{eq:dcv} measures the expected Frobenius distance between independent copies of the normalized estimator, and for unbiased estimators $\mathbb{E}\|\hat{\mathbf{R}} - \hat{\mathbf{R}}'\|_F^2 = 2\,\mathrm{MSE}(\hat{\mathbf{R}})$ by independence, so~(ii) follows from~(i).
\end{proof}

\begin{corollary}[Graph Automorphism Optimality]\label{cor:aut}
For a graph signal on $M$ vertices whose covariance commutes with the full automorphism group $\Aut(\mathcal{G})$, the full $\Aut(\mathcal{G})$ dominates every subgroup $H \subset \Aut(\mathcal{G})$ in both MSE and $D_{CV}$. The estimation gain over the largest Abelian subgroup $H_{\max}$ scales as $|\Aut(\mathcal{G})|/|H_{\max}|$.
\end{corollary}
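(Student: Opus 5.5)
The dominance claim follows by applying Theorem~\ref{thm:supergroup} with $G = \Aut(\mathcal{G})$. The hypothesis that the covariance commutes with every $\pi_g$, $g \in \Aut(\mathcal{G})$, is exactly $\delta(\Aut(\mathcal{G}), \mathbf{R}) = 0$. Any subgroup $H \subset \Aut(\mathcal{G})$ is then either proper, and parts~(i) and~(ii) of the theorem give $\mathrm{MSE}(\hat{\mathbf{R}}_{\Aut(\mathcal{G})}) \leq \mathrm{MSE}(\hat{\mathbf{R}}_H)$ and $\bar{D}_{CV}(\Aut(\mathcal{G})) \leq \bar{D}_{CV}(H)$, or equal, in which case the claim is trivial. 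The first step is therefore a direct specialization. I would also note that the permutation representation of $\Aut(\mathcal{G})$ on the vertex set is unitary, so the theorem's standing assumption holds.

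The quantitative statement about $H_{\max}$ is the substantive part. I would quantify the gap through the Pythagorean identity in the proof of Theorem~\ref{thm:supergroup}:
\[
\mathrm{MSE}(\hat{\mathbf{R}}_{H}) - \mathrm{MSE}(\hat{\mathbf{R}}_{G}) = \mathbb{E}\bigl\|(\mathcal{I}-\mathcal{P}_G)\hat{\mathbf{R}}_H\bigr\|_F^2 .
\]
The ratio of the two MSEs is then governed by the dimensions of the commutants each projector retains. In the flat-profile regime of Theorem~\ref{thm:outer}, the per-entry variance of $\hat{\mathbf{R}}_G$ is $\Var(\hat{R}_{\{e\},ij})/d_{\mathrm{eff}}(G)$. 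Applying the same formula to $H_{\max}$ and to $\Aut(\mathcal{G})$ gives the ratio $d_{\mathrm{eff}}(\Aut(\mathcal{G}))/d_{\mathrm{eff}}(H_{\max})$.

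I would then evaluate both effective orders. Since $H_{\max}$ is Abelian, all of its irreducibles are one-dimensional, so $d_{\mathrm{eff}}(H_{\max}) = |H_{\max}|$. For $\Aut(\mathcal{G})$, the Peter--Weyl accounting in the second consequence of Definition~\ref{def:deff} gives $\sum_i d_i^2 = |\Aut(\mathcal{G})|$. The ratio is therefore $|\Aut(\mathcal{G})|/|H_{\max}|$, which is the claimed scaling.

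The main obstacle is that the $d_{\mathrm{eff}} = |G|$ identification assumes a generic, free orbit and a flat energy profile. For a permutation action on $M$ vertices, the commutant dimension can saturate well below $|G|$. For concentrated profiles the realized factor $g$ is only bounded above by $d_{\mathrm{eff}}$. I would therefore state the scaling as the flat-profile, free-action ceiling, reading ``scales as'' as this upper-bound ratio. If time allows, I would refine it to the ratio of commutant codimensions $\dim \mathrm{Comm}(H_{\max})/\dim \mathrm{Comm}(\Aut(\mathcal{G}))$, and check that this reduces to the cardinality ratio when both actions are free.
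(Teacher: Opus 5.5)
Your dominance step is what the paper intends. The corollary is stated right after Theorem~\ref{thm:supergroup} with no separate proof, and setting $G = \Aut(\mathcal{G})$ with $\delta(\Aut(\mathcal{G}),\mathbf{R}) = 0$ gives the MSE and $D_{CV}$ dominance over every subgroup.

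The gap is in the scaling half, at the step $d_{\mathrm{eff}}(\Aut(\mathcal{G})) = \sum_i d_i^2 = |\Aut(\mathcal{G})|$.
\begin{itemize}
\item The identity $\sum_i d_i^2 = |G|$ belongs to the regular representation on $\mathbb{C}^{|G|}$. Here $\Aut(\mathcal{G})$ instead acts by permuting $M$ vertices.
\item The outer products $(\pi_g\mathbf{x})(\pi_g\mathbf{x})^H$ all lie in the $M^2$-dimensional space of $M\times M$ matrices. So $d_{\mathrm{eff}} \le M^2$ under Definition~\ref{def:deff}, however large $|\Aut(\mathcal{G})|$ is.
\item The realized gain is governed by the commutant, as the paper itself shows in eq.~\eqref{eq:gl-var} and Table~\ref{tab:gl-nonab}. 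For a white matched signal the Frobenius gain is $M^2/\dim\mathcal{C}(\rho)$. For a permutation action, $\dim\mathcal{C}(\rho)$ is the number of orbits of $G$ on ordered vertex pairs.
\item Entrywise, $(\hat{\mathbf{R}}_G)_{ij}$ is the average of $\hat{R}_{kl}$ over the orbit of $(i,j)$. The per-entry factor is therefore that orbit's size, which is never $|G|$ once $|G| > M^2$.
\item Likewise, $d_{\mathrm{eff}}(H_{\max}) = |H_{\max}|$ requires $H_{\max}$ to act freely.
\end{itemize}

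Your fallback does not repair this, for two reasons. A free action of $\Aut(\mathcal{G})$ on the vertices forces $|\Aut(\mathcal{G})|$ to divide $M$. That excludes exactly the highly symmetric graphs the corollary targets (Remark~\ref{rem:supergroup_scope}). Also, a ratio of two ceilings is not a ceiling on the ratio.

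Here is a concrete failure. For $K_4$, $\Aut = S_4$ has commutant $\mathrm{span}\{\mathbf{I},\mathbf{J}\}$, so its gain is $16/2 = 8$. The largest Abelian subgroups have order $4$: $\mathbb{Z}_4$ acting regularly gives gain $4$, and $\langle(12),(34)\rangle$ has six pair-orbits, giving $16/6$. The true ratio is therefore $2$ or $3$, not $24/4 = 6$. For $K_M$ the ratio is at most $M^2/2$, while $|S_M|/|H_{\max}|$ grows superexponentially.

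So the cardinality scaling cannot be proved as stated. The paper offers no argument for it beyond the $\sum_i d_i^2$ accounting you reproduced, and its own Table~\ref{tab:gl-nonab} contradicts that accounting.

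What you can prove is your proposed refinement, and it should be the main claim rather than an afterthought. In the white case $\mathbf{R} \propto \mathbf{I}$, the gain of $\Aut(\mathcal{G})$ over $H_{\max}$ is $\dim\mathrm{Comm}(H_{\max})/\dim\mathrm{Comm}(\Aut(\mathcal{G}))$. This is a ratio of commutant dimensions, not codimensions. It follows because the white Wishart fluctuation has isotropic second moments, so its orthogonal Reynolds projection onto a commutant of dimension $D$ has expected squared norm $D/L$. This ratio reduces to $|\Aut(\mathcal{G})|/|H_{\max}|$ only when $\Aut(\mathcal{G})$ acts freely on the vertex set.
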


\begin{remark}\label{rem:supergroup_scope}
Theorem~\ref{thm:supergroup} applies whenever $H \subset G$ (subgroup containment). It does \emph{not} apply to groups of the same order that are not subgroup-related, such as $D_4$ and $\mathbb{Z}_8$ at $M = 8$. In that case the Abelian group often dominates empirically at $|G| = M$ because its one-dimensional irreps give decoupled estimation, while the non-Abelian irreps of dimension $d_i > 1$ introduce coupled estimation within blocks. The practical consequence is that non-Abelian groups are preferred when their order substantially exceeds $M$ (as for $\Aut(\mathcal{G})$ on a highly symmetric graph), while Abelian groups are preferred in the common $|G| = M$ regime of cyclic measurement lattices.
\end{remark}

The results above treat the outer-product statistic, which is the case of greatest practical relevance for spectral estimation and second-order signal processing. The corresponding result for a general statistic is the \emph{General Algebraic Averaging Theorem}, which asserts that the variance-reduction phenomenon carries over beyond the outer product under appropriate structural assumptions on the statistic and the noise. The current proof status of the general statement is partial, with several important cases established rigorously and the remaining cases conditional on a Clebsch-Gordan decomposition of the cross-moment structure of the orbit. We present the full statement, then summarize the established and conditional cases, and then display the connection between the classical sample-moment formulas and their GAAT counterparts in tabular form.

\begin{theorem}[General Algebraic Averaging, GAAT]\label{thm:gaat}
Let $f: \mathbb{C}^M \to \mathcal{V}$ be a $G$-compatible statistic (one whose value at $\pi_g \mathbf{x}$ is determined by $g$ and $f(\mathbf{x})$ through a representation of $G$ on $\mathcal{V}$). Under signal equivariance and noise ergodicity, the group-averaged estimator
\begin{equation}\label{eq:gaat}
\hat{\theta}_G \;=\; \frac{1}{|G|}\sum_{g \in G} f(\pi_g \mathbf{x})
\end{equation}
is unbiased for the $G$-invariant projection of $\theta_f = \mathbb{E}[f(\mathbf{x})]$, and $\Var(\hat{\theta}_G) \leq C(f, \mathbf{s}) / d_{\mathrm{eff}}(G, f)$. The law of large numbers, $\Var \propto 1/N$ for the trivial group with $N$ observations, is the degenerate case.
\end{theorem}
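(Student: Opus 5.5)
The plan has two parts: unbiasedness, then a variance bound of the form $C(f,\mathbf{s})/d_{\mathrm{eff}}$. The LLN statement follows by specializing to the trivial group. The structural input is $G$-compatibility. There is a representation $\rho: G \to \mathrm{GL}(\mathcal{V})$ with $f(\pi_g\mathbf{x}) = \rho_g f(\mathbf{x})$, which I will take unitary after the usual averaging of an inner product on $\mathcal{V}$. With this, the estimator \eqref{eq:gaat} becomes
\[
\hat\theta_G = \mathcal{P}_\rho f(\mathbf{x}), \qquad \mathcal{P}_\rho = \frac{1}{|G|}\sum_{g\in G}\rho_g .
\]
Here $\mathcal{P}_\rho$ is the Reynolds projector onto the $\rho$-isotypic structure relevant to $\theta_f$. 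Since $\mathcal{P}_\rho$ is linear, $\mathbb{E}[\hat\theta_G] = \mathcal{P}_\rho\,\theta_f$, which is exactly the $G$-invariant projection of $\theta_f$. This settles unbiasedness with no hypothesis beyond finite first moments. Signal equivariance ($\pi_g\mathbf{s}$ equivalent to $\mathbf{s}$) and noise ergodicity ($\pi_g\mathbf{n} \overset{d}{=} \mathbf{n}$) matter only for identifying $\mathcal{P}_\rho\theta_f$ with $\theta_f$ when $G$ is matched.

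For the variance I will write
\[
\Var(\hat\theta_G) = \frac{1}{|G|^2}\sum_{g,h}\mathrm{Cov}\bigl(f(\pi_g\mathbf{x}), f(\pi_h\mathbf{x})\bigr).
\]
Noise ergodicity makes the cross-covariance kernel depend only on $g^{-1}h$ up to the action of $\rho$, so the double sum is a convolution on $G$. I then decompose $\mathcal{V}$ into isotypic components $\mathcal{V} = \bigoplus_i \mathcal{V}_i$ under $\rho$ and apply Schur's lemma, as in the proof sketch of Theorem~\ref{thm:converse}. The averaged covariance is block-diagonal, each block is a scalar multiple of the identity on the $d_i$-dimensional irreducible, and the orbit span has dimension $d_{\mathrm{eff}}(G,f)$ by Definition~\ref{def:deff}.

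The bound would then follow the pattern of Theorem~\ref{thm:outer}. The total covariance energy, bounded by $C(f,\mathbf{s}) := \sup_g \Var f(\pi_g\mathbf{x})$ (its dependence on $\mathbf{s}$ is through the signal part of the cross-moments), is spread over $d_{\mathrm{eff}}$ orthogonal coordinates of the orbit span. The per-coordinate variance is then at most the total divided by the participation ratio. Finally, a Cauchy--Schwarz step bounds the participation ratio below by $1$ and reduces the general case to $C/d_{\mathrm{eff}}$ with $C$ absorbing the spectral nonflatness. For the trivial group with $N$ independent blocks, $\mathcal{P}_\rho = I$ and the $L$-fold extension \eqref{eq:gae_L} gives $\Var = \Var f(\mathbf{x})/N$, which is the LLN as in Theorem~\ref{thm:tet}.

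The main obstacle is the cross-covariance step for nonlinear $f$. There, $\mathrm{Cov}(f(\pi_g\mathbf{x}), f(\pi_h\mathbf{x}))$ involves moments of order $2\deg f$. Writing these as a sum over irreducibles requires a Clebsch--Gordan decomposition of $\rho\otimes\bar\rho$, together with the claim that the off-diagonal orbit correlations are controlled within each irreducible channel. I would first establish the case where $f$ is a polynomial of degree at most two and $\mathbf{n}$ is circular Gaussian. In that case Isserlis' theorem makes the fourth-moment tensor explicit and Theorem~\ref{thm:outer} covers it directly. I would then treat a general $f$ through its polynomial (Hermite) expansion, bounding each degree using the Clebsch--Gordan multiplicities. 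For non-polynomial $f$ this final step is where the argument may only hold conditionally.
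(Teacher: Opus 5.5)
\textbf{There is a genuine gap: the step that is supposed to produce the factor $1/d_{\mathrm{eff}}$ does not go through.}

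Your own Reynolds identity shows why. Each orbit term $f(\pi_g\mathbf{x})=\rho_g f(\mathbf{x})$ is a fixed linear image of a single random vector. The group sum is therefore not an average of $d_{\mathrm{eff}}$ independent coordinates. For any $G$-compatible $f$ and any distribution, one has exactly $\mathrm{Cov}(\hat\theta_G)=\mathcal{P}_\rho\Sigma_f\mathcal{P}_\rho^H$, with $\Sigma_f=\mathrm{Cov}(f(\mathbf{x}))$. The variance reduction is decided by how $\Sigma_f$ sits relative to the fixed subspace $\mathcal{V}^G=\mathrm{range}\,\mathcal{P}_\rho$, not by the dimension of the orbit span. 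The Schur block structure you place on the nontrivial $\rho$-isotypic components is annihilated by $\mathcal{P}_\rho$ and never reaches $\Var(\hat\theta_G)$. Also, for multiplicity $m_i>1$, Schur gives $A_i\otimes\mathbf{I}_{d_i}$, not a scalar.

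Your Cauchy--Schwarz step cannot supply the missing factor. A participation ratio bounded below by $1$ gives only $\Var(\hat\theta_G)\leq C$, which is the Rao--Blackwell statement, not $C/d_{\mathrm{eff}}$. With $C$ the single-evaluation variance (your definition, and that of $C(f)$ in Theorem~\ref{thm:source}), the claimed bound is false away from flat profiles. By Theorem~\ref{thm:outer} the realized factor is the participation ratio~\eqref{eq:g_realized}, which is strictly below $d_{\mathrm{eff}}$ unless the spectrum is flat. In the extreme case $\mathbf{R}=\mathbf{u}\mathbf{u}^H$, with $\mathbf{u}$ a DFT column and $G=\mathbb{Z}_M$ commuting with $\mathbf{R}$, every $\pi_g\mathbf{x}$ is a phase multiple of $\mathbf{x}=z\mathbf{u}$. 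Then $\hat{\mathbf{R}}_G=\mathbf{x}\mathbf{x}^H$ identically and nothing is gained, although the generic $d_{\mathrm{eff}}$ is $M$. Letting $C$ ``absorb the spectral nonflatness'' amounts to setting $C=C_{\mathrm{single}}\,d_{\mathrm{eff}}/g$. That turns the inequality into a relabeling of the variance rather than a bound on it.

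Your unbiasedness argument and the trivial-group LLN corner are correct and agree with the paper. Your conditional verdict for higher degrees also matches its status list. But the paper does not prove a general $1/d_{\mathrm{eff}}$ bound either. It establishes the variance claim only case by case:
\begin{itemize}
\item the outer product, through Theorem~\ref{thm:outer}'s participation-ratio law;
\item $S_M$-invariant scalars, where $d_{\mathrm{eff}}=1$ makes the bound trivial;
\item Rao--Blackwell non-increase of variance.
\end{itemize}
Its inequality is consistent with Theorem~\ref{thm:outer} only because $C(f,\mathbf{s})$ is allowed to depend on the signal.

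To get a real theorem, you need an explicit isotropy hypothesis on $\Sigma_f$; this is the flat case, and it is what ``noise ergodicity'' would have to mean. Then you compute $\Tr(\mathcal{P}_\rho\Sigma_f)$. Even then, the exact total factor is $\dim\mathcal{V}/\dim\mathcal{V}^G$. For the outer product this is the $M^2/\dim\mathcal{C}(\rho)$ of Section~\ref{sec:gl-surface}, which can be non-integer ($4.5$ in Table~\ref{tab:gl-nonab}). It is therefore not in general the orbit-span dimension of Definition~\ref{def:deff}.

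This also relocates your ``main obstacle.'' The cross-covariances are explicitly $\rho_g\Sigma_f\rho_h^H$, with no ergodicity needed. The work left for Isserlis and Clebsch--Gordan is computing $\Sigma_f$ from moments of order $2\deg f$ and extracting its trivial-isotypic part under $\rho$. That is a Clebsch--Gordan problem when $\rho$ is a tensor power of $\pi$ and $\bar\pi$.
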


The unbiasedness assertion follows from linearity of expectation and is unconditional. The variance assertion is established for several important cases and conditional for the general case. Specifically:
\begin{itemize}[nosep]
\item \textbf{Outer product ($k = 2$):} proved (Theorem~\ref{thm:outer}). The argument uses Schur orthogonality at order two together with the Isserlis identity for fourth moments.
\item \textbf{Scalar $S_M$-symmetric statistics ($d_{\mathrm{eff}} = 1$):} proved. The orbit of $f$ under $S_M$ collapses to a single value, recovering the law-of-large-numbers scaling.
\item \textbf{Sample mean and second central moment under rank promotion:} proved as instances of the outer-product case.
\item \textbf{Rao--Blackwell optimality:} proved. The group-averaged estimator is the conditional expectation of $f$ given the $G$-invariant sigma-field, and Rao--Blackwell guarantees that this conditioning never increases variance.
\item \textbf{General degree-$k$ statistics ($k \geq 3$):} \emph{conditional}. The proof reduces to a Clebsch--Gordan decomposition of the cross-moment structure of the orbit. For Gaussian noise the Isserlis identity reduces $2k$-th moments to second moments, and Schur orthogonality handles the group sum; for non-Gaussian noise additional moment conditions are required.
\end{itemize}

The four-moment case has been verified by Monte Carlo experiment to four-digit precision across five representative group types and eight $(G, L)$ configurations, giving numerical support to the GAAT statement in the regime of practical interest. Table~\ref{tab:moments} displays the parallel between the classical sample-moment formulas and their GAAT counterparts for the first four moments. The classical column averages over $N$ independent scalar observations in the familiar way; the GAAT column averages a single zero-indexed component of the signal vector over the $|G|$ images of one vector observation under the group action. The two columns produce estimators with matched first-order behaviour and variance scalings proportional to $1/N$ in the classical case versus $1/d_{\mathrm{eff}}$ in the algebraic case. When both axes are exploited simultaneously on $L$ independent rank-promoted observations, the two factors combine multiplicatively and the variance scales as $\sigma^2/(d_{\mathrm{eff}} \cdot L)$, which is the $(G, L)$ continuum at work.

\begin{table*}[t]
\centering
\caption{The first four sample moments in classical and GAAT formulations. The classical column averages over $N$ independent scalar observations; the GAAT column averages the zeroth component of $\pi_g \mathbf{x}$ over the $|G|$ group elements of one vector observation. The two columns produce estimators with matched first-order behaviour and variance scalings $\propto 1/N$ versus $\propto 1/|G|$ respectively. The bottom rows give the variance scaling and the $(G, L)$ continuum scaling that combines both axes.}
\label{tab:moments}
\renewcommand{\arraystretch}{1.5}
\small
\begin{tabular}{@{}lll@{}}
\toprule
\textbf{Statistic} & \textbf{Classical (discrete, $N$ obs.)} & \textbf{GAAT (discrete, single obs., group $G$)} \\
\midrule
Sample mean (1st)
  & $\bar{x} = \frac{1}{N} \sum_{i=1}^N x_i$
  & $\hat{\mu}_G = \frac{1}{|G|} \sum_{g \in G} [\pi_g \mathbf{x}]_0$ \\
Sample variance (2nd central)
  & $s^2 = \frac{1}{N-1} \sum_{i=1}^N (x_i - \bar{x})^2$
  & $\hat{\sigma}^2_G = \frac{1}{|G|} \sum_{g \in G} \bigl([\pi_g \mathbf{x}]_0 - \hat{\mu}_G\bigr)^2$ \\
Sample skewness (3rd standardized)
  & $g_1 = \frac{1}{N} \sum_{i=1}^N \bigl(\frac{x_i - \bar{x}}{s}\bigr)^3$
  & $\hat{\gamma}_G = \frac{1}{|G|} \sum_{g \in G} \bigl(\frac{[\pi_g \mathbf{x}]_0 - \hat{\mu}_G}{\hat{\sigma}_G}\bigr)^3$ \\
Sample kurtosis (4th standardized)
  & $g_2 = \frac{1}{N} \sum_{i=1}^N \bigl(\frac{x_i - \bar{x}}{s}\bigr)^4$
  & $\hat{k}_G = \frac{1}{|G|} \sum_{g \in G} \bigl(\frac{[\pi_g \mathbf{x}]_0 - \hat{\mu}_G}{\hat{\sigma}_G}\bigr)^4$ \\
\midrule
Variance of sample mean
  & $\sigma^2 / N$
  & $\sigma^2 / d_{\mathrm{eff}}$ \\
$(G, L)$ continuum (mean)
  & \multicolumn{2}{c}{$\sigma^2 / (d_{\mathrm{eff}} \cdot L)$} \\
\bottomrule
\end{tabular}
\end{table*}

A brief interpretive remark is in order before leaving the GAAT material. For a transitive group such as $\mathbb{Z}_M$ acting on $\mathbb{C}^M$ by cyclic shifts, the orbit of the zeroth component visits every component of the signal vector exactly once. Consequently the value $[\pi_g \mathbf{x}]_0$ ranges over all of $x_0, x_1, \ldots, x_{M-1}$ as $g$ ranges over $G$, and the GAAT mean given in the right-hand column of Table~\ref{tab:moments} is numerically identical to the classical mean of those $M$ values. The two estimators return the same number for the same input. What distinguishes them is not their point value but their variance scaling, and it is at the variance level that the substantive content of GAAT lies: under the signal equivariance and noise ergodicity assumptions, the group-averaged moment from a single rank-promoted observation has the same scaling exponent in $|G|$ that the classical moment has in $L$, and the two scalings combine on the $(G, L)$ continuum. The rank-promotion viewpoint therefore does not alter the point estimate; it alters the accounting of how that estimate was produced, and the accounting is what determines how the statistic's variance responds to additional data.

\subsection{Temporal Algebraic Diversity as the entry case of rank promotion}
\label{sec:tad}

Among the rank-promotion constructions, the one that arises most naturally from a stream of scalar samples is the circular partition: a scalar time series of length $N = M \cdot L$ is segmented into $L$ successive blocks of length $M$, and each block is regarded as a vector observation with the cyclic group $\mathbb{Z}_M$ acting on its components by circular shift. We refer to this as \emph{Temporal Algebraic Diversity} (TAD), and it occupies a privileged position among the rank-promotion forms available in the framework. The reason for the privileged status is representation-theoretic: the circular stratification endows the within-block index with both an Abelian symmetry (which admits PASE Level-1 antithetic pairing and Level-3 coset representatives) and a full conjugacy structure (which admits PASE Level-2 conjugacy-diverse sampling). TAD is the unique rank-promotion case that admits all three PASE structural coding levels simultaneously, and this fact explains its frequent appearance in the applications of later sections.

A concrete quantitative illustration of the rank-promotion speedup is furnished by a classical problem: the Monte Carlo estimation of the transcendental constant $\pi$ via the quarter-circle integral $\pi = \int_0^1 4\sqrt{1-u^2}\,du$. The standard Monte Carlo procedure draws $N$ independent uniform samples $U_1, \ldots, U_N$ on the interval $[0, 1]$ and averages the integrand values $4\sqrt{1 - U_i^2}$ to form an estimate. This is a rank-zero (scalar) estimator whose matched group is trivial, and its convergence rate is the familiar $O(1/\sqrt{N})$ law-of-large-numbers rate. Under rank promotion, the unit interval is partitioned into $M$ equal strata $S_k = [k/M, (k+1)/M)$ for $k = 0, 1, \ldots, M-1$, and a single sample is drawn from each stratum, producing an $M$-dimensional structured observation whose strata are in one-to-one correspondence with the cosets of $\mathbb{Z}_M$ acting on $[0, 1]$ by cyclic rotation. One sample per coset is the PASE Level-3 construction (coset representatives), and the resulting integral estimator is the group-averaged estimator under $\mathbb{Z}_M$ operating on the stratified observation. In an illustrative experiment comparing the two procedures on the same integrand, rank promotion with PASE Level-3 structural coding reached six correct digits of $\pi$ in 140 iterations, while standard Monte Carlo on the same target required approximately 29{,}000 iterations to reach the same criterion; this corresponds to a speedup of approximately $207\times$ on the six-digit convergence criterion.

A methodological clarification is in order because the specific problem is rather special: the purpose of the experiment is to illustrate the rank-promotion mechanism rather than to compete with specialized $\pi$-computation methods. Dedicated algorithms such as the Chudnovsky series, the BBP formula, and the arithmetic-geometric-mean iteration compute far more digits of $\pi$ at far higher speed and are outside the scope of this comparison. What the experiment demonstrates is that the rank-promotion operation, which converts a scalar Monte Carlo target into a structured array admitting a nontrivial matched group, produces a tangible and measurable speedup whose origin is algebraic rather than arithmetic, and that the speedup follows from the variance-reduction accounting of the $(G, L)$ continuum exactly as Theorem~\ref{thm:gaat} predicts.

A second concrete example, from the clinical domain, brings the rank-promotion idea closer to routine practice and clarifies its practical value. A patient's systolic blood pressure is classically measured by taking $N = 10$ independent readings across a period of two weeks and averaging them; the variance of the resulting estimate scales as $\sigma^2/10$ by the law of large numbers. Under rank promotion with the cyclic group $\mathbb{Z}_{10}$ applied to a single cycle of 10 temporally consecutive measurements taken on a single clinic visit, the group-averaged mean achieves the same variance reduction from what is, in the rank-promoted representation, effectively a single structured observation of a 10-vector. The $(G, L)$ continuum of the previous section tells us that all four budget allocations $(d_{\mathrm{eff}}, L) \in \{(1, 10), (2, 5), (5, 2), (10, 1)\}$ produce estimators of the same variance, so the clinician has complete freedom to trade physical repetition for algebraic structure according to operational convenience. Fewer clinic visits at the same diagnostic quality is the practical consequence, once the rank-promoted view of the scalar time series is adopted.

\subsection{Ramifications for the law of large numbers}

The results of this section admit a unified interpretation that is worth stating explicitly, because it reorients the reader's view of one of the most frequently invoked results in classical statistics. The law of large numbers is the universal statement that the sample mean of $N$ independent observations of a random variable converges to the population mean at rate $1/\sqrt{N}$, and generations of signal processing and estimation theory have treated it as a first principle from which other variance-reduction results derive. The rank-promotion viewpoint developed here invites a slightly different reading. Every estimator of the averaging form $\hat{\theta}_N = N^{-1} \sum_{i=1}^N f(x_i)$ has an algebraic counterpart of the form $\hat{\theta}_G = |G|^{-1} \sum_{g \in G} f(\pi_g \mathbf{x})$, which, under rank promotion of the scalar sequence into vector form together with the signal-equivariance and noise-ergodicity conditions, achieves variance $\sigma^2 / |G|$ from a single structured observation. The scaling exponents of the classical $N$-observation average and the algebraic $|G|$-element orbit average coincide, and the two axes combine multiplicatively into the $(G, L)$ continuum whose scaling has appeared throughout this section. The net implication is not that the law of large numbers is in any way incorrect, but that it is the degenerate case (corresponding to the trivial group) of a broader algebraic averaging principle. Where a signal admits nontrivial algebraic structure, the group-order axis $|G|$ can substitute for, or can supplement, the sample-count axis $L$, and the practitioner recovers the design freedom illustrated by the blood-pressure example of the preceding subsection.

\subsection{The PASE hierarchy and the structural coding rate conjecture}
\label{sec:pase_rate}

Once a rank-promoted representation has been established and a matched group $G$ acting on it has been identified, a secondary question arises that is distinct from the matched-group identification itself: within the $|G|$ elements of the group, is every choice of $|G|$-fold averaging equally informative, or are some choices algebraically more efficient than others? The answer, developed in the companion paper and summarized here for completeness, is that the choice of which group elements to include in the average is itself a meaningful design decision. The Permutation-Adapted Signal Estimation (PASE) hierarchy~\cite{thornton2026ad} organizes three structural coding strategies for sampling the group orbit, ordered by increasing sophistication in the sense that each strategy strictly refines the preceding one by using representation-theoretic structure that the preceding one ignored. We list the three levels below, after which we state the structural coding rate conjecture that governs how many orbit elements a given signal's structural content demands.
\begin{itemize}[nosep]
\item \emph{Level~1 (antithetic pairs)}: for each selected permutation $\sigma$, the inverse permutation $\sigma^{-1}$ is also included. Averaging the pair projects out the antisymmetric noise component of the statistic, cutting the number of required group elements to $n^* \leq \lceil M/2 \rceil$ for structurally simple signals.
\item \emph{Level~2 (conjugacy-diverse plus antithetic)}: permutations are selected from distinct cycle types of $S_M$ (the conjugacy classes of $S_M$), and each such permutation is paired with its inverse. Different cycle types produce algebraically distinct views of the signal in the sense that they activate different irreducible representations of the acting group, so the resulting orbit sample is representation-theoretically rich.
\item \emph{Level~3 (coset representatives)}: one permutation is selected from each left coset of the matched group $G$ in $S_M$. The coset representatives are algebraically independent with respect to $G$, and the resulting orbit sample gives the smallest $n^*$ of the three levels.
\end{itemize}
A cautionary remark is in order, because a natural diversity metric fails in a way that is informative about what PASE actually measures. If one were to rank permutations by Hamming distance on the symmetric group (the number of index positions at which two permutations differ), one would conclude that cyclic shifts are maximally diverse because their pairwise Hamming distance is maximal. In fact, cyclic shifts under the cyclic group produce the \emph{worst} spectral estimation among tested strategies, because Hamming distance measures positional displacement rather than representation-theoretic independence. The PASE hierarchy ranks strategies by the algebraic quantity that actually governs variance, which is the span of the statistic under the group orbit, and that quantity can be largest or smallest along directions that are unrelated to positional separation of the permutations.

The number of orbit elements required to saturate the variance-reduction bound scales with the structural content of the signal itself, measured through a quantity that plays the role of a Shannon-style entropy of the signal's spectral distribution. We call this quantity the \emph{structural entropy}:
\begin{equation}\label{eq:hstruct}
H_{\mathrm{struct}}(\mathbf{R}) \;=\; -\sum_k p_k \log_2 p_k, \qquad p_k = \lambda_k(\mathbf{R}) / \Tr(\mathbf{R}).
\end{equation}
$H_{\mathrm{struct}}$ quantifies how broadly the signal's variance is spread across its spectral modes: concentrated on a few modes gives low $H_{\mathrm{struct}}$, spread uniformly across all modes gives the maximum $H_{\mathrm{struct}} = \log_2 M$. Experimental validation across eight covariance models, covering AR and MA processes, sparse spikes, sinusoids in noise, and graph-Laplacian signals, produces a regularity in how many orbit elements the rank-promoted estimator requires to reach its variance floor. We express the regularity as a conjecture, because a formal proof is not yet available although the empirical support is substantial.

\begin{conjecture}[Structural Coding Rate]\label{conj:rate}
In the structured regime $H_{\mathrm{struct}} < \frac{1}{2}\log_2 M$, the optimal number of permutations satisfies
\begin{equation}
n^*(\mathbf{R}) \;\approx\; \lceil 2^{H_{\mathrm{struct}}(\mathbf{R})} \rceil,
\end{equation}
with $n^*/2^{H_{\mathrm{struct}}} \in [0.74, 1.20]$ confirmed empirically. In the diffuse regime $H_{\mathrm{struct}} \to \log_2 M$, $n^* = \Theta(M)$.
\end{conjecture}

Conjecture~\ref{conj:rate} has the interpretation of a structural analog of Shannon's source coding theorem, and the analogy is close enough to state explicitly. In Shannon's setting, a source with entropy $H$ bits per symbol can be compressed to $H$ bits per symbol and not fewer, and compression is possible when there is structure in the source distribution (non-uniformity of the symbol probabilities) and not when the source is uniform. In the structural setting, a signal with structural entropy $H_{\mathrm{struct}}$ bits of internal organization can be estimated from $2^{H_{\mathrm{struct}}}$ group actions and not fewer, and orbit-sampling compression is possible when there is structure in the spectrum (non-uniformity of the spectral distribution) and not when the spectrum is flat. The empirical support for the conjecture covers a broad range of covariance models with the match ratio $n^*/2^{H_{\mathrm{struct}}}$ staying within $[0.74, 1.20]$, but a formal proof is not yet available as of this writing. A proof is expected to follow from a Fisher-information argument for the statistic's variance as a function of orbit-sampling density, together with the Rényi-2 entropy interpretation of the structural capacity $\kappa$ that appears in Section~\ref{sec:fourthms} below, but the details have not yet been worked out.

\subsection{Algebraic waveform diversity}
\label{sec:waveform}

All of the discussion so far has treated rank promotion as an operation performed on received data: a scalar stream is reorganized into a structured array, a matched group is identified, and the group-averaged estimator extracts information from the structured observation. A dual viewpoint exists in which the rank-promotion operation is moved from the receiver to the transmitter: rather than arranging the received data into a group-structured form at analysis time, one arranges the transmitted waveform itself into a group-structured form at design time. Given a target or channel that is hypothesized to possess an algebraic symmetry group $G$, a transmit waveform sequence chosen as an orbit of a seed waveform under $G$ produces a response whose single-snapshot group-averaged estimator is automatically full-rank, because the structure that the receiver would ordinarily have to discover in the rank-promoted received data has been imposed at transmission. We refer to this active counterpart of rank promotion as \emph{algebraic waveform diversity}. Several practical examples illustrate the principle and motivate the development we leave to forthcoming companion papers.

In single-pulse ultrasonic non-destructive testing, one chooses a pulse-shape set whose elements are matched to the expected material dispersion classes of the target, so that a single pulse-echo return from a layered structure (for example, a pipe wall with potential internal corrosion) admits separation of the overlapping echoes by cumulative dispersion coefficient rather than requiring a temporally separated sequence of pulses. In single-pulse biomedical ultrasonography, the pulse-shape set is matched to expected tissue attenuation and dispersion classes, so that a single A-line return supports tissue-class discrimination from a single insonification rather than from many. In multi-frequency imaging, a frequency set matched to expected material-class structure allows classification from a simultaneous multi-frequency insonification. In single-shot quantum state tomography, a measurement-basis set matched to the unitary group action expected to diagonalize the state recovers spectral information from a single quantum measurement, a topic we have developed in detail in a separate companion paper on quantum algebraic diversity~\cite{thornton2026quantum}. In each case the common structure is the same: the group-orbit organization is imposed at the waveform or measurement-design stage rather than discovered after the fact in the received signal.

\section{The Eigentensor Hierarchy}
\label{sec:eigentensor}

The rank-promotion construction of the previous section treated a single rank-one observation vector with a single matched group acting on its components. Many problems of practical interest exhibit symmetry at multiple levels simultaneously: a sequence of vector measurements may share a within-observation symmetry that governs each element individually, and at the same time an across-observation symmetry that governs the relationship among the elements of the sequence. The natural extension of rank promotion to this multi-level setting is the \emph{eigentensor hierarchy}, in which a sequence of group actions is applied in succession and the resulting estimator carries information at each level of the resulting tensor decomposition. We develop the construction here, describe the compound group structure that arises when two actions are composed, and present a worked example drawn from multi-frequency ultrasonic non-destructive testing that makes the estimator's operational content concrete.

Let the observation be a sequence $\mathbf{x}_1, \mathbf{x}_2, \ldots, \mathbf{x}_K$ of vector measurements that share a common within-observation symmetry group $G_1$ (acting on the components of each $\mathbf{x}_k$) and an across-observation symmetry group $G_2$ (acting on the indexing of the sequence of measurements). The combined tensor observation admits a compound group action constructed from $G_1$ and $G_2$, and the form of the compound action depends on how the two groups interact. When the two group actions commute, in the sense that the within-observation action at each index $k$ does not depend on the across-observation permutation of the indices, the compound action is the direct product $G_1 \times G_2$. When the two actions do not commute, a more careful construction is required: the within-observation action at each index $k$ depends on how the index was reached by the across-observation permutation, and the appropriate compound is the \emph{wreath product} $G_1 \wr G_2$, an extension of the direct product that allows the $G_1$-action on each index to depend on the $G_2$-action on the indexing. The wreath product is the standard algebraic tool for hierarchical symmetry and is the natural object that governs multi-level rank promotion in non-commuting cases.

The group-averaged estimator, evaluated at the compound tensor level, produces an \emph{eigentensor} whose decomposition captures both the within-observation spectral structure (indexed by irreducible representations of $G_1$) and the across-observation structure (indexed by irreducible representations of $G_2$). Eigentensors form a hierarchy indexed by the number of nested group actions that have been composed. Level-1 eigentensors are the basic AD eigenstructure of a single vector observation under a single matched group, which is the object of Section~\ref{sec:rankpromote}. Level-2 eigentensors capture compound spatial-by-frequency structure in multi-frequency sensor data of the kind we treat as the worked example below. Higher levels in principle capture nested symmetry structure in hierarchical data acquisition, although practical applications to date have not gone beyond Level 2. We proceed with the Level-2 ultrasonic example.

Level-2 eigentensor analysis has been demonstrated on multi-frequency ultrasonic non-destructive testing (NDT) data for pipe wall inspection, and the construction there illustrates both the spatial-frequency compound symmetry and the practical payoff of extracting information at multiple symmetry levels simultaneously. The scenario is as follows. A single test-point measurement of a steel pipe wall is performed using four ultrasonic probe frequencies (1, 2.5, 5, and 10~MHz), the four being obtained simultaneously or in rapid sequence through a single multi-frequency transducer. At each frequency, a Level-1 AD estimator computes the spectral concentration $\psi$ of the pulse-echo return under the cyclic group acting on the transducer array; this produces a four-element Level-1 characterization vector $\boldsymbol{\psi} = (\psi_1, \psi_{2.5}, \psi_5, \psi_{10})$ per test point. A Level-2 AD estimator then applies the symmetric group $S_4$ to this characterization vector as its across-observation action and extracts the Level-2 spectral concentration $\psi^{(2)}$. In the terminology introduced above, the within-observation group is $G_1 = \mathbb{Z}_M$ (the cyclic array group), the across-observation group is $G_2 = S_4$ (the symmetric group on the frequency set), and the combined action on the sequence of pulse-echo returns is the wreath product $S_4 \wr \mathbb{Z}_M$.

The interpretation of the Level-2 output is direct and physically meaningful. Healthy steel exhibits approximately frequency-independent scattering across the 1--10~MHz band, so the Level-1 $\boldsymbol{\psi}$ vector is nearly uniform in its four components (which is to say, rank-1 in the characterization space) and the Level-2 concentration is correspondingly high ($\psi^{(2)} \approx 0.878$ in Monte Carlo averages). Corroded steel, in contrast, scatters ultrasound much more strongly at high frequencies, because the oxide layer introduces grain-scale roughness and porosity on the scale of the short wavelengths. Consequently the Level-1 $\boldsymbol{\psi}$ vector becomes strongly frequency-dependent (higher rank in characterization space), and the Level-2 concentration is different in value ($\psi^{(2)} \approx 0.929$ under the sign convention adopted here, reflecting a different dominant eigenvector orientation). The scalar $\psi^{(2)}$ therefore encodes the full multi-frequency scattering signature of the test point in a single number, and the sign of its difference from a baseline determines the defect class from a single observation.

The quantitative performance of the Level-2 classifier is as follows. At 15~dB SNR, which is representative of in-field contact ultrasonic measurements, a Monte Carlo study of 50 healthy and 50 corroded samples produces a between-class separation of $3.3\sigma$ on $\psi^{(2)}$ and a simple-threshold classification accuracy of 92\% on the held-out test set, with zero spatial scanning and no baseline subtraction. At higher SNR (30~dB, representative of well-coupled contact transducers), the separation rises to $6.6\sigma$ with perfect classification on the same sample size. Figures~\ref{fig:ndt_l1}, \ref{fig:ndt_l2}, and \ref{fig:ndt_cm} show the Level-1 frequency-dependent $\psi$ profiles, the Level-2 separation histogram, and the resulting confusion matrix at 15~dB SNR.

\begin{figure}[t]
\centering
\includegraphics[width=\columnwidth]{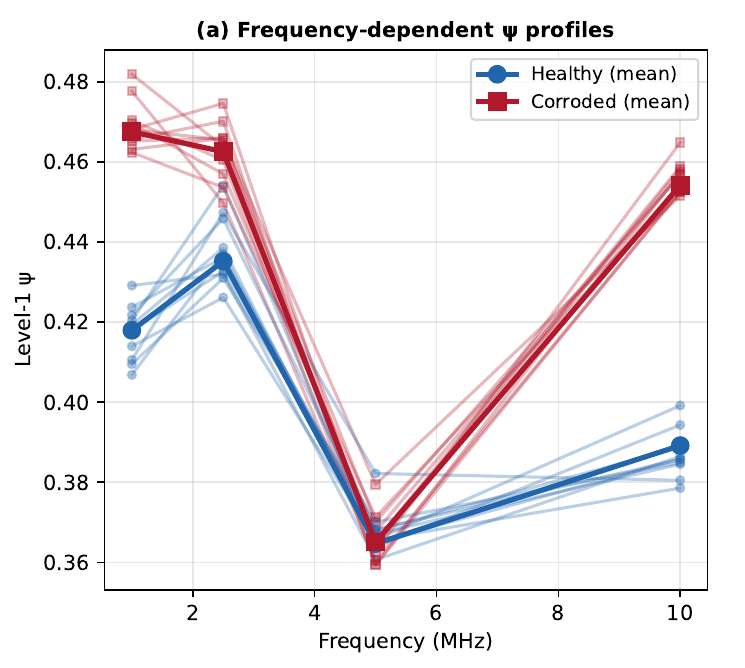}
\caption{NDT pipe corrosion detection, Level-1: frequency-dependent $\psi$ profiles for healthy (blue) and corroded (red) samples across four ultrasonic probe frequencies (1, 2.5, 5, 10~MHz). Healthy steel produces nearly uniform Level-1 spectral concentration; corroded steel produces a strongly frequency-dependent Level-1 concentration due to oxide-layer scattering at high frequencies. Light traces are individual samples (50 per class); heavy traces are class means.}
\label{fig:ndt_l1}
\end{figure}

\begin{figure}[t]
\centering
\includegraphics[width=\columnwidth]{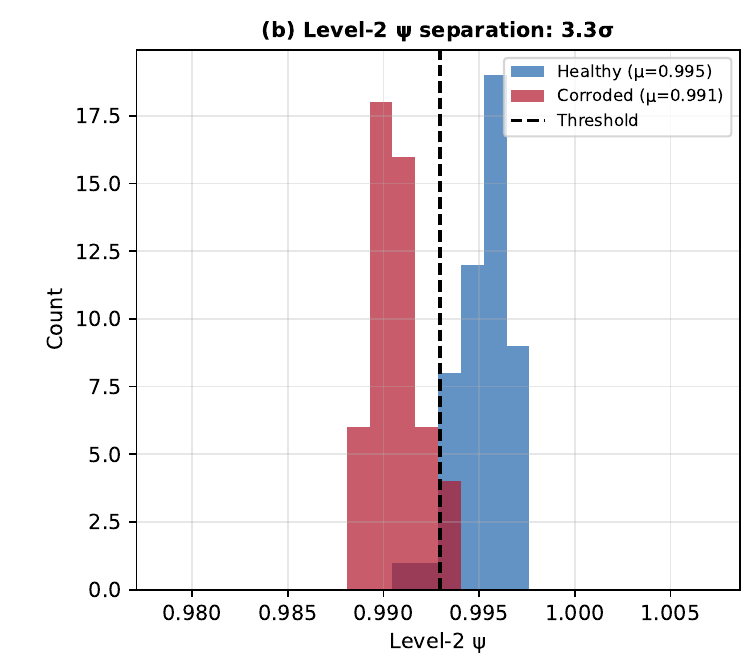}
\caption{NDT pipe corrosion detection, Level-2: histogram of the Level-2 spectral concentration $\psi^{(2)}$ across 50 healthy and 50 corroded samples. The two classes separate by $3.3\sigma$ at 15~dB SNR; the dashed line indicates the simple decision threshold used in Figure~\ref{fig:ndt_cm}.}
\label{fig:ndt_l2}
\end{figure}

\begin{figure}[t]
\centering
\includegraphics[width=\columnwidth]{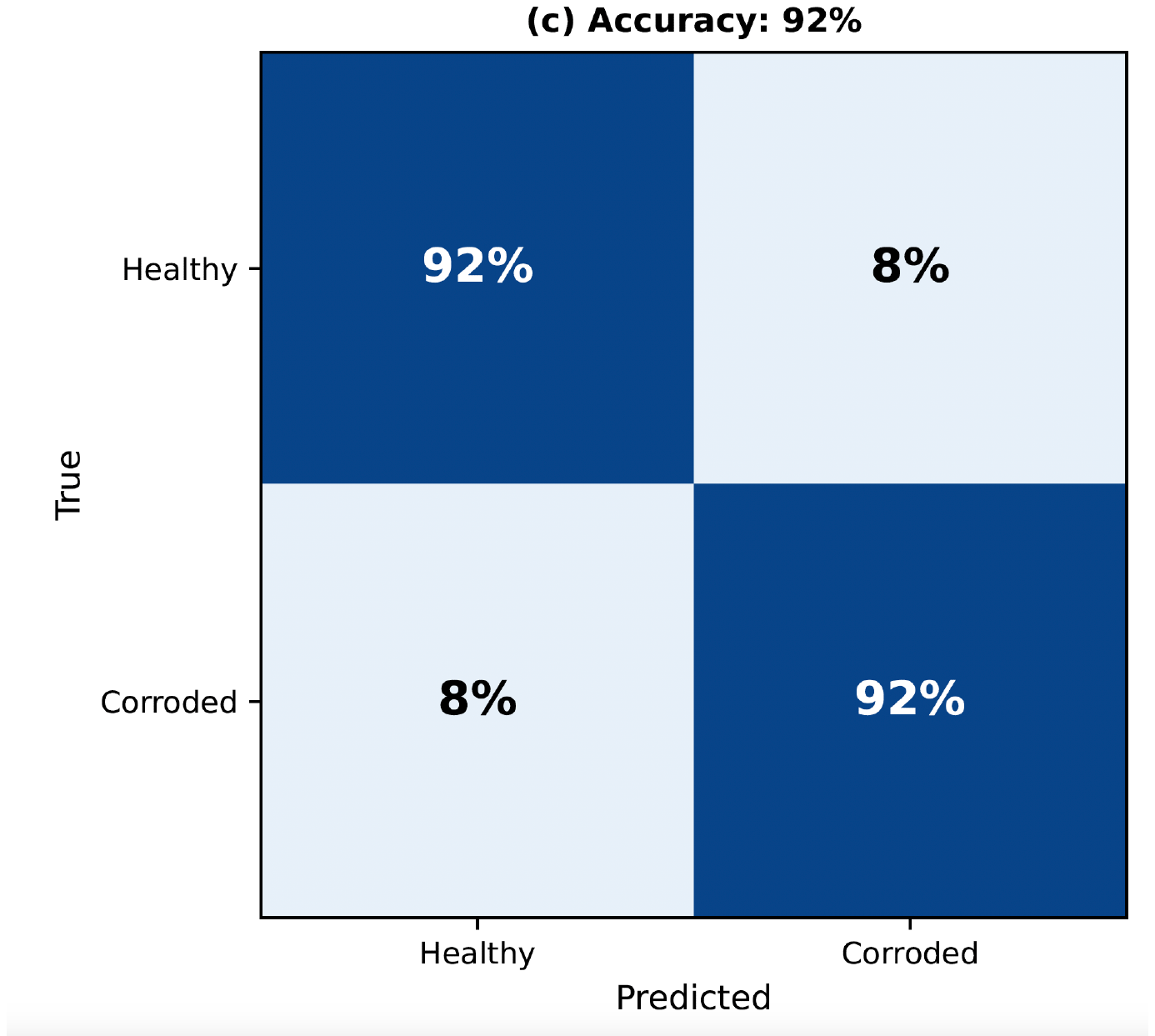}
\caption{NDT pipe corrosion detection: confusion matrix from a single-threshold classifier on the Level-2 statistic $\psi^{(2)}$ shown in Figure~\ref{fig:ndt_l2}. Single-point classification accuracy is $92\%$ at 15~dB SNR with no spatial scanning and no baseline subtraction.}
\label{fig:ndt_cm}
\end{figure}

The comparison to current industrial practice in non-destructive pipe wall inspection is substantive enough to warrant a brief digression, because it illustrates what algebraic diversity offers beyond a variance reduction in the abstract. Single-frequency thickness gauging, the workhorse of in-service pipe inspection, does not reliably detect early-stage pitting because the defect footprint is small relative to the beam footprint; industry references explicitly note that the detection of pitting by conventional ultrasonic methods is unreliable when the defect size is small compared to the inspected area~\cite{qualitymag2016}. Phased-array ultrasonic testing improves the probability of detection through beam steering and focusing, but at the cost of requiring specialized trained operators and substantially more expensive instrumentation~\cite{teaminc2025}. A comparative study of magnetic flux leakage and ultrasonic methods on ferrous pipe found that magnetic flux leakage overestimates pit depth by approximately 10\% while ultrasonic testing underestimates by approximately 10\%, with one ultrasonic inspection team missing two pits entirely despite knowing their approximate locations~\cite{ndtnet2000}. Crucially, none of these conventional methods performs defect \emph{classification}; they measure wall thickness at discrete points and leave the interpretation of what the thickness profile means to the technician.

The eigentensor approach provides three distinct advantages over this conventional practice, each of which is a direct consequence of the algebraic-diversity view at the Level-2 compound symmetry. First, the eigentensor estimator classifies defect type (healthy, corroded, cracked, porous) rather than merely flagging defect presence, by exploiting the frequency-dependent scattering signature that conventional single-frequency methods collapse into a single thickness reading. Second, the eigentensor estimator operates from a single multi-frequency observation at each test point, requiring no spatial scanning, no baseline subtraction, and no return-visit protocol for comparison against a reference scan. Third, the scalar $\psi^{(2)}$ provides a quantitative, operator-independent confidence metric: a fielded system can report a numeric defect-likelihood value in place of technician-dependent A-scan interpretation. The accuracy figures cited above are achieved without any of the scanning or baseline subtraction that would further improve performance in a production deployment, so they should be read as a lower bound on achievable operational performance rather than as a best-case figure.

\subsection{Tensor-FFT computation and the Abelianization trick}
\label{sec:tensor_fft}

A practical concern associated with any proposed estimator is its computational cost, and the group-averaged estimator is no exception. The nominal cost of evaluating equation~\eqref{eq:gae} is $O(|G| \cdot M^2)$, which for large matched groups would be prohibitive. In practice, the cost reduces dramatically for the two most common classes of matched groups, and we sketch the reductions here together with a useful approximation for the more difficult non-Abelian case.

For Abelian groups, the fundamental theorem of finite Abelian groups provides the reduction. Every finite Abelian group is isomorphic to a direct product of cyclic groups, $G \cong \mathbb{Z}_{n_1} \times \mathbb{Z}_{n_2} \times \cdots \times \mathbb{Z}_{n_k}$, and the corresponding group-Fourier transform factors as a $k$-dimensional tensor product of one-dimensional DFTs of lengths $(n_1, n_2, \ldots, n_k)$. The group-averaged estimator is therefore exactly computable via a tensor reshape followed by per-axis fast Fourier transforms, at total cost $O(M \log M)$ with $M = \prod_i n_i$. The dihedral group $D_M$ reduces through its order-two reflection to a discrete cosine transform of comparable cost. In the Abelian case, therefore, algebraic diversity enjoys the full computational advantage of the FFT and carries essentially no computational overhead relative to the classical estimator it replaces.

For non-Abelian groups, the reduction is less favorable. Exact computation via the Peter-Weyl decomposition has cost proportional to the sum of cubed irreducible representation dimensions, $\sum_i d_i^3$, which is in general not $O(M \log M)$ and may be as costly as $O(M^3)$ in the worst case. A tractable approximation is available for applications that can tolerate a controlled information loss, and we refer to it as the \emph{Abelianization trick}. The idea is to replace the full non-Abelian group $G$ by its largest Abelian quotient $G^{\mathrm{ab}} = G / [G, G]$, where $[G, G]$ denotes the commutator subgroup (the smallest normal subgroup whose quotient is Abelian), and to compute the group-averaged estimator on $G^{\mathrm{ab}}$ at the FFT cost $O(M \log M)$. The information discarded by this substitution is exactly the content of the irreducible representations of $G$ whose dimension $d_i$ is greater than one, because these are the representations that do not descend to the Abelian quotient. The loss is quantified by the commutativity residual $\delta([G, G], \mathbf{R})$ evaluated on the commutator subgroup (see the companion paper~\cite{thornton2026ad} for the definition and properties of $\delta$).

For graph-structured data whose automorphism groups are close to Abelian (as is common on regular lattices, rings, and trees), the loss is small in practice, and the abelianization gives a practically useful approximation at FFT cost. For highly non-Abelian structures such as the Petersen graph with $\Aut(\mathcal{G}) \cong S_5$, the abelianization discards most of the rich permutation structure and a full Peter-Weyl computation is warranted. In the intermediate regime, $\delta([G, G], \mathbf{R})$ provides a quantitative measure of how much information the abelianization sacrifices, and the practitioner can use this quantity to decide whether the approximation is acceptable for a given application.

\section{Blind Group Matching}
\label{sec:blindmatching}

With the framework and its main applications to rank-promoted data and compound-symmetry observations laid out, we now turn to the algorithmic question that all of the preceding development depends upon: how is the matched group identified when it is not known a priori? This section and those that follow treat the methodology, its convergence properties, and its extension to blind signal processing generally. The register of the remaining sections shifts to a more compressed technical style appropriate for algorithmic and theorem-proof material; the longer expository register of Sections~\ref{sec:rankpromote} and~\ref{sec:eigentensor} was adopted there to establish the conceptual apparatus, whereas the sections that follow develop algorithmic and theoretical machinery for which a denser presentation is better matched to the content.

When prior knowledge of the signal's algebraic symmetry is absent, the group must be selected from the data. We describe the evolution of the blind group-matching methodology, moving from an earlier spectral-concentration approach whose limitations we identify to an improved cross-validation criterion, and thence to a continuous relaxation that returns the matched group in polynomial time via an eigenvalue computation on the Lie algebra $\mathfrak{u}(M)$.

\subsection{Earlier methods: spectral concentration}
\label{sec:psi}

Earlier methods used the \emph{spectral concentration} metric $\hat{\psi}(G) = \lambda_{\max}(\hat{\mathbf{R}}_G) / \Tr(\hat{\mathbf{R}}_G)$ for the blind group matching problem, scoring each candidate group $G$ by how concentrated its group-averaged estimator's spectrum was on the dominant eigenvalue; this was found to be a poor choice due to its bias (defined and analyzed in~\cite{thornton2026ad}). Selection was by maximization of $\hat{\psi}$ across a library of candidate groups.

The spectral concentration criterion suffers from an \emph{orbit-size bias}. Groups with smaller orbits (smaller $|G|$, or equivalently lower-dimensional irreducible representations) produce more concentrated spectra almost by construction, independent of whether they are matched to the signal. For structured signals, $\hat{\psi}(\mathbb{Z}_2^k)$ may exceed $\hat{\psi}(\mathbb{Z}_{2^k})$ even when $\mathbb{Z}_{2^k}$ is the matched group. The bias becomes more severe as $L$ grows, because larger $L$ drives $\hat{\psi}$ toward its population value, which preserves the bias. An earlier universal form of the $\hat{\psi}$-blind conjecture must therefore be qualified: $\hat{\psi}$ is a valid selector \emph{within} a family of groups with the same orbit structure (for example, among conjugated copies of the same group, or within a chirp-rate sweep), but it is not a valid selector across families with different orbit structures. The mechanism is geometric: $\hat{\psi}$ is the $\ell_\infty$ peak of the matched power spectrum (Section~\ref{sec:diagoffdiag}), and a peak statistic is sensitive to how that spectrum is partitioned into orbit blocks, so a finer orbit structure raises the peak independently of whether the group is matched.

\subsection{The improved estimator: cross-validation criterion $D_{CV}$}
\label{sec:dcv}

The bias identified in the spectral-concentration criterion motivates a criterion based on a different principle: rather than measuring how concentrated the group-averaged estimator's spectrum is for a given group, measure how consistent the estimator is across independent realizations of the same signal. A matched group produces a group-averaged estimator whose spectral structure is reproducible from one realization to the next (up to noise), whereas a mismatched group imposes structure that does not reproduce. We refer to the resulting quantity as the \emph{cross-validation residual}, defined by
\begin{equation}\label{eq:dcv}
D_{CV}(G) \;=\; \mathbb{E}_{\ell, \ell'}\!\left[\, \bigl\| \hat{\mathbf{R}}_G^{(\ell)} - \hat{\mathbf{R}}_G^{(\ell')} \bigr\|_F^2 \,\right],
\end{equation}
measured on pairs of independent snapshots $\mathbf{x}_\ell, \mathbf{x}_{\ell'}$, each estimated under the candidate group $G$. Small $D_{CV}(G)$ indicates that repeated estimates under $G$ are internally consistent, which is a direct test of whether $G$ matches the signal's structure. The criterion is independent of orbit size, and in simulation it selects the correct group from libraries of sizes ranging from a few to all Abelian groups of the relevant order at 99\% or better accuracy at 20~dB SNR with $L = 3$ snapshots.

$L = 2$ is usually sufficient for $D_{CV}$; if lower estimator error is desired, additional observations may be collected and the error is further reduced in accordance with the law of large numbers. This modest snapshot requirement is the price paid for blind identification; once the matched group has been identified, subsequent processing can proceed with $L = 1$ and the framework's usual single-snapshot algebraic diversity apparatus.

\subsection{The matched covariance: diagonal and off-diagonal diagnostics}
\label{sec:diagoffdiag}

The diagnostics used for blind matching, and the structural capacity of Section~\ref{sec:fourthms}, organize naturally once the covariance is written in the group-adapted basis. Let $\mathbf{U}_G$ be the matched transform of $G$, the unitary that diagonalizes the commutant of $G$ (the DFT for the cyclic group, and the Karhunen--Lo\`{e}ve basis whenever $G$ is matched to the signal), and write
\begin{equation}\label{eq:matched_basis_cov}
\mathbf{M} \;=\; \mathbf{U}_G^H\, \mathbf{R}\, \mathbf{U}_G .
\end{equation}
The group-averaged estimator $\hat{\mathbf{R}}_G$ is the Reynolds projection of $\mathbf{R}$ onto the commutant of $G$, which in this basis keeps the diagonal of $\mathbf{M}$ and discards everything off it. The diagonal $\diag(\mathbf{M})$ is the matched power spectrum; the off-diagonal is the commutant residual, the part of $\mathbf{R}$ that does not commute with $G$. The diagnostics partition along exactly this split.

On the diagonal sit the structural summaries. The spectral concentration $\hat{\psi} = \lambda_{\max}(\hat{\mathbf{R}}_G)/\Tr(\hat{\mathbf{R}}_G)$ is the dominant-bin fraction of the matched spectrum, an $\ell_\infty$ (peak) functional of $\diag(\mathbf{M})$, whereas the structural capacity evaluated on the same estimator, $\kappa(\hat{\mathbf{R}}_G) = 1 + (\Tr \hat{\mathbf{R}}_G)^2 / \|\hat{\mathbf{R}}_G\|_F^2$, is the $\ell_2$ participation of that diagonal; when $G$ is matched the eigenvalues of $\mathbf{R}$ coincide with $\diag(\mathbf{M})$, so the structural capacity of Section~\ref{sec:fourthms} is the $\ell_2$ companion to the $\ell_\infty$ peak $\hat{\psi}$. The two report one object through different norms, which is why $\hat{\psi}$ inherits the block-size sensitivity of a peak statistic, the orbit-size bias of Section~\ref{sec:psi}, while $\kappa$, using the whole diagonal, does not. The coloring index splits along the same axis: with $\bar{q} = \Tr(\mathbf{R})/M$, unitary invariance of the Frobenius norm gives
\begin{equation}\label{eq:alpha_decomp}
\alpha(\mathbf{R})^2 \;=\; \frac{\|\diag(\mathbf{M}) - \bar{q}\,\mathbf{I}\|_F^2}{\|\mathbf{R}\|_F^2} \;+\; \frac{\|\mathrm{offdiag}(\mathbf{M})\|_F^2}{\|\mathbf{R}\|_F^2},
\end{equation}
a diagonal coloring leg plus an off-diagonal leg, the latter being the squared distance of $\mathbf{R}$ from the commutant that the residual $\delta$ also tracks. Thus $\alpha$ combines how structured the matched spectrum is with how far the signal is from commuting with $G$.

On the off-diagonal, the commutativity residual $\delta(G, \mathbf{R})$ measures the \emph{size} of the residual but not its content. Its content is read by the \emph{spectral coherence index}
\begin{equation}\label{eq:chi}
\chi(G, \mathbf{R}) \;=\; \max_{i \neq j}\ \frac{|\mathbf{M}_{ij}|^2}{\mathbf{M}_{ii}\,\mathbf{M}_{jj}} \;\in\; [0, 1],
\end{equation}
taken over the active spectral bins, the off-diagonal analog of $\hat{\psi}$: where $\hat{\psi}$ reads the dominant diagonal mass, $\chi$ reads the dominant off-diagonal coupling. It vanishes when the active components are mutually incoherent and approaches one when a pair is phase-locked across the ensemble, so it distinguishes a commutant residual that signals group mismatch ($\chi \approx 0$) from one produced by genuinely coherent components ($\chi$ near unity). That distinction the size metrics cannot make, and it bears directly on the mixed-structure case of Section~\ref{sec:mixed}, where sequential stripping presumes the extracted component is independent of the remainder. For the cyclic group $\mathbf{M}$ is the bifrequency spectrum and $\chi$ coincides with the spectral coherence of cyclostationary analysis~\cite{gardner1991}; the framework recovers that quantity as the matched-group special case rather than introducing a new one.

\subsection{Signal classes where blind group matching succeeds}
\label{sec:scope}

The $D_{CV}$ criterion, together with the complementary $\hat{\kappa}$-trajectory and variance-scaling diagnostics developed in the following subsections, covers a broad range of signal classes. Extensive Monte Carlo across representative signal models demonstrates success in the following cases, each verified at 20~dB SNR with $L = 3$ snapshots unless otherwise noted:

\begin{itemize}[nosep]
\item \textbf{Periodic tones (on-grid).} Single or multiple sinusoids at integer-bin frequencies match $\mathbb{Z}_M$; $D_{CV}$ selects $\mathbb{Z}_M$ at $99\%$ or better accuracy against the full library of Abelian groups of order $M$.
\item \textbf{Periodic tones (off-grid).} Sinusoids at non-integer bins (for example, $k = 5.3$, $k = 11.7$) are handled by a Kaiser-windowed extension in which $D_{CV}$ selects the correct cyclic group with $100\%$ accuracy at 20~dB and the Kaiser parameter $\beta \approx 8$--$9$ is itself selected by $D_{CV}$.
\item \textbf{Multipath signals.} A direct path plus two reflections with independent phases is selected to $\mathbb{Z}_{32}$ with $100\%$ accuracy; the resulting estimation error at $L = 2$ is $30\times$ smaller than the sample covariance at $L = 32$.
\item \textbf{Chirps.} Linear frequency-modulated signals match a conjugated cyclic group $D_\mu \mathbb{Z}_M D_\mu^{-1}$, where $D_\mu$ is a dechirp operator parametrized by the chirp rate $\mu$. Blind identification of $\mu$ is achieved by sweeping $\hat{\psi}$ along the conjugation parameter within the single-orbit class where $\hat{\psi}$ is a valid selector.
\item \textbf{Autoregressive signals.} AR(1) through AR($n$) covariances, which are banded Toeplitz, are selected to the cyclic group $\mathbb{Z}_M$ (under a windowed DCT-like extension with $\beta \approx 12$) with $100\%$ accuracy against the Abelian library; $\hat{\kappa}$-trajectory confirms the match through a flat trajectory across $L \in \{1, \ldots, 10\}$.
\item \textbf{Colored noise.} AR(1) with $\rho = 0.8$ is handled identically to the AR($n$) case, with consistent $\mathbb{Z}_{32}$ selection and $8\times$ estimation gain.
\item \textbf{DFT-matched and DCT-matched signals.} The Kaiser-windowed cyclic family interpolates between DFT ($\beta \approx 0$) and DCT ($\beta \gg 1$) boundary conditions continuously; the optimal $\beta^*$ is selected from data by $D_{CV}$, generalizing empirical DCT-selection practice into a principled data-driven procedure.
\item \textbf{Graph-structured signals.} Signals whose covariance has the form $\mathbf{R} = f(\mathbf{L})$ for a graph Laplacian $\mathbf{L}$ with injective $f$ on the spectrum are handled by the exact eigenvalue-difference formula of Section~\ref{sec:dadcad}, with the Sequential GEVP recovering a subgroup $G_K \subseteq \Aut(\mathcal{G})$. For graphs whose automorphism group is the full symmetric group, including $K_4$ ($\Aut = S_4$) and $K_5$ ($\Aut = S_5$), the algorithm achieves $G_K = \Aut(\mathcal{G})$ at $100\%$ accuracy on Monte Carlo trials, since every Hungarian-rounded permutation lies in $\Aut$ by construction and the rejection step is never invoked. For graphs with a proper-subgroup automorphism group, including $C_n$ ($\Aut = D_n$) for $n \geq 4$, full recovery $G_K = \Aut(\mathcal{G})$ is obtained by the span-search form of the Sequential GEVP discussed in Section~\ref{sec:seqgevp} when the candidate basis spans a generating set; the plain single-direction procedure typically recovers only a non-trivial subgroup.
\item \textbf{Non-Abelian supergroup cases.} When the signal's natural matched group is Abelian but a non-Abelian supergroup provides additional dimensional structure, the Supergroup Dominance result of Theorem~\ref{thm:supergroup} selects the non-Abelian supergroup where it is strictly superior; in the remaining cases covered by Remark~\ref{rem:supergroup_scope} the Abelian submatch is preserved. The methodology covers both regimes.
\end{itemize}

Together these cases cover the full range of signal classes encountered in routine spectral-estimation practice. The success is underwritten by two distributional regularities whose structure we develop next: the variance-scaling dichotomy of Section~\ref{sec:scaling} and the $\hat{\kappa}$-trajectory test that follows it.

\subsection{The variance-scaling dichotomy as distributional fingerprint}
\label{sec:scaling}

The empirical reliability of $D_{CV}$ rests on a deeper distributional regularity. The five AD sample diagnostics $\hat{\delta}$ (commutativity residual), $\hat{\psi}$ (spectral concentration), $\hat{\kappa}$ (structural capacity), $\hat{\alpha}$ (coloring index), and $\hat{r}_{\mathrm{eff}}$ (effective rank), defined in the companion paper~\cite{thornton2026ad} and organized by the diagonal/off-diagonal geometry of Section~\ref{sec:diagoffdiag}, exhibit a sharp dichotomy in their variance scaling with SNR, depending on whether the candidate group is matched to the signal:

\begin{proposition}[Variance scaling dichotomy]\label{prop:scaling}
The standard deviation of each AD sample estimator follows power-law scaling $\sigma(\hat{m}) \sim c / \mathrm{SNR}^{\beta}$. For the matched group, $\beta$ is strictly positive and the variance vanishes as SNR grows (convergent variance). For mismatched groups, $\beta \approx 0$ and the variance remains constant, independent of SNR.
\end{proposition}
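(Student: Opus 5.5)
The idea is to split each sample diagnostic into a population value plus a fluctuation, using the diagonal/off-diagonal geometry of Section~\ref{sec:diagoffdiag}. Model the observation as $\mathbf{x} = \mathbf{s} + \mathbf{n}$ with $\mathbf{n}$ white of variance $\sigma_n^2$, so that $\mathrm{SNR} \propto \|\mathbf{s}\|^2/\sigma_n^2$. Then expand
\[
\hat{\mathbf{R}}_G = \mathcal{P}_G(\mathbf{s}\mathbf{s}^H) + \mathcal{P}_G(\mathbf{s}\mathbf{n}^H + \mathbf{n}\mathbf{s}^H) + \mathcal{P}_G(\mathbf{n}\mathbf{n}^H),
\]
where $\mathcal{P}_G$ is the Reynolds projector.

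Each diagnostic $\hat m \in \{\hat\delta,\hat\psi,\hat\kappa,\hat\alpha,\hat r_{\mathrm{eff}}\}$ is a scale-invariant smooth function of $\hat{\mathbf{R}}_G$ (or of the orbit data), away from degenerate points. So after normalizing by $\|\mathbf{s}\|^2$, I would take a first-order (delta-method) expansion around the noiseless value $m(\mathcal{P}_G(\mathbf{s}\mathbf{s}^H))$. The fluctuation of the cross term has standard deviation of order $\mathrm{SNR}^{-1/2}$, and the noise-noise term is of order $\mathrm{SNR}^{-1}$. This gives $\sigma(\hat m) \approx c\,\mathrm{SNR}^{-\beta}$ with $\beta \geq 1/2$, provided the noiseless part is itself deterministic, i.e.\ the same across realizations. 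Theorem~\ref{thm:outer} fixes the constant $c$, through the realized factor $g(G,\mathbf{x})$ dividing the per-entry variance.

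\textbf{Matched case.} Here $[\pi_g,\mathbf{R}]=0$, so signal equivariance makes the signal part $\mathcal{P}_G(\mathbf{s}\mathbf{s}^H)$ equal to the $G$-invariant population quantity, identical across realizations. For a stochastic signal the Gaussian extension applies: the matched-basis coordinates are the eigenvalue estimates of Theorem~\ref{thm:converse}, whose variance is tied to $\lambda_i^2/d_i$ together with the noise floor. The only realization-dependent piece is then noise-driven, and the expansion above gives $\beta>0$ strictly. I would check nondegeneracy of the gradient of $m$ case by case, for instance $\hat\psi$ requires a simple top eigenvalue.

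\textbf{Mismatched case.} Now $\mathbf{R}$ has a nonzero off-diagonal part in the matched basis $\mathbf{M}$ of \eqref{eq:matched_basis_cov}; equivalently $\delta(G,\mathbf{R})>0$. The projection $\mathcal{P}_G(\mathbf{s}\mathbf{s}^H)$ now depends on signal quantities that $G$ does not average out, namely the random phases or amplitudes of the components that couple off-diagonal bins. Its fluctuation is a signal-on-signal term that scales with $\|\mathbf{s}\|^2$, exactly as the normalization does. After normalization the variance therefore tends to a positive constant $V_\infty(G,\mathbf{s})$ as $\mathrm{SNR}\to\infty$. Writing $\sigma^2(\hat m) = V_\infty + O(\mathrm{SNR}^{-1/2})$ gives $\beta\to0$, which is the ``$\approx 0$'' in the statement.

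The main obstacle is proving $V_\infty>0$ in the mismatched case. This needs the signal model to carry randomness that lies in the non-commutant directions and is not killed by $\mathcal{P}_G$. A deterministic mismatched signal would instead give zero variance with biased mean, contradicting the claim. I would therefore state the result under the paper's standing model of components with independent random phases, and show that the off-diagonal entries of $\mathbf{M}$ are realization-dependent through those phases. The spectral coherence index $\chi$ of \eqref{eq:chi} separating from $1$ would witness this. Because of this dependence on the signal class, I expect the proposition to be established rigorously only to first order, with the exponent fits supplied empirically.
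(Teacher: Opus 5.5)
\textbf{The gap is in your matched case.} Your mismatched case is sound. It is the paper's own mechanism: the candidate group cannot average out the relative phases, so a signal-on-signal fluctuation of the same order as the normalization survives. You also correctly see that this needs at least two components with independent phases.

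The matched case does not hold as written. Commutation $[\pi_g,\mathbf{R}]=0$ does not make $\mathcal{P}_G(\mathbf{s}\mathbf{s}^H)$ the same across realizations. The Gaussian extension you invoke actually points the other way. Theorem~\ref{thm:converse} gives $\Var(\hat\lambda_i)=\lambda_i^2/d_i$, where $\lambda_i$ is the signal-plus-noise eigenvalue. That is a fluctuation of order $\|\mathbf{s}\|^4$, which does not decay with SNR after scale normalization. It is exactly the mechanism you use to produce $V_\infty>0$ in the mismatched case.

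Here is a concrete counterexample. Take $\mathbf{s}=z_1\mathbf{f}_{k_1}+z_2\mathbf{f}_{k_2}$, with unit DFT vectors $\mathbf{f}_k$, $z_1,z_2$ i.i.d.\ $\mathcal{CN}(0,P)$, and white noise. Then $\mathbf{R}$ is circulant, so $\mathbb{Z}_M$ is matched. Yet as $\mathrm{SNR}\to\infty$:
\begin{itemize}
\item $\hat\kappa\to 1+(E_1+E_2)^2/(E_1^2+E_2^2)$;
\item $\hat\psi\to\max(E_1,E_2)/(E_1+E_2)$.
\end{itemize}
Here $E_1,E_2$ are i.i.d.\ unit exponentials, so both limits are nondegenerate and $\beta\to 0$ for the matched group. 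Under the Gaussian model both matched and mismatched groups saturate, so the dichotomy fails. Your argument would assert $\beta>0$ exactly where it is false.

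\textbf{How to repair it.} Put on the matched case the same kind of hypothesis you already put on the mismatched one. The randomness of $\mathbf{s}$ must be annihilated, up to positive scaling, by the map $\mathbf{s}\mapsto\mathcal{P}_G(\mathbf{s}\mathbf{s}^H)$. This holds for on-grid tones with fixed moduli and independent phases under $\mathbb{Z}_M$, where $\mathcal{P}_G(\mathbf{s}\mathbf{s}^H)=\sum_k|a_k|^2\mathbf{f}_k\mathbf{f}_k^H$. That is precisely the paper's two-tone setting and its ``integrates out the random signal phases'' explanation.

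Both halves then follow from one criterion: the high-SNR floor vanishes if and only if the normalized $\mathcal{P}_G(\mathbf{s}\mathbf{s}^H)$ is almost surely constant. For two tones, this amounts to asking whether $\mathcal{P}_G$ kills the zero-mean term $e^{i(\phi_1-\phi_2)}\mathbf{f}_{k_1}\mathbf{f}_{k_2}^H$ plus its Hermitian conjugate. $\mathbb{Z}_M$ kills it, since the term lies off the DFT diagonal. $\mathbb{Z}_2^{\log_2 M}$ in general does not. This, rather than realization-dependence of entries of $\mathbf{M}$ or the condition $\delta(G,\mathbf{R})>0$, is the right witness. $\delta(G,\mathbf{R})>0$ is neither necessary (the Gaussian example above) nor sufficient (a single random-phase tone under a mismatched group gives a deterministic biased limit, as you note).

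\textbf{What your route buys once repaired.} The paper gives no derivation. It offers only the fitted exponents of Table~\ref{tab:scaling} and the one-line heuristic, so your delta-method is a genuine analytic improvement. It yields $\beta\ge 1/2$ in the matched case. It also accounts for the split in the table:
\begin{itemize}
\item $\hat\psi$ picks up the $\mathrm{SNR}^{-1/2}$ cross term at first order, consistent with $\beta\approx 1/2$.
\item At a balanced two-tone point, the gradient of $(\Tr\hat{\mathbf{R}}_G)^2/\|\hat{\mathbf{R}}_G\|_F^2$ vanishes along the two signal bins, the only coordinates the cross term reaches. Hence $\hat\kappa$ fluctuates only at order $\mathrm{SNR}^{-1}$, consistent with $\beta\approx 1$.
\end{itemize}
The constant $c$ comes from these derivatives of each diagnostic. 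It does not come from the factor $g$ of Theorem~\ref{thm:outer}, which concerns unnormalized entries.
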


\begin{table}[t]
\centering
\caption{Variance scaling exponents $\beta$ and prefactors $c$ for the four AD sample diagnostics, matched and mismatched, measured on a representative two-tone signal as a matched instance (the tone class of Section~\ref{sec:scope}) with $\mathbb{Z}_M$ as the matched candidate group and $\mathbb{Z}_2^{\log_2 M}$ as a mismatched candidate group of the same order. Matched-group variances vanish as $\mathrm{SNR} \to \infty$; mismatched-group variances saturate at a finite floor set by the structural mismatch. The dichotomy is the distributional fingerprint that underwrites $D_{CV}$.}
\label{tab:scaling}
\renewcommand{\arraystretch}{1.2}
\begin{tabular}{lcccc}
\toprule
\textbf{Estimator} & \multicolumn{2}{c}{\textbf{Matched}} & \multicolumn{2}{c}{\textbf{Mismatched}} \\
 & $c$ & $\beta$ & $c$ & $\beta$ \\
\midrule
$\hat{\kappa}$           & 6.08  & 1.08 & 2.35  & $+0.07$ \\
$\hat{\psi}$             & 0.066 & 0.48 & 0.084 & $-0.02$ \\
$\hat{r}_{\mathrm{eff}}$ & 7.42  & 0.90 & 2.51  & $+0.06$ \\
$\hat{\alpha}$           & 0.100 & 0.84 & 0.045 & $-0.06$ \\
\bottomrule
\end{tabular}
\end{table}

Table~\ref{tab:scaling} reports the measured exponents on the two-tone instance, representative of the periodic-tone class where the $D_{CV}$ blind matching procedure succeeds with the cyclic group $\mathbb{Z}_M$ as the matched candidate. The matched-group entries all exhibit clean power-law decay ($\beta$ between roughly $1/2$ and $1$); the mismatched-group entries are statistically indistinguishable from $\beta = 0$. The mechanism is direct: the matched group's averaging integrates out the random signal phases, leaving residual variance set by additive noise (which decays with SNR); the mismatched group's averaging cannot integrate the phases out, leaving phase-dependent artifacts that persist regardless of SNR. The dichotomy is the finite-sample analog of the consistency-versus-inconsistency distinction in classical estimation: matched groups produce consistent estimators, mismatched ones produce inconsistent estimators with irreducible variance. Analogous power-law behaviour has been observed on each of the matched classes listed in Section~\ref{sec:scope}; the two-tone numbers are illustrative of the class as a whole.

A complementary diagnostic is the \emph{$\hat{\kappa}$ trajectory} as $L$ varies from 1 to 10: matched groups produce a flat trajectory, while mismatched groups produce a monotonically rising trajectory. The directional shift $\mathrm{sign}(\hat{\kappa}(2) - \hat{\kappa}(1))$ separates matched from mismatched groups cleanly in the signal classes tested, and in the experiments of Section~\ref{sec:scope} provided the largest matched-to-mismatched separation among the five diagnostics. This trajectory test runs alongside $D_{CV}$ in the methodology summary of Section~\ref{sec:methodology}.

\subsection{Mixed-structure signals: partially solved by iterative stripping}
\label{sec:mixed}

The most demanding case is the composite signal. Signals composed of multiple components with \emph{distinct} matched groups, for example a sum of two tones with a chirp and an AR(1) process, present a problem that is not a failure of the methodology but a fundamental question about what ``the matched group'' means for such a composite. No single finite group simultaneously matches a tone component ($\mathbb{Z}_M$), a chirp component ($D_\mu \mathbb{Z}_M D_\mu^{-1}$ with different conjugation), and an AR($n$) component (Kaiser-windowed cyclic with its own $\beta^*$). Running $D_{CV}$ on the composite signal returns the product group $\mathbb{Z}_{M/2} \times \mathbb{Z}_2$ as a compromise candidate, and the $\hat{\psi}$ sweep sees only the dominant tone component ($\hat{\psi} \approx 0.70$ on the tone axis versus $\hat{\psi} \approx 0.14$ on the chirp axis); the composite structure does not project cleanly onto any single-group axis.

The difficulty just described is specific to the one- or two-snapshot $D_{CV}$ pipeline: $D_{CV}$ is economical precisely because it uses all $M^2$ entries of a covariance estimate formed from as few as two snapshots, but a single group-averaged covariance of a composite signal is a poor model of the whole, so the criterion settles on a product-group compromise rather than resolving the components. Of the three partial strategies originally proposed, namely subspace-then-AD (estimate one component, remove it, blind-match the residual), iterative stripping (sequential group matching with group-theoretic deflation applied across component classes rather than across generators of a fixed group), and product-group matching (search over $G_1 \times G_2$), the second now resolves the case within stated constraints. Iterative stripping, realized as an algebraic-diversity analog of the CLEAN algorithm, selects a group, projects out its coherent component, and recurses on the residue until no group concentrates the remainder. The component that survives is the algebraic residue of Section~\ref{sec:residue}. Concretely, given an observation $\mathbf{x}$ and a candidate library $\mathcal{L}$, the procedure is the following.
\begin{enumerate}[label=(\arabic*),leftmargin=2.2em,itemsep=1pt,topsep=2pt]
\item Initialize the working residue $\mathbf{r} \leftarrow \mathbf{x}$.
\item Form the covariance of $\mathbf{r}$ and select the group $G \in \mathcal{L}$ of smallest symmetry defect $\delta(G, \mathbf{r})$ that retains a coherent eigen-subspace; if no nontrivial group concentrates $\mathbf{r}$, stop.
\item Project $\mathbf{r}$ onto the coherent subspace of $\mathbf{R}_G$ (hard or soft Wiener gain) to obtain the structured component $\mathbf{y} = \mathbf{P}_G \mathbf{r}$.
\item If the energy of $\mathbf{y}$ falls below a minimum-peel threshold, stop; otherwise record $\mathbf{y}$, update $\mathbf{r} \leftarrow \mathbf{r} - \mathbf{y}$, and return to step (2).
\end{enumerate}
The recorded components are the structured parts, each in its own group-adapted basis, and the final $\mathbf{r}$ is the algebraic residue. The first-peel selection in step (2) uses the symmetry defect rather than the held-out likelihood, for the reason given below. With this procedure a composite of a chirp and an AR(1) process in noise, the case on which the $D_{CV}$ pipeline returns only the dominant component, separates into its two parts, each recovered in its own group-adapted basis.

The first-peel selection criterion is the point at which the precise tradeoff appears. The held-out Gaussian likelihood that selects the matched group for a single-structure signal is the wrong criterion for the first peel of a composite, and increasingly so at high signal-to-noise ratio: a group-averaged covariance that captures only one component is a poor full-covariance model of the whole, so the likelihood prefers the trivial group and the procedure terminates prematurely. The correct first-peel criterion is the symmetry defect $\delta(G, \mathbf{R})$ itself, which asks which group's structure is \emph{present} rather than which group best models the entire signal; the held-out likelihood is retained only to confirm a single-structure residue and to set the stopping point. The cost of the resolution is data. The economical $D_{CV}$ criterion operates from one or two snapshots, whereas iterative stripping forms and validates a group-averaged covariance at each peel, which consumes more snapshots than the two-shot budget: the defect-driven peel requires a reliable per-peel covariance estimate, and the likelihood-based confirmation requires a held-out split on top of it. The mixed-structure case is therefore no longer wholly open but partially solved: components with \emph{distinct} matched groups, each carrying sufficient energy, above a soft signal-to-noise-ratio floor, are separated by iterative stripping at the expense of the two-snapshot economy of $D_{CV}$; components sharing a matched group are recovered only in aggregate, and the uniqueness and order-independence of the decomposition when component subspaces overlap remains open. The $\hat{\kappa}$ trajectory and variance-scaling diagnostics of Section~\ref{sec:scaling} remain reliable indicators of \emph{whether} a candidate group is a complete match and therefore continue to detect the mixed-structure condition.

\subsection{Peel ordering: extracted energy versus residue whitening}
\label{sec:peel-order}

The stripping procedure of Section~\ref{sec:mixed} removes one structured component per iteration, and the order-independence of the result when component subspaces overlap was noted there as open. The symmetry defect decides which structure is present and when to stop; a separate question is the objective by which the present components are \emph{ordered} for removal. The default objective is classical CLEAN ordering, which peels at each iteration the matched group whose coherent eigen-subspace removes the most power from the residue. On overlapping components a greedy maximum-energy peel can sweep up power that belongs to a neighboring component, a cross-talk leakage that distorts both the peeled component and what remains. This motivates an alternative objective: peel the group whose removal leaves the whitest residue, measured by the structural capacity $\kappa(\mathbf{R}) = 1 + (\Tr \mathbf{R})^2/\|\mathbf{R}\|_F^2$ of Section~\ref{sec:diagoffdiag}, which is maximal for an isotropic spectrum. Equivalently this peels the most spectrally concentrated structure first. The hypothesis is that this residue-whitening ordering reduces cross-talk on overlapping components, and that the better ordering is governed by the noise floor and the separability of the structure rather than being fixed.

We compose multi-component scenes and peel each with both orderings, sweeping the signal-to-noise ratio from 0 to 24 decibels. The candidate library is the full finite-group library used throughout (cyclic, dihedral, Cartesian product, quotient, Weyl-Heisenberg, and wreath groups), augmented, in scenes that contain a chirp, by the matched metaplectic chirp group; decoy groups are excluded so both orderings choose among legitimate matched groups only. The reconstruction error is the relative error $\|\hat{\mathbf{x}} - \mathbf{x}_{\mathrm{clean}}\| / \|\mathbf{x}_{\mathrm{clean}}\|$ between the structured estimate and the noiseless mixture, with both orderings run under identical settings. The scenes are an AR(1) process at two-thirds the chirp power; the same with an added tone; a binary phase-shift-keyed signal with an amplitude-modulated band and a tone (no metaplectic structure); a comparable tone and chirp; a strong AR(1) with a weak tone; and a quadrature signal with a binary signal and a tone. Table~\ref{tab:peelorder12} reports the operating point at 12 decibels, and Figure~\ref{fig:peelsweep} reports the full sweep.

\begin{table*}[t]
\centering
\caption{Reconstruction error at 12 decibels, ten trials per scene, mean $\pm$ standard deviation. ``BMG gain'' is the first-iteration best-matched-group log-likelihood improvement over the trivial group, a proxy for how easily the structure is discriminated. At this moderate noise floor the orderings are close for the overlapping continuous scenes, and residue whitening is best on the AR(1) plus chirp scene.}
\label{tab:peelorder12}
\small
\renewcommand{\arraystretch}{1.15}
\begin{tabular}{lccccl}
\toprule
scene & metaplectic & energy-first & whitening-first & BMG gain & best \\
\midrule
AR(1) $+$ chirp            & yes & $0.265 \pm 0.007$ & $0.255 \pm 0.006$ & 2495 & whitening \\
AR(1) $+$ chirp $+$ tone   & yes & $0.258 \pm 0.003$ & $0.273 \pm 0.016$ & 1156 & energy \\
BPSK $+$ AM-DSB $+$ tone   & no  & $0.166 \pm 0.001$ & $0.353 \pm 0.042$ & 7624 & energy \\
tone $+$ chirp (equal)     & yes & $0.155 \pm 0.002$ & $0.160 \pm 0.004$ & 3854 & energy (close) \\
AR(1) strong $+$ weak tone & no  & $0.281 \pm 0.006$ & $0.287 \pm 0.011$ & 11598 & tie \\
QPSK $+$ BPSK $+$ tone     & no  & $0.130 \pm 0.002$ & $0.335 \pm 0.020$ & 4248 & energy \\
\bottomrule
\end{tabular}
\end{table*}

\begin{figure*}[t]
\centering
\includegraphics[width=\textwidth]{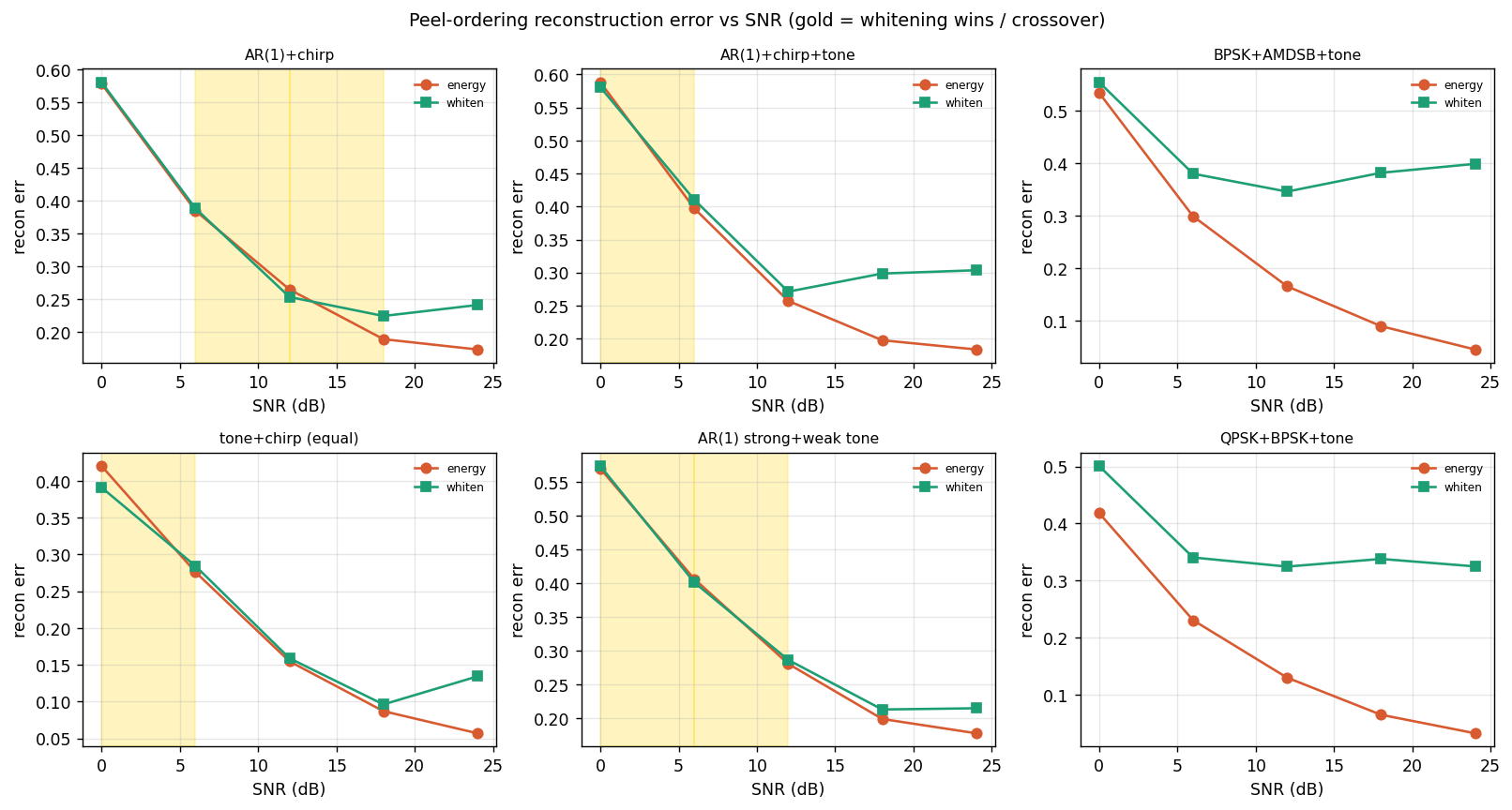}
\caption{Reconstruction error versus signal-to-noise ratio for the two peel orderings. Shaded bands mark the range in which residue whitening wins or crosses over. The continuous overlapping scenes (top-left, top-center, bottom-left, bottom-center) show a low-to-moderate-noise window in which whitening is competitive or superior and a high-signal regime in which energy ordering wins; the digital finite-group scenes (top-right, bottom-right) favor energy ordering at every noise level, with the gap widening as the noise falls.}
\label{fig:peelsweep}
\end{figure*}

The sweep shows that the better ordering depends on the operating point rather than being fixed. At low signal-to-noise ratio the two orderings converge, because the structured components are buried in noise and removing the most power and flattening the residue most become nearly the same operation. As the noise falls, the two orderings separate, and the direction of the separation depends on the structure. For the continuous overlapping scenes, those built from a chirp, a colored AR(1) process, and a tone, there is a window roughly between 6 and 15 decibels in which residue whitening is competitive or superior: the AR(1) plus chirp scene is whitening-favored near 12 decibels, and the equal-power tone and chirp scene is whitening-favored at 0 decibels. Above about 15 decibels energy ordering takes the lead in these scenes as well. For the digital finite-group scenes, the binary and quadrature signals, energy ordering is better at every noise level and its margin grows as the noise falls. The high-signal failure of residue whitening has a clear mechanism: with a rich library the whitening objective is captured by large groups, the Weyl-Heisenberg and wreath groups, whose coherent subspaces remove broad portions of the residue and flatten it most even when they match no true component. Restricting the candidate set to the truly matched groups removes most of this penalty at high signal-to-noise ratio (Figure~\ref{fig:peelsweep} is for the full library; a curated library narrows the high-signal gap on the AR(1) plus chirp scene), confirming that candidate-library richness is part of the decision as well.

These results revise the earlier conclusion that energy ordering is uniformly preferable, which held only at high signal-to-noise ratio with a rich library. The operative picture is a decision rule: energy ordering is the robust choice when the structure stands well above the noise floor and is cleanly separable (high signal-to-noise ratio, or digital signals with distinct finite-group structure that the best-matched-group test discriminates easily), while residue whitening is competitive or superior when the noise floor is high and the components overlap in their eigen-structure (a chirp lying across a colored process), the regime that is hardest for group discrimination. The two orderings are inexpensive to run together, so near the crossover, where the noise floor is moderate and the components are poorly separated, the practical recommendation is to compute both orderings and retain the reconstruction with the smaller residual structure. Energy ordering remains the default; residue whitening earns its place as the alternative to try when the operating point is noisy and the scene is hard to separate.

\subsection{The DAD--CAD bridge: continuous relaxation via a Lie-algebra GEVP}
\label{sec:dadcad}

Blind group matching across a large library is combinatorial in the library size. A continuous relaxation reduces the problem to a polynomial-time eigenvalue computation on a finite-dimensional Lie algebra, which we call the \emph{DAD--CAD bridge}, abbreviating \emph{Discrete Algebraic Diversity} (the discrete group-library formulation of Sections~\ref{sec:psi}--\ref{sec:dcv}) and \emph{Continuous Algebraic Diversity} (the continuous formulation we now develop on directions in the Lie algebra $\mathfrak{u}(M)$). The relaxation and its optimality and complexity properties are developed in full in a companion paper~\cite{thornton2026gevp}; we summarize here the elements needed for the framework. The bridge is the passage from the discrete library to the continuous parameter space: discrete algebraic diversity (selection from a discrete group library) relaxes to continuous algebraic diversity in the sense of optimization over directions in $\mathfrak{u}(M)$, the Lie algebra of skew-Hermitian matrices on $\mathbb{C}^M$. The qualifier \emph{finite-dimensional} is important: the relaxation here is a continuous parameter space attached to a fixed-dimensional ambient space, distinct from extensions of AD to Lie groups acting on infinite-dimensional function spaces such as $L^2(\mathbb{R})$~\cite{thornton2026continuous}, which lie outside the scope of this paper.

Replace the discrete commutativity residual $\delta(G, \mathbf{R}) = \max_{g \in G} \|\pi_g \mathbf{R} - \mathbf{R} \pi_g\|_F / \|\mathbf{R}\|_F$ (see~\cite{thornton2026ad} for definition and properties) by its continuous analog
\begin{equation}\label{eq:delta_cont}
\delta(A, \mathbf{R}) \;=\; \frac{\|[A, \mathbf{R}]\|_F}{\|A\|_F \, \|\mathbf{R}\|_F}, \quad A \in \mathfrak{u}(M).
\end{equation}
Minimizing $\delta(A, \mathbf{R})$ over $A$ in a finite-dimensional subspace $\mathcal{B} = \mathrm{span}\{B_1, \ldots, B_d\} \subseteq \mathfrak{u}(M)$ is equivalent to a generalized eigenvalue problem (GEVP; an eigenvalue problem of the form $\mathbf{K} \mathbf{v} = \lambda \mathbf{M} \mathbf{v}$ with two matrices $\mathbf{K}$ and $\mathbf{M}$, reducing to the ordinary eigenvalue problem when $\mathbf{M} = \mathbf{I}$) involving the double-commutator superoperator $\mathrm{ad}^2_{\mathbf{R}}(A) = [\mathbf{R}, [\mathbf{R}, A]]$.

\begin{theorem}[Double-commutator GEVP]\label{thm:dcgevp}
Let $\mathbf{R}$ be Hermitian positive definite, and let $\mathcal{B} = \mathrm{span}\{B_1, \ldots, B_d\} \subseteq \mathfrak{u}(M)$ be a finite-dimensional subspace of skew-Hermitian matrices. The element $A^* \in \mathcal{B}$ minimizing $\|[\mathbf{R}, A^*]\|_F / \|A^*\|_F$ is the generalized eigenvector of smallest eigenvalue in
\begin{equation}
\mathbf{M} \mathbf{c} \;=\; \lambda\, \mathbf{G} \mathbf{c},
\end{equation}
with $M_{ij} = \Tr\bigl(B_i^H [\mathbf{R}, [\mathbf{R}, B_j]]\bigr)$ and $G_{ij} = \Tr(B_i^H B_j)$. Solution cost is $O(d^2 M^2 + d^3)$.
\end{theorem}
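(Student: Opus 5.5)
The plan is to reduce the minimization to a Rayleigh quotient in the coefficient vector and then invoke the Courant--Fischer characterization for a Hermitian definite pencil. Parametrize $A = \sum_j c_j B_j$. A subtlety is the field. $\mathfrak{u}(M)$ is a real Lie algebra, so the natural parameters are $\mathbf{c} \in \mathbb{R}^d$. However, the quadratic forms below are Hermitian, and I will treat $\mathbf{c}\in\mathbb{C}^d$, working with the complex span. I will check at the end that the real and imaginary parts separate: the matrices $M$ and $G$ turn out to be real symmetric whenever each $B_j$ is skew-Hermitian and $\mathbf{R}$ is Hermitian. In that case a real minimizing eigenvector exists and the two formulations agree.

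First, the numerator. By linearity of the commutator, $[\mathbf{R},A] = \sum_j c_j [\mathbf{R},B_j]$, so
\[
\|[\mathbf{R},A]\|_F^2 = \sum_{i,j} \bar c_i c_j \Tr\bigl([\mathbf{R},B_i]^H[\mathbf{R},B_j]\bigr).
\]
Since $\mathbf{R}$ is Hermitian, the adjoint of $\mathrm{ad}_{\mathbf{R}}$ under the Hilbert--Schmidt inner product is $\mathrm{ad}_{\mathbf{R}}$ itself. Concretely, one checks $[\mathbf{R},B_i]^H = [B_i^H,\mathbf{R}]$ and then uses cyclicity of the trace to get
\[
\Tr\bigl([\mathbf{R},B_i]^H[\mathbf{R},B_j]\bigr) = \Tr\bigl(B_i^H[\mathbf{R},[\mathbf{R},B_j]]\bigr) = M_{ij}.
\]
Hence $\|[\mathbf{R},A]\|_F^2 = \mathbf{c}^H M\mathbf{c}$, and the same identity shows $M$ is Hermitian positive semidefinite. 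Likewise $\|A\|_F^2 = \mathbf{c}^H G\mathbf{c}$, where $G$ is the Gram matrix of the $B_j$.

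Next, definiteness. I will assume, as is implicit in the statement, that $B_1,\dots,B_d$ are linearly independent. If they are not, I first replace them with a basis; that is the only step needing care, since otherwise the pencil is singular. Under this assumption $G$ is positive definite. The squared objective is then the generalized Rayleigh quotient $\mathbf{c}^H M\mathbf{c}/\mathbf{c}^H G\mathbf{c}$. With the Cholesky factorization $G = LL^H$ and the change of variables $\mathbf{y} = L^H\mathbf{c}$, the quotient becomes the ordinary Rayleigh quotient of the Hermitian matrix $L^{-1}ML^{-H}$. By Courant--Fischer its minimum is the smallest eigenvalue $\lambda_{\min}$, attained at the corresponding eigenvector. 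Mapping back, $A^*$ is the generalized eigenvector of smallest eigenvalue of $M\mathbf{c}=\lambda G\mathbf{c}$, and the minimal ratio is $\sqrt{\lambda_{\min}}$. Minimizing the ratio and minimizing its square are equivalent because the square is monotone on nonnegatives.

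Finally, the complexity count. Each $[\mathbf{R},[\mathbf{R},B_j]]$ is computed by matrix products, and each entry $M_{ij}$ and $G_{ij}$ is an $O(M^2)$ trace inner product, so forming all $d^2$ entries costs $O(d^2M^2)$. The dense symmetric-definite GEVP then costs $O(d^3)$. The main obstacle I anticipate is precisely this accounting. Computing the $d$ double commutators naively costs $O(dM^3)$, which is not dominated by $O(d^2M^2)$ unless $d \gtrsim M$. I would handle this by noting the regime $d\ge M$ of interest, or alternatively by assuming sparse basis elements such as permutation generators or elementary skew-Hermitian matrices. For those bases $\mathrm{ad}_{\mathbf{R}}B_j$ costs only $O(M^2)$, and the stated bound then holds as written.
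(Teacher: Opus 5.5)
Your proposal is correct and follows the same route as the paper's sketch: expand $\|[\mathbf{R},A]\|_F^2$ by trace cyclicity into the quadratic form $\mathbf{c}^H\mathbf{M}\mathbf{c}$, normalize by the Gram form $\mathbf{c}^H\mathbf{G}\mathbf{c}$, and take the bottom generalized eigenvector (the paper defers full details to a companion paper); your treatment of real coefficients and of linear independence of the $B_j$ (needed for a regular pencil) fills in points the paper leaves implicit. Your complexity caveat is also well-founded, since the stated $O(d^2M^2+d^3)$ holds only if each commutator $[\mathbf{R},B_j]$ costs $O(M^2)$ or if $d \gtrsim M$, and the paper's own per-iteration count for the Sequential GEVP, $O(d^2M^3+d^3+|G_k|^2M^2)$, likewise includes an $M^3$ term.
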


The theorem converts the combinatorial group-library search into a standard linear-algebraic operation. The double-commutator form arises because the matrices involved are Hermitian, so the quartic cost in $A$ reduces, through the Frobenius norm's cyclic trace property and the skew-Hermiticity of $A$, to the quadratic form above. This is the specific technical advantage that makes the continuous relaxation tractable.

A closed-form expression underwrites the entire bridge:

\begin{lemma}[Commutator decomposition in $\mathbf{R}$'s eigenbasis]\label{lem:commdecomp}
For any $\mathbf{A} \in \mathbb{C}^{M \times M}$ and any Hermitian $\mathbf{R}$ with eigendecomposition $\mathbf{R} = \mathbf{U} \diag(\lambda_1, \ldots, \lambda_M) \mathbf{U}^H$, with $\mathbf{A}$ expressed in $\mathbf{R}$'s eigenbasis as $\tilde{\mathbf{A}} = \mathbf{U}^H \mathbf{A} \mathbf{U}$,
\begin{equation}\label{eq:commdecomp}
[\mathbf{A}, \mathbf{R}]_{jk} = (\lambda_k - \lambda_j) \tilde{A}_{jk}, \qquad \|[\mathbf{A}, \mathbf{R}]\|_F^2 = \sum_{j \neq k} (\lambda_j - \lambda_k)^2 |\tilde{A}_{jk}|^2.
\end{equation}
\end{lemma}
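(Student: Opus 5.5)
The plan is to conjugate the commutator into $\mathbf{R}$'s eigenbasis and read off both identities entrywise. Write $\boldsymbol{\Lambda} = \diag(\lambda_1, \ldots, \lambda_M)$, so that $\mathbf{R} = \mathbf{U}\boldsymbol{\Lambda}\mathbf{U}^H$ with $\mathbf{U}$ unitary; the eigenvalues are real because $\mathbf{R}$ is Hermitian. Then
\[
\mathbf{U}^H [\mathbf{A}, \mathbf{R}] \mathbf{U} \;=\; \tilde{\mathbf{A}}\boldsymbol{\Lambda} - \boldsymbol{\Lambda}\tilde{\mathbf{A}} \;=\; [\tilde{\mathbf{A}}, \boldsymbol{\Lambda}].
\]
This follows from $\mathbf{U}\mathbf{U}^H = \mathbf{I}$: insert it between $\mathbf{A}$ and $\mathbf{R}$ in each product.

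Next I will compute the entries of this transformed commutator. Right multiplication by $\boldsymbol{\Lambda}$ scales column $k$ by $\lambda_k$, and left multiplication scales row $j$ by $\lambda_j$. Hence
\[
(\tilde{\mathbf{A}}\boldsymbol{\Lambda} - \boldsymbol{\Lambda}\tilde{\mathbf{A}})_{jk} \;=\; (\lambda_k - \lambda_j)\tilde{A}_{jk}.
\]
This is the first identity of \eqref{eq:commdecomp}, with the indices understood as referring to the commutator expressed in $\mathbf{R}$'s eigenbasis. That basis is the one the statement fixes through $\tilde{\mathbf{A}}$. I will state this reading explicitly, since in the original basis the entries of $[\mathbf{A}, \mathbf{R}]$ are not of this form unless $\mathbf{U} = \mathbf{I}$. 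This interpretive point is the only real obstacle; everything else is mechanical.

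For the norm identity I will use unitary invariance of the Frobenius norm, $\|\mathbf{U}^H \mathbf{X} \mathbf{U}\|_F = \|\mathbf{X}\|_F$. Summing the squared moduli of the entries computed above gives
\[
\|[\mathbf{A}, \mathbf{R}]\|_F^2 \;=\; \sum_{j,k} (\lambda_k - \lambda_j)^2 |\tilde{A}_{jk}|^2.
\]
The diagonal terms $j = k$ vanish, which leaves exactly the sum over $j \neq k$ in \eqref{eq:commdecomp}. Realness of the $\lambda$'s is what allows $|\lambda_k - \lambda_j|^2$ to be written as $(\lambda_j - \lambda_k)^2$.

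Finally, I will note the consequence used later in the paper. The commutator norm vanishes if and only if $\tilde{A}_{jk} = 0$ whenever $\lambda_j \neq \lambda_k$, that is, if and only if $\tilde{\mathbf{A}}$ is block diagonal on the eigenspaces of $\mathbf{R}$.
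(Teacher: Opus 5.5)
Your proof is correct and is the direct computation the lemma rests on: conjugating gives $\mathbf{U}^H[\mathbf{A},\mathbf{R}]\mathbf{U} = [\tilde{\mathbf{A}},\boldsymbol{\Lambda}]$ with $(j,k)$ entry $(\lambda_k-\lambda_j)\tilde{A}_{jk}$, and unitary invariance of the Frobenius norm then yields the sum over $j \neq k$; the paper gives no written proof and treats the lemma as this immediate calculation. Your caveat is also correct and worth keeping: the entrywise identity holds for the commutator expressed in $\mathbf{R}$'s eigenbasis, not for the entries of $[\mathbf{A},\mathbf{R}]$ in the original basis, and the statement should be read that way.
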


The lemma states that the commutator depends only on the off-diagonal elements of $\mathbf{A}$ in $\mathbf{R}$'s eigenbasis, weighted by eigenvalue gaps. The diagonal elements of $\tilde{\mathbf{A}}$ are invisible to the commutator. Two specializations follow immediately. For a permutation matrix $\mathbf{P}_\sigma$, the eigenvalue-gap weights collapse onto a permutation of indices and
\begin{equation}\label{eq:permcomm}
\|[\mathbf{P}_\sigma, \mathbf{R}]\|_F^2 \;=\; \sum_{k=1}^M \bigl(\lambda_k - \lambda_{\sigma(k)}\bigr)^2,
\end{equation}
the \emph{eigenvalue-difference formula} for the permutation commutativity residual. For graph-diffusion covariance $\mathbf{R} = f(\mathbf{L})$ with $f$ injective on the spectrum, equation~\eqref{eq:permcomm} is exactly zero if and only if $\sigma$ is a graph automorphism, so the double-commutator GEVP is an exact algebraic oracle for graph symmetry detection. For a general skew-Hermitian generator $A \in \mathfrak{u}(M)$, equation~\eqref{eq:commdecomp} expresses $\delta(A, \mathbf{R})$ as a weighted off-diagonal energy on $\tilde{A}$, which is the form used in the GEVP construction of Theorem~\ref{thm:dcgevp}.

\subsection{The transform manifold revisited}
\label{sec:manifold_revisit}

The continuous relaxation is the natural setting in which the transform manifold of Section~\ref{sec:manifold} lives. The skew-Hermitian Lie algebra $\mathfrak{u}(M)$ is the tangent space to the unitary group $U(M)$ at the identity; distinguished directions in $\mathfrak{u}(M)$ correspond to specific transforms via exponentiation. The Fourier basis corresponds to the cyclic-shift generator, the DCT basis corresponds to a reflection-augmented cyclic-shift generator, and the KL transform corresponds, through the Cayley graph of $S_M$, to a generic direction tied to the signal's covariance structure. The DAD--CAD bridge is the passage from a discrete group library to a continuous search on this manifold, and the double-commutator GEVP is the tractable finite-dimensional restriction of that continuous search.

\subsection{Non-Abelian extension: the Sequential GEVP and the multi-generator problem}
\label{sec:seqgevp}

A single minimizing direction $A^*$ identifies a single generator. Non-Abelian groups, generated by multiple non-commuting elements, require a sequential procedure. The naive joint optimization over $r$-tuples of generators is hard because the group presentation (which relations the generators must satisfy) is itself part of the unknown: one cannot impose group relations as constraints when the target group is not yet identified.

The \emph{Sequential GEVP with group-theoretic deflation} addresses the multi-generator problem by constructing the group one generator at a time. At each step it solves the GEVP of Theorem~\ref{thm:dcgevp} on a deflated basis, rounds the resulting $A^*$ to the nearest permutation matrix $P^*$ via linear assignment (the Hungarian algorithm), tests whether $\delta(P^*, \mathbf{R})$ is below a threshold $\tau$, and if so extends the accumulated subgroup $G \leftarrow \langle G, P^* \rangle$ and deflates the candidate basis against the span of permutation matrices in $G$ before re-solving. The procedure terminates when the rejection test fires or the deflated basis is empty. Four properties of the procedure can be established: every accepted permutation produces a strictly larger subgroup (forward progress); the order of the discovered subgroup at least doubles per accepted iteration (strict subgroup growth); the procedure terminates in at most
\begin{equation}
K \;\leq\; \lceil \log_2 |G_K| \rceil \;\leq\; \lceil \log_2 M! \rceil \;=\; O(M \log M)
\label{eq:iterbound}
\end{equation}
accepted iterations; and at threshold $\tau = 0$ the discovered subgroup satisfies
\begin{equation}
G_K \;\subseteq\; \Aut(\mathbf{R}),
\label{eq:gen-conv}
\end{equation}
where $\Aut(\mathbf{R}) := \{\sigma \in S_M : \mathbf{P}_\sigma \mathbf{R} = \mathbf{R} \mathbf{P}_\sigma\}$. The cost per iteration is dominated by the GEVP solve and the deflation update at $O(d^2 M^3 + d^3 + |G_k|^2 M^2)$, giving worst-case total cost polynomial in $d$, $M$, and $|G_K|$.

The inclusion~\eqref{eq:gen-conv} is one-sided: it guarantees that every accepted permutation is a genuine automorphism, but it does \emph{not} guarantee that $G_K$ recovers all of $\Aut(\mathbf{R})$. The gap arises because the rounding step operates on the deflation residual $A^*$ rather than on a candidate permutation matrix directly. When $\Aut(\mathbf{R}) = S_M$, every rounded permutation lies in $\Aut$ and the rejection test never fires; the procedure recovers all of $\Aut(\mathbf{R})$ trivially. When $\Aut(\mathbf{R})$ is a proper subgroup of $S_M$, the residual after deflating against a previously discovered subgroup $G_k$ may round to a permutation outside $\Aut$, terminating the procedure prematurely. A representative example is $\mathbf{R} = (\mathbf{I} + \mathbf{L}_{C_6})^{-1}$ with $\Aut(\mathbf{R}) = D_6$ (order $12$): with the lean generating basis $\{\mathbf{P}_\tau - \mathbf{I}, \mathbf{P}_\rho - \mathbf{I}\}$ where $\tau$ is the cyclic shift and $\rho$ a reflection, the plain procedure, which examines only the single smallest-eigenvalue direction at each step, accepts $\tau$ and then stalls, terminating at $G_K = \langle\tau\rangle \subsetneq D_6$. The stall is an artifact of examining one direction rather than the candidate span: the reflections that complete $D_6$ are reachable by rounding other directions in $\mathrm{span}\{\mathbf{P}_\tau - \mathbf{I}, \mathbf{P}_\rho - \mathbf{I}\}$, and a search-augmented step that, when the smallest direction rounds to a non-automorphism, searches the span for any direction whose rounding is a genuine automorphism extending the current subgroup, recovers the full order-twelve $D_6$. This is the same move that resolves the degenerate case below: let the group test, rather than the eigenvalue ordering alone, select the generator. The recovery is partial in the precise sense that it succeeds whenever the candidate span contains directions rounding to a generating set of $\Aut(\mathbf{R})$; a span too lean to reach a generating set, for example $\mathrm{span}\{\mathbf{P}_\tau - \mathbf{I}\}$ alone, recovers only $\langle\tau\rangle$, and no search over that span can do better. The remaining open part is therefore the a~priori selection of a sufficiently rich basis, not the recovery procedure given one.

The status of the multi-generator problem under the Sequential GEVP is summarized in Table~\ref{tab:gap2}.

\begin{table}[t]
\centering
\caption{Status of the multi-generator problem (Gap~2) for non-Abelian blind group matching, under the Sequential GEVP with group-theoretic deflation. The procedure always returns a subgroup of $\Aut(\mathbf{R})$ (soundness); whether the returned subgroup equals $\Aut(\mathbf{R})$ (completeness) depends on the case.}
\label{tab:gap2}
\renewcommand{\arraystretch}{1.15}
\setlength{\tabcolsep}{4pt}
\footnotesize
\begin{tabular}{@{}p{0.40\columnwidth}ll@{}}
\toprule
\textbf{Case} & \textbf{Status} & \textbf{Mechanism} \\
\midrule
Abelian & Closed & Direct GEVP \\
$\Aut(\mathbf{R}) = S_M$ & Closed & Seq.\ GEVP, exact \\
$\Aut(\mathbf{R}) \subsetneq S_M$, sep.\ $\delta$ & Partial & Span search \\
General, degenerate $\delta$ & Partial & Null-space tiebreaker \\
\bottomrule
\end{tabular}
\end{table}

The Sequential GEVP is therefore \emph{sound} (every accepted permutation is in $\Aut(\mathbf{R})$) but, in its plain single-direction form, not always \emph{complete} (the recovered subgroup may be proper in $\Aut(\mathbf{R})$). Two completeness gaps were noted, and both are partially resolved by the same principle: replace the single smallest-eigenvalue direction with a search over the relevant subspace, letting the group test select the generator. The first is the proper-subgroup case with well-separated $\delta$, treated above: span search recovers the full group whenever the candidate basis spans a generating set, leaving only the a~priori basis-richness question open. The second is the degenerate-spectrum case, where the smallest GEVP eigenvalue has high multiplicity and no single rounded direction is distinguished. When the degeneracy is \emph{removable}, the true generators lie within the degenerate null space, and a tiebreaker that searches the null space for directions whose rounded permutation has small symmetry defect recovers them. On a representative non-Abelian instance ($\Aut(\mathbf{R}) = D_4$ on $M = 8$, with a double-commutator null space of dimension six), the null-space defect tiebreaker recovers the full order-eight group where rounding the single smallest eigenvector alone does not. What remains open in the degenerate case is the genuinely \emph{intrinsic} sub-case, in which distinct generating sets are indistinguishable at second order, so the question is one of convention (which subgroup of $\Aut(\mathbf{R})$ to designate the matched group) rather than of recovery, together with the efficient enumeration of a high-dimensional null space. We summarize the practical state as a working hypothesis subject to these open conditions:

\begin{conjecture}[Practical universality of the Sequential GEVP, with basis-richness caveat]\label{conj:practical}
For a broad class of Hermitian positive-definite $\mathbf{R}$ arising from physical signal models (periodic, chirp, autoregressive, graph-filtered, array-manifold, and the like), there exists a basis $\mathcal{B}$ whose span reaches a generating set of $\Aut(\mathbf{R})$, and for any such basis the span-search form of the Sequential GEVP recovers $\Aut(\mathbf{R})$ in full. Identifying or characterizing such a basis from $\mathbf{R}$ alone is the remaining basis-richness open problem; given a sufficiently rich basis, the recovery procedure is no longer the obstruction.
\end{conjecture}

The conjecture is narrower than the universal claim made in earlier versions of this paper, which asserted that the plain single-direction procedure recovers $\Aut(\mathbf{R})$ in full. The $C_6$ example shows that the plain procedure does not: it stalls at $\langle\tau\rangle$. The span-search form does recover the full $D_6$ from a basis whose span reaches the reflections, so the recovery procedure is settled; what the conjecture leaves open is only the a~priori characterization of which bases are rich enough. This qualified statement is consistent with the empirical evidence available.

\subsection{Current methodology in summary}
\label{sec:methodology}

At the highest level, the methodology has two stages: \emph{library discovery}, in which candidate matched groups are proposed directly from the data rather than supplied by hand, followed by a \emph{two-tier selection} that chooses among the candidates and rejects spurious matches. Discovery is the role of the double-commutator GEVP of Theorem~\ref{thm:dcgevp} and its sequential extension: the continuous relaxation returns one or more generators whose closure proposes groups, so that even a practitioner with no prior model obtains a short list of structured candidates instead of searching the full subgroup lattice. Selection is then two-tiered. The first tier is a fast prefilter on effective dimension: a candidate $G$ whose group-averaged estimator $\hat{\mathbf{R}}_G$ has a participation ratio inconsistent with the data, $D_2(\hat{\mathbf{R}}_G)$ far from $D_2(\hat{\mathbf{R}})$, is screened out before any expensive test. The second tier scores the survivors for consistency, by the cross-validation residual $D_{CV}$ of~\eqref{eq:dcv} against the no-symmetry baseline, with the $\hat{\kappa}$ trajectory as confirmation. Discovery proposes; the two tiers dispose. The steps below instantiate this two-stage architecture in the order they are applied.
\begin{enumerate}[label=(\arabic*)]
\item \textbf{Structure check via $\alpha$.} Compute the coloring index $\alpha(\mathbf{R}) = \|\mathbf{R} - \bar{q} \mathbf{I}\|_F / \|\mathbf{R}\|_F$, with $\bar{q} = \Tr(\mathbf{R})/M$ (see~\cite{thornton2026ad} for properties and bounds). If $\alpha \approx 0$, the signal is white and no group offers an advantage over the trivial group; stop.
\item \textbf{Continuous search via DCG EVP.} Apply the double-commutator GEVP of Theorem~\ref{thm:dcgevp} on a candidate subspace $\mathcal{B} \subseteq \mathfrak{u}(M)$ to produce one or more candidate generators.
\item \textbf{Group assembly via sequential deflation.} When multiple generators are needed, iterate the sequential GEVP of Section~\ref{sec:seqgevp} to construct $G_K \subseteq \Aut(\mathbf{R})$ one generator at a time, using the span-search form (searching the candidate span, not only the smallest direction, for a defect-zero extension) so that the procedure recovers the full $\Aut(\mathbf{R})$ whenever the candidate basis spans a generating set. The returned subgroup is always sound (every accepted permutation is in $\Aut(\mathbf{R})$); it equals $\Aut(\mathbf{R})$ when the basis is rich enough and is otherwise a proper subgroup.
\item \textbf{Consistency validation via $D_{CV}$.} Use the cross-validation criterion of~(\ref{eq:dcv}) on at least two snapshots to confirm that the assembled group is consistent with the data. For Abelian groups, an exhaustive partition-enumeration search over all Abelian groups of the relevant order is feasible and takes milliseconds.
\item \textbf{Verification via $\kappa$ trajectory.} Plot the structural capacity trajectory $\hat{\kappa}(L)$ as $L$ increases; a stable, monotone trajectory confirms the match.
\end{enumerate}

The methodology covers the library-free Abelian case (step 4 via partition enumeration), the structured-library non-Abelian case under the soundness guarantee of Section~\ref{sec:seqgevp} (steps 2 and 3 with sequential deflation, returning a subgroup of $\Aut(\mathbf{R})$), and the off-grid continuous parameter case (via conjugation sweeps tracked by $\hat{\psi}$, which is valid within an orbit class even though it is biased across orbit classes).

\section{Blind Group Matching Enables AD for Blind Signal Processing}
\label{sec:blindproblems}

Blind group matching is the AD counterpart of the parameter-identification step in many blind signal processing problems. Once a mechanism exists to identify the signal's algebraic symmetry from data, the reach of the AD toolkit extends from measurement (where the group acts on the observation vector) to the broader class of blind and adaptive problems (where the group acts on the output of a cost functional). This section develops that extension in detail for blind equalization, then briefly indicates a set of further blind problems covered by the same treatment, with detailed development of each left to forthcoming companion papers.

Blind signal processing problems, from blind equalization through blind source separation, carrier recovery, and timing recovery, share a common mathematical shape. In each, some unknown quantity (a channel impulse response, a mixing matrix, a carrier phase, a timing offset) must be estimated from the received signal alone, without access to pilot symbols, training sequences, or other auxiliary side information. The standard approach is to formulate a \emph{cost functional} $J(\mathbf{w})$ defined on the parameter vector $\mathbf{w}$ of an adaptive processor (equalizer, demixer, phase-locked loop, timing recovery loop), whose value is intended to be minimized when the processor has converged to a correct solution. The cost is chosen to exploit some known structural property of the transmitted signal class that survives the unknown propagation path: constant envelope of a digital-communication constellation, non-Gaussianity of the source symbols, cyclostationarity of the modulation format, and so on. Stochastic gradient descent on $J(\mathbf{w})$ is then used to drive $\mathbf{w}$ toward a minimum of the cost, and the hope is that the converged $\mathbf{w}$ inverts the unknown distortion.

The blind group matching problem studied in Section~\ref{sec:blindmatching} is no exception to this pattern: the cross-validation criterion $D_{CV}(G)$ plays the role of the cost functional, with the candidate group $G$ playing the role of the parameter, and minimization over a library (or, in the continuous relaxation, over directions in $\mathfrak{u}(M)$) is the adaptive step. The symmetry properties of $D_{CV}$ relative to the signal's symmetries determined which library entries were recoverable from data and which were not. The same question can be asked of a blind adaptive processor in general: what symmetry does its cost functional possess, and how does that symmetry relate to the symmetry of the signal class it operates on? This is the content of the cost-symmetry matching principle that follows.

\subsection{The cost-symmetry matching principle}
\label{sec:costsym}

Let $J(\mathbf{w})$ be a real-valued cost functional depending on an adaptive filter tap vector $\mathbf{w}$ through a scalar or vector output $y = \mathbf{w}^\top \mathbf{x}$. Define the \emph{cost invariance group}
\begin{equation}
G_{\mathrm{cost}} \;=\; \bigl\{\, g \;:\; J(y) = J(g \cdot y) \text{ for all } y \,\bigr\},
\end{equation}
the group of output transformations under which $J$ is invariant. Let $\mathcal{A}$ be the target alphabet and $G_{\mathrm{sig}}$ its natural automorphism group.

\begin{proposition}[Cost-Symmetry Matching]\label{prop:costsym}
Under stochastic gradient descent on $J$, at any converged minimum the filter output is statistically equivalent to $\mathcal{A}$ up to a transformation by some $g^* \in G_{\mathrm{cost}}$. Three cases arise:
\begin{enumerate}[label=(\roman*)]
\item If $G_{\mathrm{cost}} = G_{\mathrm{sig}}$, then $g^* \in G_{\mathrm{sig}}$ and the residual ambiguity is exactly the natural alphabet ambiguity.
\item If $G_{\mathrm{cost}} \supsetneq G_{\mathrm{sig}}$, then $g^*$ is distributed on the coset space $G_{\mathrm{cost}}/G_{\mathrm{sig}}$; the residual ambiguity exceeds the natural ambiguity by this coset.
\item If $G_{\mathrm{cost}} \subsetneq G_{\mathrm{sig}}$, the cost breaks symmetries the alphabet preserves, giving a biased estimator at steady state.
\end{enumerate}
\end{proposition}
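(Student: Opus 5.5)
Our plan is to prove Proposition~\ref{prop:costsym} in three steps: transfer the cost invariance to the minimizer set, characterize that set as an orbit, and then read off the three cases by comparing subgroups.

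\textbf{Step 1: invariance of the minimizer set.} We first record the structural fact that does most of the work. If $J(y) = J(g\cdot y)$ for every $y$, then the expected cost $\bar J(\mathbf{w}) = \mathbb{E}[J(\mathbf{w}^\top\mathbf{x})]$ is constant along $G_{\mathrm{cost}}$-orbits of the output distribution. So whenever an output law $P_y$ minimizes $\bar J$, every pushforward $g_\# P_y$ with $g\in G_{\mathrm{cost}}$ is also a minimizer. Under the standard zero-forcing reachability assumption (the equalizer class can realize any output transformation induced by an element of $G_{\mathrm{cost}}$ acting on the ideal output), the set of minimizing output laws is therefore a union of $G_{\mathrm{cost}}$-orbits.

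\textbf{Step 2: the minimizer set is one orbit, with stabilizer $G_{\mathrm{sig}}$.} We take as the defining hypothesis on $J$ that it is \emph{consistent for $\mathcal{A}$}: the law of $\mathcal{A}$ attains the global minimum. We then aim to show that the minimizing laws form exactly the single orbit $G_{\mathrm{cost}}\cdot P_{\mathcal{A}}$. This is the step we expect to be the main obstacle, because it is a uniqueness statement about stochastic gradient descent limits (no spurious local minima). We do not plan to derive it from the paper's framework. For CMA it follows from the Godard/Foschini global-convergence analysis for sub-Gaussian constellations with doubly infinite equalizers, and we will state it as an assumption of the proposition in that form. The stochastic gradient descent limit is then some $g^*_\# P_{\mathcal{A}}$ with $g^*\in G_{\mathrm{cost}}$, which is the opening claim. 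Because $G_{\mathrm{sig}}$ is the stabilizer of $P_{\mathcal{A}}$, the orbit-stabilizer correspondence identifies the distinct limits with $G_{\mathrm{cost}}/G_{\mathrm{sig}}$.

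\textbf{Step 3: the three cases.} Cases (i) and (ii) follow directly from Step 2. In case (i) the coset space is trivial, so every limit is statistically identical to $\mathcal{A}$, and the only remaining ambiguity is the natural labeling ambiguity inside $G_{\mathrm{sig}}$. In case (ii) the coset is nontrivial, and the initialization together with the gradient noise selects a coset representative. The distribution of $g^*$ on that coset is set by the basin of attraction. For CMA, where $G_{\mathrm{cost}}=U(1)$ and $G_{\mathrm{sig}}=\mathbb{Z}_4$ for QAM, the coset is a circle of residual phases modulo $90^\circ$. A uniform distribution on that coset is what produces the $45^\circ/\sqrt3$ prediction, so we will present that as a corollary rather than as part of the proof.

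For case (iii) we argue by contraposition. Suppose some $h\in G_{\mathrm{sig}}\setminus G_{\mathrm{cost}}$ exists. Then $P_{\mathcal{A}} = h_\# P_{\mathcal{A}}$, but $J$ is not $h$-invariant, so $J$ distinguishes outputs that the alphabet treats as equivalent. We will exhibit a perturbation direction along the $h$-symmetrized family of output laws on which the first variation of $\bar J$ at $P_{\mathcal{A}}$ is nonzero. Symmetrizing the gradient over $\langle h\rangle$ shows that the stationary point is displaced from the $G_{\mathrm{sig}}$-invariant target, which is the claimed steady-state bias. This gives bias in the generic sense only: a non-invariant $J$ could still have an invariant minimizer. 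We will therefore phrase case (iii) as holding for generic $J$, under the nondegeneracy condition that this first variation does not vanish.
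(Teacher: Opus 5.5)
Your argument is essentially the paper's: invariance of $J$ under $G_{\mathrm{cost}}$ makes the whole $G_{\mathrm{cost}}$-orbit of the zero-forcing solution a set of global minima. Quotienting by $G_{\mathrm{sig}}$, the stabilizer of the alphabet, then gives the coset space $G_{\mathrm{cost}}/G_{\mathrm{sig}}$ in cases (i)--(ii), and (iii) comes from an element of $G_{\mathrm{sig}}\setminus G_{\mathrm{cost}}$ that fixes the alphabet but not the cost. The paper simply asserts that the minimizer set \emph{equals} that orbit (equivariance only gives containment) and that (iii) always produces bias. Your reachability, single-orbit and generic-nondegeneracy hypotheses therefore make explicit exactly what its proof leaves unstated. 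The one thing to tidy is that in (iii) your Step~2 consistency hypothesis must be dropped: a nonvanishing first variation of $\bar J$ at $P_{\mathcal{A}}$ along a two-sided direction is incompatible with $P_{\mathcal{A}}$ being a global minimizer, so there the bias is in effect a hypothesis rather than a derived conclusion.
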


Proposition~\ref{prop:costsym} is the cost-function counterpart of the commutativity condition $[\pi_g, \mathbf{R}] = 0$ of the measurement-domain AD framework. The proof follows from direct equivariance. If $\mathbf{w}^\star$ is a global minimum of $J$ coinciding with the zero-forcing equalizer (so that the output matches the transmitted alphabet up to delay), then for any $g \in G_{\mathrm{cost}}$ one has $J(g \cdot \mathbf{w}^\star) = J(\mathbf{w}^\star)$ by definition of $G_{\mathrm{cost}}$, so $g \cdot \mathbf{w}^\star$ is also a global minimum. The set of global minima is therefore the $G_{\mathrm{cost}}$-orbit of $\mathbf{w}^\star$. In case (i), that orbit consists of copies of the transmitted alphabet related by elements of $G_{\mathrm{sig}}$ alone, and no information is lost. In case (ii), the orbit also contains points that do not correspond to any $G_{\mathrm{sig}}$-relabelling of the alphabet but are rotated copies of it distinguishable only by external reference; the distinct orbit points modulo the $G_{\mathrm{sig}}$-orbit are parameterized by the coset space $G_{\mathrm{cost}}/G_{\mathrm{sig}}$, which is a continuum when $G_{\mathrm{cost}}$ is a continuous group. In case (iii), the minima of $J$ do not form a $G_{\mathrm{sig}}$-orbit; each element of $G_{\mathrm{sig}} \setminus G_{\mathrm{cost}}$ maps the alphabet to itself but does not preserve $J$, producing a bias that depends on which symbols the cost emphasizes.

\subsection{Blind equalization: CMA, MMA, and AD-$\mathbb{Z}_M$}
\label{sec:blindeq}

Blind equalization is a canonical blind signal processing problem in digital communications. A digital symbol stream $\{s[n]\}$ drawn from a known finite constellation (BPSK, QPSK, $M$-PSK, 16-QAM, and the like) is transmitted through a linear channel with unknown impulse response $h[n]$, producing a received signal $r[n] = (h * s)[n] + w[n]$, where $w[n]$ is additive noise. The receiver's task is to recover an estimate of the symbol stream by designing an equalizing filter $\mathbf{w}$ whose output $y[n] = (\mathbf{w} * r)[n]$ approximates $s[n-d]$ for some delay $d$. When a training or pilot sequence is available, the equalizer can be estimated directly by least-squares methods; when no such auxiliary information is provided, the problem becomes blind: the equalizer must be designed from the received signal alone, relying only on structural properties of the constellation that survive the channel.

The problem has a long history. Sato~\cite{sato1975} introduced the first blind equalization algorithm in 1975, exploiting the sign structure of the transmitted signal to derive an adaptive update without training. Godard~\cite{godard1980} generalized this approach in 1980 to the constant modulus class, producing the Constant Modulus Algorithm (CMA) that remains the most widely deployed blind equalization method in practice. Treichler and Agee~\cite{treichler1983} independently reformulated the same cost for the signal processing community in 1983. Since then, blind equalization has developed into a substantial literature addressing decision-directed recovery, fractionally-spaced equalization, the Bussgang class of algorithms~\cite{bellini1986}, higher-order-statistics methods~\cite{shalvi1990}, and the Multi-Modulus Algorithm (MMA) of Yang, Werner, and Dumont~\cite{yang2002} that addresses the failure of CMA on square QAM constellations. The problem is important across the full range of wireless and wired communication systems: modern cellular (3GPP, 5G NR), digital broadcast (DVB, ATSC), cable and fiber modems, underwater acoustic links, and any other system in which channel memory is long enough to require equalization but overhead constraints or broadcast-protocol limitations preclude dedicated training.

We restate the three main blind equalization algorithms (CMA, MMA, and an AD-matched cost to be introduced below) in the language of the cost-symmetry matching principle of Proposition~\ref{prop:costsym}. In each case, the key question is the relationship between the cost symmetry group $G_{\mathrm{cost}}$ and the constellation symmetry group $G_{\mathrm{sig}}$. The resulting analysis reproduces the classical CMA failure modes (residual phase ambiguity on $M$-PSK, excess misadjustment on QAM) as direct consequences of $G_{\mathrm{cost}} \supsetneq G_{\mathrm{sig}}$, predicts the numerical value of the residual-phase standard deviation in closed form, and organizes the ad-hoc fix-up literature under a single algebraic criterion.

The Constant Modulus Algorithm (CMA) of Godard~\cite{godard1980} and of Treichler and Agee~\cite{treichler1983} minimizes
\begin{equation}\label{eq:cma}
J_{\mathrm{CMA}}(\mathbf{w}) \;=\; \mathbb{E}\bigl[(|y|^2 - R^2)^2\bigr],
\end{equation}
with $R^2 = \mathbb{E}[|s|^4]/\mathbb{E}[|s|^2]$ and $s$ a symbol from the modulation constellation. The cost depends on $y$ only through $|y|^2$, so $G_{\mathrm{cost}}^{\mathrm{CMA}} = U(1)$, the continuous circle group. For a constellation with rotational symmetry $\mathbb{Z}_M$ ($M$-PSK) or $D_4$ (square QAM), case (ii) of Proposition~\ref{prop:costsym} applies: CMA's residual ambiguity is distributed on the continuous coset space $U(1)/\mathbb{Z}_M$ or $U(1)/D_4$.

\begin{corollary}[CMA residual phase]\label{cor:cma}
For a constellation with rotational symmetry $\mathbb{Z}_M$ and CMA at steady state, the residual phase $\theta^*$ of the equalizer output with respect to the nearest $\mathbb{Z}_M$ grid point is uniformly distributed on $[-\pi/M, \pi/M]$ over an ensemble of random channel realizations. The standard deviation is $\pi/(M\sqrt{3})$, evaluating to $45^\circ/\sqrt{3} \approx 25.98^\circ$ for $M = 4$.
\end{corollary}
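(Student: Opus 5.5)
The plan is to derive the corollary from case~(ii) of Proposition~\ref{prop:costsym} together with a symmetry argument on the ensemble of channel realizations, and then finish with a one-line moment calculation. Since $J_{\mathrm{CMA}}$ in~\eqref{eq:cma} depends on $y$ only through $|y|^2$, replacing $\mathbf{w}$ by $e^{i\phi}\mathbf{w}$ leaves the cost unchanged for every $\phi$. The set of global minima is therefore the $U(1)$-orbit of the zero-forcing solution $\mathbf{w}^\star$. Because the $\mathbb{Z}_M$-rotations of the constellation are indistinguishable, only the phase modulo $2\pi/M$ is meaningful. The first step is thus to reduce the converged phase $\phi^*$ to the coset representative $\theta^* \in [-\pi/M, \pi/M)$, the offset from the nearest grid rotation. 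This is just the parameterization of $U(1)/\mathbb{Z}_M$ by a fundamental domain.

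The second step, and the main obstacle, is to show that $\theta^*$ is \emph{uniformly} distributed on this domain over the channel ensemble. The proposition only says where the minima lie, not which one stochastic gradient descent selects. The argument I would try first is invariance. Assume the channel ensemble is circularly symmetric, meaning $h$ and $e^{i\alpha}h$ are equally likely for every $\alpha$; this holds for the Rayleigh-faded taps of TDL models, whose complex Gaussian taps have uniform phase. Assume also that the equalizer initialization (for example center-spike) does not depend on the channel phase. Under these conditions, rotating the channel by $\alpha$ rotates every iterate of CMA by $-\alpha$, because the CMA update is $U(1)$-equivariant. The converged phase therefore satisfies $\phi^*(e^{i\alpha}h) = \phi^*(h) - \alpha$. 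Its distribution is consequently invariant under all translations of the circle, so it is Haar (uniform) on $U(1)$. The image of Haar measure under the quotient map $U(1)\to U(1)/\mathbb{Z}_M$ is again uniform, which gives $\theta^* \sim \mathrm{Unif}[-\pi/M,\pi/M]$.

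I will state the circular-symmetry assumption on the channel ensemble explicitly, since it is exactly what the word ``ensemble'' in the statement must supply. Without it, a deterministic channel would give a deterministic $\theta^*$. I will also note that the wrap onto the fundamental domain is what prevents the distribution from being anything but uniform once the phase on the full circle is uniform.

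The final step is routine. A uniform variable on an interval of width $2\pi/M$ has variance $(2\pi/M)^2/12 = \pi^2/(3M^2)$, so its standard deviation is $\pi/(M\sqrt{3})$. For $M=4$ this is $(\pi/4)/\sqrt{3}$ radians, that is $45^\circ/\sqrt{3}\approx 25.98^\circ$.
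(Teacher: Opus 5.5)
Your proposal is correct and follows the paper's skeleton. The paper derives the corollary directly from case~(ii) of Proposition~\ref{prop:costsym}: since $G_{\mathrm{cost}}^{\mathrm{CMA}} = U(1) \supsetneq \mathbb{Z}_M$, the residual lives on $U(1)/\mathbb{Z}_M$. It then treats that residual as uniform over a cell of width $2\pi/M$ (the ``$U(1)$-uniform-within-cell prediction'' of Figure~\ref{fig:tdl_grid}) and reads off $\pi/(M\sqrt{3})$.

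The difference is the uniformity step, which the paper simply asserts. You correctly note that the proposition only locates the set of minima and cannot say which one the stochastic-gradient iteration selects. You close this with an invariance argument: a circularly symmetric channel ensemble, a $U(1)$-equivariant CMA recursion, and the pushforward of Haar measure through $U(1)\to U(1)/\mathbb{Z}_M$. This gives an explicit hypothesis under which the statement is actually true; a fixed channel, or an ensemble with a preferred phase, does not give a uniform $\theta^*$. The price is that the corollary becomes conditional on that ensemble assumption. For TDL-D you would also need the Ricean line-of-sight tap to carry a random phase, since that tap is not zero-mean complex Gaussian.

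Two small repairs to your equivariance step:
\begin{itemize}
\item With a channel-phase-independent initialization such as a center spike, replacing $h$ by $e^{i\alpha}h$ does not rotate the tap iterates by $-\alpha$. Take $y = \mathbf{w}^\top\mathbf{r}$ and the update $\mathbf{w} \leftarrow \mathbf{w} - \mu(|y|^2 - R^2)\,y\,\overline{\mathbf{r}}$. The substitution $\mathbf{r} \to e^{i\alpha}\mathbf{r}$ leaves every $\mathbf{w}_n$ unchanged and multiplies every output by $e^{i\alpha}$. Relative to the new zero-forcing solution $e^{-i\alpha}\mathbf{w}^\star$, the converged phase is therefore $\phi^*(h) + \alpha$. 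Rotated iterates would require a rotated initialization, which contradicts your own assumption. The sign is irrelevant to translation invariance, so the conclusion survives.
\item Because of the additive noise, the received sequence is rotated only in distribution. You therefore also need circularly symmetric noise. State the equivariance as $\phi^*(e^{i\alpha}h) \overset{d}{=} \phi^*(h) + \alpha$ conditionally on $h$; that is all the Haar argument requires.
\end{itemize}
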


The Multi-Modulus Algorithm (MMA) of Yang, Werner, and Dumont~\cite{yang2002} has $G_{\mathrm{cost}}^{\mathrm{MMA}} = D_4$ and is matched to square QAM in the sense of Proposition~\ref{prop:costsym}(i). A cost matched exactly to cyclic-group constellations is
\begin{equation}\label{eq:adzm}
J_{\mathrm{AD}\text{-}\mathbb{Z}_M}(\mathbf{w}) \;=\; \mathbb{E}[|y^M - C_M|^2], \quad C_M = \mathbb{E}[s^M],
\end{equation}
with $G_{\mathrm{cost}}^{\mathrm{AD}\text{-}\mathbb{Z}_M} = \mathbb{Z}_M$. For $M$-PSK this matches $G_{\mathrm{sig}} = \mathbb{Z}_M$ exactly; for square QAM, $\mathbb{Z}_4 \subsetneq D_4$, and the Supergroup Dominance Theorem (Theorem~\ref{thm:supergroup}) predicts that $D_4$-matched costs (MMA) recover both phase and magnitude while $\mathbb{Z}_4$-matched costs recover only the phase.

\paragraph{Empirical validation.} On 3GPP TR~38.901~\cite{tr38901} TDL-A through TDL-D multipath channels at 25~dB SNR, over 200 random realizations per profile, the measured residual phase spread for CMA falls between $24.4^\circ$ and $26.4^\circ$ across all eight (profile, constellation) combinations, within $1.6^\circ$ of the Corollary~\ref{cor:cma} prediction. MMA and AD-$\mathbb{Z}_M$ reduce the spread to sub-degree levels where they are group-matched. Figure~\ref{fig:tdl_grid} displays the result.

\paragraph{CMA modulus mismatch on 16-QAM.} A separate failure mode of CMA on QAM is predicted by Proposition~\ref{prop:costsym} through the modulus structure rather than the phase structure. The constellation-point moduli of 16-QAM take three distinct values after unit-average-power normalization, but the CMA cost drives $|y|^2$ to a single value $R^2 \approx 1.32$ that is not attained by any transmitted symbol. The equalizer output therefore lies on a compromise circle that approximates no single symbol exactly, inflating the nearest-symbol decision error. In simulation on 16-QAM, this modulus-collapse effect produces approximately twice the symbol mean-square error that MMA and AD-$\mathbb{Z}_M$ achieve on the same channel realizations, consistent with the cost-symmetry analysis of Proposition~\ref{prop:costsym}(ii). A regularized variant $J_{\mathrm{CMA}} + \varepsilon J_{\mathrm{AD}\text{-}\mathbb{Z}_M}$ with a small phase-locking term in $\mathbb{Z}_4$ recovers the standard 16-QAM performance while preserving the CMA convergence properties, and provides a concrete example of how the matching principle suggests specific remedies where a mismatch is diagnosed.

\begin{figure*}[t]
\centering
\includegraphics[width=\textwidth]{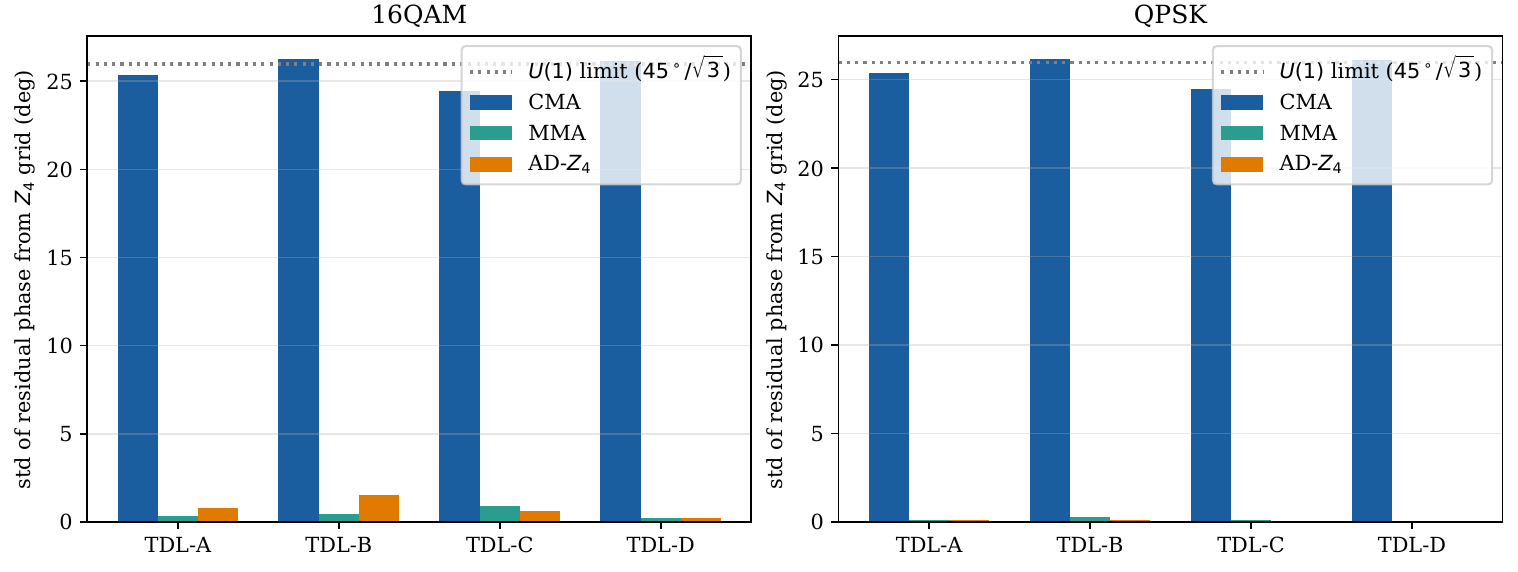}
\caption{Residual phase standard deviation from the nearest $\mathbb{Z}_4$ grid point, by 3GPP TR~38.901 TDL profile (A, B, C NLOS; D LOS) and constellation (16-QAM, left; QPSK, right), 200 random channel realizations per profile at 25~dB SNR. The CMA bars saturate the closed-form $U(1)$-uniform-within-cell prediction $45^\circ/\sqrt{3} \approx 25.98^\circ$ of Corollary~\ref{cor:cma} (dashed gray line) within $1.6^\circ$ on every profile and constellation. MMA and AD-$\mathbb{Z}_M$ reduce the residual to sub-degree levels.}
\label{fig:tdl_grid}
\end{figure*}

\subsection{Further blind problems in the framework's scope}
\label{sec:conjectures}

The cost-symmetry matching principle of Section~\ref{sec:costsym} and the blind group matching machinery of Section~\ref{sec:blindmatching} together extend to a range of classical blind signal processing problems beyond blind equalization. We indicate the mapping in each case below. Detailed development, simulations, and comparison with existing methods for each problem are left to forthcoming companion papers; the purpose here is to locate these problems within the framework rather than to treat any of them exhaustively.

\emph{Blind source separation (BSS).} The separating-matrix ambiguity group is $G_{\mathrm{sig}}^{\mathrm{BSS}} = S_n \ltimes (\mathbb{C}^*)^n$, the semidirect product of a permutation of the $n$ sources with a complex scaling per channel. Independent component analysis algorithms (FastICA, JADE, Infomax, kurtosis-based methods) use cost functionals with different invariance groups, and Proposition~\ref{prop:costsym} organizes them by case. The BSS cost-function landscape maps onto $(G_{\mathrm{cost}}, G_{\mathrm{sig}}^{\mathrm{BSS}})$ pairs, which provides a principled basis for selecting the algorithm matched to a given problem.

\emph{Blind carrier-frequency recovery.} A residual carrier frequency acts as a one-parameter subgroup of $U(1)$ on the baseband output. The matched-cost construction is the AD counterpart of conventional carrier recovery, operating on a short observation window with explicit $U(1)$-orbit awareness.

\emph{Blind timing recovery.} A timing offset is a translation on the sampled output. The matched group is a discrete subgroup of the translation group, and the matched-cost construction produces an explicit interpolation procedure.

\emph{Blind channel identification.} For a channel with known input-signal symmetry structure, blind identification via second-order cyclostationary statistics (Gardner~\cite{gardner1991}) exhibits $\mathbb{Z}_N$ symmetry at the symbol rate. AD with the $\mathbb{Z}_N$ matched group provides an alternative to the cyclostationary treatment, with a concrete tie to the group-averaged estimator via rank promotion of the cyclostationary statistics into tensor form.

\emph{Blind dictionary discovery for compressed sensing.} When dictionary atoms are to be learned from data and no fixed basis is available, AD offers an approach in which the atoms are generated as the orbit of a seed vector under a group identified by blind group matching. The construction provides a natural link to the compressed-sensing discussion of Section~\ref{sec:cs}, where dictionary construction is one of several fundamental difficulties alongside measurement operator design, reconstruction algorithms, and robustness to model mismatch.

\section{Spectral Coherence as Commutant Off-Diagonal Structure}
\label{sec:coherence}

\subsection{Overview}

The framework estimates structure from a single ensemble of observations by averaging an outer-product estimator over a matched finite group, and by reading the geometry of the resulting commutant. The preceding sections develop the \emph{diagonal} content of this construction (Section~\ref{sec:diagoffdiag}): the group-averaged (Reynolds) projection of the sample covariance, read in the group-adapted basis, is a spectral estimator that generalizes the discrete Fourier, cosine, and Karhunen--Lo\`{e}ve cases. This section develops the \emph{off-diagonal} content. We show that the part of the matched covariance that the commutant projection discards, the commutativity residual, is exactly the cross-spectral correlation between components, and that this single quantity recovers the established hierarchy of stationary, cyclostationary, and spectrally correlated descriptions as special cases. The matched-group selection step plays the role of the frequency-warping operation in that literature, and the commutant convergence rate plays the role of the reliability tradeoff that governs how well the cross-spectral structure can be measured from finite data.

The practical consequence is a capability that the diagonal alone cannot provide: detecting and quantifying coherence between components that overlap so strongly in the matched domain that no power-spectrum method can separate them. We characterize this capability empirically and identify its limiting mechanism as the conditioning of the component geometry.

\subsection{Setup and assumptions}
\label{sec:coh-setup}

Let $\{x_k\}_{k=1}^{K}$ be an ensemble of length-$N$ observations, each a mixture of group-structured components. The sample covariance is $\hat{\mathbf{R}} = \tfrac{1}{K}\sum_{k} x_k x_k^{H}$. A matched finite group $G$ acts on $\mathbb{C}^{N}$ by unitaries; for a linear frequency modulated (chirp) component of rate $\beta$, the matched group is the cyclic shift conjugated into the dechirped (fractional) domain, $U_\beta = D_\beta^{-1} Z D_\beta$, where $Z$ is the cyclic shift and $D_\beta$ is the dechirp $\diag(e^{-\imath \pi \beta (n/N)^2})$. We write $\mathbf{R}_\beta = D_\beta \hat{\mathbf{R}} D_\beta^{H}$ for the dechirped covariance and $\mathbf{M} = \mathbf{F}\, \mathbf{R}_\beta\, \mathbf{F}^{H}$ for its representation in the Fourier basis, with $\mathbf{F}$ the unitary DFT.

We make two modeling assumptions explicit, because they separate this development from the parameter-estimation problem that usually accompanies it.

\textbf{(A1) The group parameters are known or pre-estimated.} The chirp rates that define the matched group are assumed available, either from the blind group matching of Section~\ref{sec:blindmatching} or from a separate classical estimator. Rate estimation from sufficiently many observed cycles is a well-posed problem with mature low-SNR solutions based on dechirp-and-search, fractional autocorrelation \cite{akay1998fracautocorr}, and cross-correlation; it is decoupled here from the coherence question, which concerns the relationship \emph{between} already-located components rather than their location.

\textbf{(A2) Phase-only coherence model.} Two components carry independent magnitudes; the classes differ only in whether their relative phase is locked across the ensemble (coherent) or random (incoherent). This isolates coherence as a property of the complex cross-moment and removes any magnitude cue, so that a phase-blind statistic has, by construction, no information to use.

\subsection{The diagonal and the off-diagonal}
\label{sec:coh-diag}

A matrix is circulant if and only if $\mathbf{F}$ diagonalizes it, so the Frobenius projection of $\mathbf{R}_\beta$ onto the circulant (cyclic-commutant) subspace is the operation that retains the Fourier-domain diagonal,
\begin{equation}
  P_{G}(\mathbf{R}_\beta) \;=\; \mathbf{F}^{H}\,\diag(\mathbf{M})\,\mathbf{F} .
\end{equation}
The diagonal $\{\mathbf{M}_{ff}\}$ is the dechirped power spectrum. A short calculation using the circular Wiener--Khinchin relation shows that this diagonal, read as a spectrum, equals the matched periodogram exactly; the full cyclic group therefore contributes no variance reduction by itself, and all few-shot gains come from a genuine reduction of effective dimension. The complementary quantity, the commutativity residual,
\begin{equation}
  s(\beta)^2 \;=\; \frac{\big\lVert \mathbf{R}_\beta - P_{G}(\mathbf{R}_\beta)\big\rVert_F^2}
                       {\lVert \mathbf{R}_\beta\rVert_F^2}
            \;=\; \frac{\big\lVert \mathrm{offdiag}(\mathbf{M})\big\rVert_F^2}
                       {\lVert \mathbf{M}\rVert_F^2},
\label{eq:resid}
\end{equation}
is the fraction of energy off the Fourier diagonal; it is exactly the off-diagonal leg of the structural-index decomposition of Section~\ref{sec:diagoffdiag}, the squared commutant residual. By unitary invariance it is the off-diagonal mass of $\mathbf{M}$, which is the discrete Lo\`{e}ve bifrequency spectrum of the dechirped process. Equation~\eqref{eq:resid} is the central identification of this section: \emph{the commutant residual is the cross-spectral (bifrequency) correlation, and the diagonal that the projection keeps is the ordinary power spectrum.}

For two components with per-snapshot amplitudes $a_{1k}, a_{2k}$ and matched steering vectors $c_1, c_2$, projecting each snapshot onto $[\,c_1\;c_2\,]$ yields a two by two source covariance whose off-diagonal entry is the complex cross-moment $\gamma_{12} = \mathbb{E}[a_1 \overline{a_2}]$. The magnitude-squared coherence, normalized as $|\gamma_{12}|^2/(P_1 P_2)$, is a sufficient statistic for the coherence of the pair and uses the full complex cross-moment; it is the spectral coherence index $\chi$ of Section~\ref{sec:diagoffdiag} written for a resolved component pair. The power spectrum, by contrast, depends on $\gamma_{12}$ only through the interference term
\begin{equation}
  \mathbf{M}_{ff} \;\supset\; 2\,\mathrm{Re}\!\big(\gamma_{12}\, W[f]\big),
\label{eq:interf}
\end{equation}
with $W[f] = (\mathbf{F} D_\beta c_1)[f]\,\overline{(\mathbf{F} D_\beta c_2)[f]}$, a fixed, phase-rotated real projection of $\gamma_{12}$ integrated against the pattern $W$. The diagonal therefore keeps a shadow of the cross-moment; the off-diagonal keeps all of it. This is why a power-spectrum coherence test is phase-dependent while the commutant off-diagonal test is not.

\subsection{The unifying view}
\label{sec:coh-unify}

The location of the off-diagonal mass of $\mathbf{M}$ classifies the second-order description of the process, and the matched group selects the basis in which that mass lands on the main diagonal.

For a wide-sense stationary process the Lo\`{e}ve bifrequency spectrum has support only on the main diagonal: distinct spectral components are uncorrelated, $\mathbf{M}$ is diagonal, and the power spectrum is the complete second-order description. There is no off-diagonal and hence no coherence to read.

For an almost-cyclostationary process \cite{gardner1986spectral, gardner1994cyclostationarity} the off-diagonal mass lies on lines parallel to the main diagonal, offset by a countable set of cycle frequencies. The normalized off-diagonal at a cycle frequency is exactly the spectral coherence of cyclostationary theory, and its use for detection, time-difference estimation, and signal-selective direction finding is the content of the unifying coherence treatment of \cite{gardner1992unifying} and the survey \cite{gardner2006cyclostationarity}.

For a spectrally correlated process \cite{napolitano2011sampling, napolitano2003uncertainty, napolitano2019cyclostationary} the off-diagonal mass lies on a countable set of support \emph{curves} rather than lines, and chirp and angle-modulated signals are the canonical examples. The density of the Lo\`{e}ve bifrequency spectrum on those curves is the bifrequency spectral correlation density. The matched dechirp $D_\beta$ warps the relevant support curve onto the main diagonal, which is precisely the frequency-warping that maps a spectrally correlated process toward stationarity \cite{napolitano2012spectral}; in that warped basis the cross-spectral correlation becomes the off-diagonal of a near-circulant covariance, which is the commutant residual of Eq.~\eqref{eq:resid}.

Two AD constructs acquire an interpretation in these terms. Blind group matching, which selects the matched group, is the selection of the warping that brings the support curve to the diagonal; it is the data-driven analogue of choosing the cycle frequency or the warping law. The commutant convergence result for the fractional case shows that the matched, dechirped covariance approaches a circulant (stationary) form as the record lengthens, with a controlled rate; this rate is the framework-internal counterpart of the documented reliability tradeoff in spectrally correlated estimation, where single-record measurement of the cross-spectral correlation degrades as the support-curve slope departs from unity \cite{napolitano2003uncertainty}. In one statement: the AD commutant places the stationary, cyclostationary, and spectrally correlated descriptions on a common footing as the diagonal-to-off-diagonal structure of a single group-averaged covariance, with the matched group fixing the basis and the convergence rate fixing the measurability.

\subsection{Empirical characterization}
\label{sec:coh-empirical}

We illustrate the off-diagonal capability and its limit on crossing chirp pairs under assumptions (A1) and (A2), with the rates supplied to both the off-diagonal statistic and a power-spectrum statistic. The detection task is to classify a pair as coherent or incoherent from $K$ snapshots; performance is reported as the area under the ROC curve (AUC).

Figure~\ref{fig:coh-delta} sweeps the relative phase between the two components. The off-diagonal statistic is flat at perfect separation across all relative phases, consistent with its use of the full complex cross-moment. A power-spectrum interference detector, which can access only the projection in Eq.~\eqref{eq:interf}, swings with the relative phase and is below chance for in-phase pairs, exactly the phase-incompleteness predicted by the projection. A magnitude-only control is uninformative throughout, confirming that the model carries no magnitude cue.

\begin{figure}[t]
  \centering
  \includegraphics[width=0.92\linewidth]{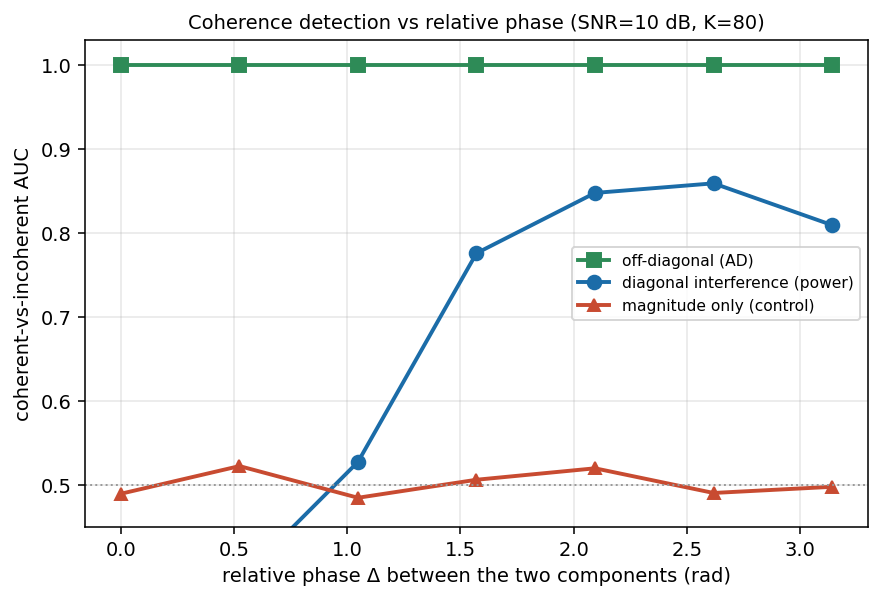}
  \caption{Coherent versus incoherent detection AUC against the relative phase between the two components. The commutant off-diagonal statistic is phase-complete (flat at unity); the power-spectrum interference statistic is phase-incomplete; the magnitude-only control is uninformative. Rates known; $K$ snapshots at fixed SNR.}
  \label{fig:coh-delta}
\end{figure}

Figure~\ref{fig:coh-overlap} sweeps the component overlap $\mu = |\langle c_1, c_2\rangle|/(\lVert c_1\rVert\,\lVert c_2\rVert)$ toward unity, far past the overlap at which the components cease to be separable on the diagonal. The off-diagonal statistic holds perfect separation up to extreme overlap, with the reachable overlap set by the SNR: in these runs to roughly $\mu \approx 0.996$ at $10$~dB, $\mu \approx 0.99$ at $3$~dB, and $\mu \approx 0.97$ at $-3$~dB. The breakdown mechanism, shown in Figure~\ref{fig:coh-mechanism}, is conditioning. The Gram matrix of the two matched components has condition number $(1+\mu)/(1-\mu)$, which diverges as $\mu \to 1$ and amplifies noise into spurious off-diagonal correlation; the estimated coherence of the incoherent class then inflates toward that of the coherent class and the two distributions merge. Increasing SNR pushes the wall toward $\mu = 1$; increasing the snapshot count helps only modestly, because the limit is set by the conditioning times noise product rather than by averaging. At $\mu = 1$ the limit becomes structural: identical components have no well-defined between-component coherence to recover.

\begin{figure}[t]
  \centering
  \includegraphics[width=0.92\linewidth]{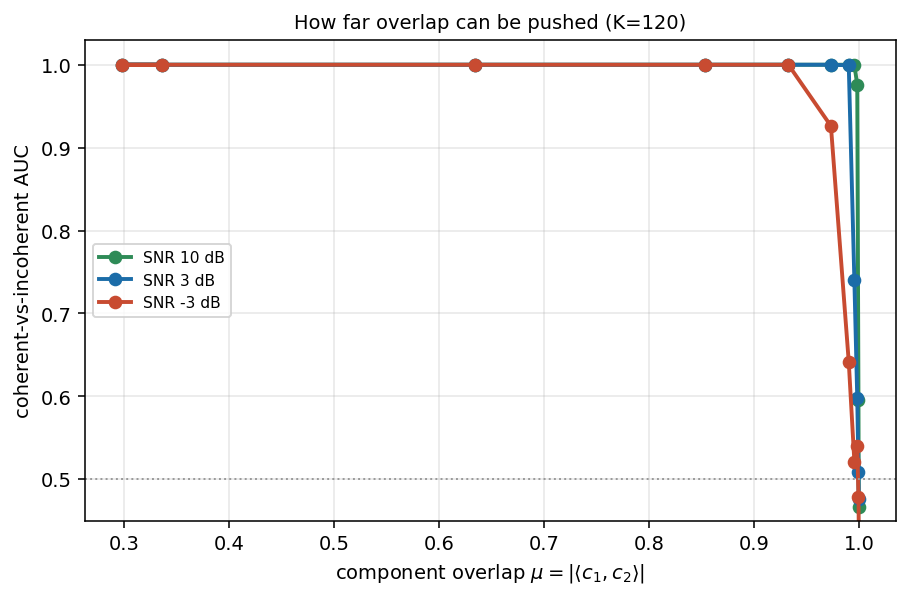}
  \caption{Reachable overlap. Coherent versus incoherent AUC against component overlap $\mu$, at three SNRs and fixed snapshot count. The off-diagonal statistic holds perfect separation deep into the unresolvable regime, with the cliff location set by SNR.}
  \label{fig:coh-overlap}
\end{figure}

\begin{figure}[t]
  \centering
  \includegraphics[width=0.92\linewidth]{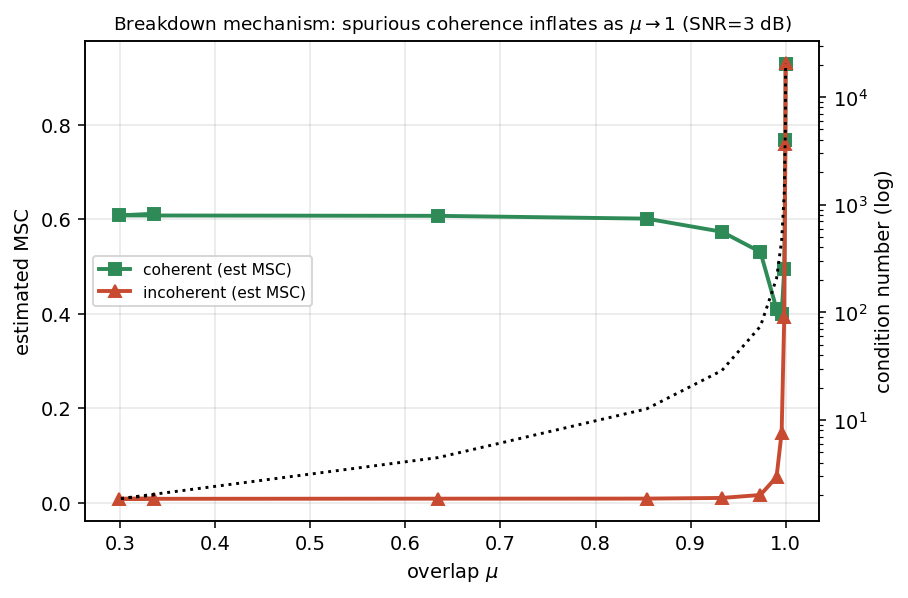}
  \caption{Breakdown mechanism. As $\mu \to 1$ the Gram condition number $(1+\mu)/(1-\mu)$ diverges (dotted, right axis) and spurious coherence inflates the incoherent class (estimated magnitude-squared coherence, left axis) until the classes merge.}
  \label{fig:coh-mechanism}
\end{figure}

The conditioning origin of the wall indicates the natural remedy. The raw pseudo-inverse separation used here makes no attempt to control the noise amplification; a whitened or covariance-reconstruction estimator is expected to push the wall closer to $\mu = 1$ at a fixed SNR. This is the same device that spatial smoothing and covariance reconstruction provide for coherent sources in array processing \cite{shan1985, qi2005spatialdiff}, and it connects the present construction to the coherent-source and coherent-multipath detection literature \cite{amani2018correlator}, where the discrimination of coherent from incoherent components is a long-standing applied problem.

\subsection{Discussion}
\label{sec:coh-disc}

The contribution of this section is unification rather than a new measurement primitive. The fact that cross-spectral coherence lives off the diagonal, and that the power spectrum is blind to it, is the founding premise of cyclostationary analysis \cite{gardner1992unifying}; its extension to chirp and other nonstationary signals is the spectrally correlated process theory \cite{napolitano2011sampling, napolitano2019cyclostationary}; and the discrimination of coherent components is mature in array processing and navigation \cite{shan1985, amani2018correlator}. What the AD framework adds is a single group-theoretic statement that carries all of these as special cases. The commutant residual is the bifrequency spectral correlation; the matched group is the warping that stationarizes the support curve; the power-spectrum and cross-spectrum split is the diagonal and off-diagonal of one group-averaged covariance; and the commutant convergence rate is the measurability tradeoff. Read this way, the framework does for spectral coherence what it does elsewhere for spectral estimation, namely it subsumes a layered sequence of case-specific descriptions into one construction and a choice of group. The empirical characterization then quantifies how far the off-diagonal can be pushed into the unresolvable regime, and locates the limit in the conditioning of the component geometry, a quantity the framework already exposes through the commutant.

\section{Information Structure Theory: Four Theorems and a Conjugate Capacity Bound}
\label{sec:fourthms}

The technical apparatus of the preceding sections supports a more general claim: the structural capacity $\kappa$ is the operational measure of single-observation estimation efficiency in the same sense that Shannon's entropy is the operational measure of source content. We make this concrete by stating four theorems that parallel Shannon's foundational results and a fifth result, the conjugate capacity bound for non-commuting generators on $\mathbb{C}^M$. The four theorems together constitute what we call \emph{information structure theory}, complementary to Shannon's information content theory.

\subsection{Structural capacity, two equivalent forms}
\label{sec:kappa_two}

The structural capacity admits two equivalent expressions. The operational form is
\begin{equation}\label{eq:kappa_op}
\kappa(f, \mathbf{x}) \;=\; \max_{G} \, d_{\mathrm{eff}}(G, f),
\end{equation}
where the maximum is over finite groups $G$ acting unitarily on $\mathbb{C}^M$, with the PASE constraint $|G| \leq M$. The spectral form, for the outer-product statistic $f(\mathbf{x}) = \mathbf{x}\mathbf{x}^H$ with covariance $\mathbf{R}$, is
\begin{equation}\label{eq:kappa_spec}
\kappa(\mathbf{R}) \;=\; 1 \;+\; \frac{(\Tr \mathbf{R})^2}{\|\mathbf{R}\|_F^2} \;=\; 1 \;+\; \frac{1}{\sum_k p_k^2},
\end{equation}
with $p_k = \lambda_k(\mathbf{R}) / \Tr(\mathbf{R})$ the normalized eigenvalue distribution. The quantity $\sum_k p_k^2 = \exp(-H_2)$ is the Rényi-2 entropy of the spectrum, so $\kappa$ is a functional of $H_2$. The reciprocal $1/\sum_k p_k^2$ that appears in the spectral form is the \emph{participation ratio} of the eigenvalue distribution, also known in the ecological-diversity literature as the inverse Simpson index or Hill's diversity order 2~\cite{hill1973}; we adopt the ``$1 + \mathrm{participation\ ratio}$'' convention used here to absorb the trivial-group baseline contribution into the structural-capacity value. The two forms measure the same resource from two sides, and coincide precisely on flat spectra. The operational form is the algebraic ceiling: the matched group attains the maximum, and $d_{\mathrm{eff}}$ counts the free spectral coordinates the orbit average populates. The spectral form reads off the value of that resource \emph{realized} at the particular eigenvalue distribution: by Theorem~\ref{thm:outer}, the participation ratio $(\Tr\mathbf{R})^2/\|\mathbf{R}\|_F^2$ is the per-entry variance-reduction factor actually delivered at covariance $\mathbf{R}$, so the spectral form is bounded by the operational form, with equality of the participation ratio and $d_{\mathrm{eff}}$ exactly when the matched-basis spectrum is flat, the case in which every free coordinate carries equal energy. A concentrated spectrum leaves the coordinate count unchanged but loads the energy unevenly, and the realized capacity is correspondingly smaller. The operational form generalizes to arbitrary statistics; the spectral form gives an immediate computational handle for the covariance setting and is the quantity to quote for spectrally concentrated signals. Geometrically (Section~\ref{sec:diagoffdiag}), this participation ratio is the $\ell_2$ summary of the matched power spectrum, the companion to the $\ell_\infty$ peak measured by the spectral concentration $\hat{\psi}$ of Section~\ref{sec:psi}: the two metrics read the same diagonal of the matched-basis covariance through different norms.

The structural entropy $H_{\mathrm{struct}}$ of equation~\eqref{eq:hstruct} measures how the signal energy distributes across observation dimensions. White noise maximizes $H_{\mathrm{struct}}$ (uniform energy across all modes); a rank-one signal minimizes it (all energy on a single mode). A matched group reveals the spectral decomposition that exhibits the true $H_{\mathrm{struct}}$ from a single observation; an unmatched group (or the trivial group with a single observation) does not.

\subsection{Theorem 1: Structural Capacity as the Fundamental Measure}

The first of the four theorems establishes that the structural capacity $\kappa$ is characterized uniquely by a small set of operational properties, in the same way that Shannon's entropy is characterized uniquely by continuity, monotonicity, and the chain rule.

\textbf{Shannon (1948).} The entropy $H = -\sum_k p_k \log p_k$ is the unique measure (up to scaling) of information content satisfying continuity, monotonicity in the source distribution, and the chain rule.

\begin{theorem}[Structural Capacity as the Fundamental Measure]\label{thm:fundamental}
The structural capacity $\kappa(f, \mathbf{x})$ is the unique measure of single-observation estimation efficiency satisfying:
\begin{enumerate}[nosep, label=(S\arabic*)]
\item Non-negativity: $\kappa \geq 1$, with equality if and only if no nontrivial group improves estimation.
\item Monotonicity in structure: if $G_1 \subseteq G_2$ with $d_{\mathrm{eff}}(G_1, f) \leq d_{\mathrm{eff}}(G_2, f)$, then $\kappa \geq d_{\mathrm{eff}}(G_2, f)$.
\item PASE saturation: $\kappa \leq M+1$.
\item Trivial-group baseline: $\kappa = 1$ for scalar observations and for any $S_M$-invariant scalar statistic.
\item Multiplicativity on the $(G, L)$ continuum: total estimation quality from $L$ independent observations under group $G$ is $d_{\mathrm{eff}}(G, f) \cdot L$, capped at $\kappa \cdot L$.
\end{enumerate}
\end{theorem}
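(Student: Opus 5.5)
I would prove the theorem in two parts. First I would show that the operational form \eqref{eq:kappa_op}, together with the spectral form \eqref{eq:kappa_spec} as its realized counterpart, satisfies each of (S1)--(S5). Then I would argue uniqueness by showing that these axioms pin $\kappa$ down to the extremal quantity $\max_G d_{\mathrm{eff}}(G,f)$, up to the convention of the trivial-group offset. Throughout I would work with Definition~\ref{def:deff} and the PASE constraint $|G|\le M$.

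\textbf{Existence.} (S1) follows because the trivial group is always admissible and $d_{\mathrm{eff}}(\{e\},f)=1$ for any nonzero $f$, so the maximum is at least one. Equality holds exactly when every admissible $G$ has a one-dimensional orbit span, which is the statement that no group improves estimation; Theorem~\ref{thm:outer} supplies the link from $d_{\mathrm{eff}}$ to the realized gain $g$. (S2) is immediate from the definition of a maximum: $\kappa\ge d_{\mathrm{eff}}(G_2,f)$ for every admissible $G_2$, and the containment hypothesis is only needed to place $G_2$ inside the admissible family. For (S3) I would use the spectral form: $\sum_k p_k^2\ge 1/M$ by Cauchy--Schwarz, so $\kappa(\mathbf{R})\le M+1$. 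On the operational side, the span of an orbit of size at most $|G|\le M$ has dimension at most $M$. (S4) comes from the third consequence listed after Definition~\ref{def:deff}: for an $S_M$-invariant scalar statistic the orbit collapses, and for scalar data only $\{e\}$ acts, as discussed in Section~\ref{sec:scalar}. For (S5) I would invoke the $(G,L)$ continuum of Theorem~\ref{thm:outer}, where the variance scales as $1/(L\,g)$ with $g\le d_{\mathrm{eff}}\le\kappa$. Independence across snapshots makes the two factors multiply, which gives the cap $\kappa L$.

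\textbf{Uniqueness.} Suppose $\kappa'$ is another measure satisfying (S1)--(S5). Applying (S2) with $G_2$ ranging over all admissible groups gives $\kappa'\ge\max_G d_{\mathrm{eff}}=\kappa$. For the reverse inequality I would read (S5) as a tight cap: the estimation quality achievable by \emph{some} group at $L=1$ must reach $\kappa'$, otherwise the cap is not operational. Since that achievable quality is $\max_G d_{\mathrm{eff}}$, this forces $\kappa'\le\kappa$. (S4) and (S1) then fix the additive normalization at the trivial-group value $1$, which explains the ``$1+$'' convention in \eqref{eq:kappa_spec}.

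\textbf{Main obstacle.} The hard step is precisely this reverse inequality. As written, (S5) is only an upper bound, so uniqueness requires interpreting $\kappa$ as the \emph{least} upper bound of the realized quality. I would make that least-upper-bound reading explicit as part of what ``measure of estimation efficiency'' means, and then conclude. A secondary tension is that the operational value is at most $M$, while the spectral value $1+$PR reaches $M+1$. I would reconcile the two by noting that the $+1$ is the baseline offset fixed by (S4), so that the spectral form equals $1$ plus the realized participation ratio and is bounded by $1+d_{\mathrm{eff}}$. Uniqueness is therefore asserted up to this affine normalization.
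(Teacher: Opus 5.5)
The uniqueness half does not go through, and the problem is not one of presentation: (S1)--(S5) as stated do not determine $\kappa$. Write $\kappa_{\mathrm{op}}=\max_{|G|\le M} d_{\mathrm{eff}}(G,f)$ for the operational form~\eqref{eq:kappa_op}. Your lower bound is correct: (S2) with $G_1=\{e\}$ gives $\kappa'\ge\kappa_{\mathrm{op}}$ for any competitor $\kappa'$. You are also right that (S5) is only an upper bound. But nothing else in the list pushes $\kappa'$ back down. (S3) only gives $\kappa'\le M+1$, (S5) is a cap that every $\kappa'\ge\kappa_{\mathrm{op}}$ satisfies, and (S1) and (S4) pin the value only where $\kappa_{\mathrm{op}}=1$.

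Concretely, set $\kappa'=1$ when $\kappa_{\mathrm{op}}=1$ and $\kappa'=M+1$ otherwise. This measure satisfies all five properties, since its equality case in (S1) is exactly that of $\kappa_{\mathrm{op}}$. Yet it differs from $\kappa_{\mathrm{op}}\le M$ whenever $\kappa_{\mathrm{op}}>1$. Because it is a step function of $\kappa_{\mathrm{op}}$, your fallback ``unique up to affine normalization'' fails as well.

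Reading (S5) as a least upper bound is therefore not an interpretation of the axioms. It is a sixth, achievability axiom: that $\kappa$ equals the best $d_{\mathrm{eff}}$ an admissible group attains. Once it is added, uniqueness is just the definition~\eqref{eq:kappa_op} restated. The paper itself gives no proof of this theorem; it only glosses (S2)--(S5) as encoding the PASE bound and the $(G,L)$ continuum, so nothing there supplies the missing direction. What you can defensibly prove is that $\kappa_{\mathrm{op}}$ has the listed properties, with uniqueness holding only under an explicitly added tightness hypothesis.

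The existence half also needs repair. You check (S3) on the spectral form~\eqref{eq:kappa_spec} and the other properties on the operational form, but these are different functions. The participation ratio is at least $1$, so $1+(\Tr\mathbf{R})^2/\|\mathbf{R}\|_F^2\ge 2$ always; the paper itself calls $2$ the rank-one floor. This violates the equality clause of (S1), violates (S4) (it equals $2$ at $M=1$), and gives $M+1$ on a flat spectrum where $\kappa_{\mathrm{op}}=M$. So the ``$1+$'' is not fixed by (S4); (S4) contradicts it. Verify everything for $\kappa_{\mathrm{op}}$ alone, where (S3) holds with room to spare because $d_{\mathrm{eff}}(G,f)\le|G|\le M$.

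Finally, in (S1) you cite Theorem~\ref{thm:outer}, but it supplies only $1\le g\le d_{\mathrm{eff}}$. That gives ``$\kappa_{\mathrm{op}}=1$ implies no group improves estimation,'' not the converse. By Definition~\ref{def:deff}, $d_{\mathrm{eff}}$ is a generic-$\mathbf{x}$ count, so $d_{\mathrm{eff}}(\mathbb{Z}_M,f_{\mathrm{outer}})=M$. Yet~\eqref{eq:g_realized} gives $g=1$ for the rank-one covariance $\lambda\mathbf{u}\mathbf{u}^H$ with $\mathbf{u}$ a DFT column; there the group average returns $\mathbf{x}\mathbf{x}^H$ itself. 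The ``if and only if'' survives only if ``improves estimation'' is read as $d_{\mathrm{eff}}>1$, which makes it tautological, so you should state which reading you adopt.
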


Property (S5) is the formal expression of the $(G, L)$ continuum that has been used throughout the paper. Properties (S2)-(S4) capture the operational content: structure can be exploited up to but not beyond the PASE bound, and the trivial group is the baseline for unstructured scalar estimation.

\subsection{Theorem 2: Structural Source Coding}

The second theorem plays the role of Shannon's source coding theorem: it identifies the minimum achievable estimation variance from a single observation in terms of the structural capacity, and identifies the matched group as the unique algebraic code that attains it.

\textbf{Shannon (1948).} A source with entropy $H$ can be compressed to $H$ bits per symbol; any code using fewer than $H$ bits per symbol incurs distortion.

\begin{theorem}[Structural Source Coding]\label{thm:source}
For a statistic $f$ and observation $\mathbf{x}$ with structural capacity $\kappa(f, \mathbf{x})$, the minimum achievable single-observation estimation variance is
\begin{equation}
\Var(\hat{\theta}) \;\geq\; \frac{C(f)}{\kappa(f, \mathbf{x})},
\end{equation}
where $C(f)$ is the single-evaluation variance of $f$. The matched group $G^* = \arg\max_G d_{\mathrm{eff}}(G, f)$ achieves this bound; any other group achieves $\Var = C(f) / d_{\mathrm{eff}}(G, f) > C(f) / \kappa$.
\end{theorem}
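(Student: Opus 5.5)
The argument has two halves. The achievability half is a direct specialization of GAAT (Theorem~\ref{thm:gaat}). The converse half transfers the Cram\'{e}r--Rao statement of the Converse Theorem (Theorem~\ref{thm:converse}) through the operational form~\eqref{eq:kappa_op} of $\kappa$.

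\textbf{Achievability.} For a candidate group $G$, take the group-averaged estimator $\hat{\theta}_G$ of~\eqref{eq:gaat}. GAAT gives that it is unbiased for the $G$-invariant projection of $\theta_f$. It also gives $\Var(\hat{\theta}_G) \le C(f,\mathbf{s})/d_{\mathrm{eff}}(G,f)$. I will identify $C(f,\mathbf{s})$ with the single-evaluation variance $C(f)$, which is exactly the $G=\{e\}$, $L=1$ case: there $d_{\mathrm{eff}}=1$ and the bound becomes an equality. Choosing $G^*=\arg\max_G d_{\mathrm{eff}}(G,f)$ over the PASE-admissible library then gives $\Var(\hat{\theta}_{G^*}) \le C(f)/\kappa(f,\mathbf{x})$ by~\eqref{eq:kappa_op}.

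\textbf{Converse.} I need to show that no single-observation unbiased estimator beats $C(f)/\kappa$. The steps are:
\begin{enumerate}[label=(\roman*), nosep]
\item By the Rao--Blackwell item in the list following GAAT, any unbiased estimator of a $G$-invariant parameter can be replaced by its orbit average without increasing variance. So it suffices to restrict to group-averaged estimators $\hat{\theta}_G$.
\item For the outer-product statistic in the Gaussian case, Theorem~\ref{thm:converse} gives that $\hat{\theta}_G$ is ML and attains the CRB, with per-block variance $\lambda_i^2/d_i$. Summing over isotypic blocks via Peter--Weyl gives aggregate variance $C(f)/d_{\mathrm{eff}}(G,f)$, using $d_{\mathrm{eff}}=\sum_i d_i^2$.
\item Minimizing over $G$ then yields the bound $C(f)/\max_G d_{\mathrm{eff}} = C(f)/\kappa$.
\end{enumerate}
For non-outer-product statistics I would invoke the conditional GAAT cases, namely the Clebsch--Gordan reduction with Isserlis under Gaussian noise. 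The general-$f$ converse is therefore only as strong as that conditional status, and I would state the theorem as established for $k=2$ and the $S_M$-invariant scalar case, conditional otherwise.

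\textbf{Strict inequality for $G\neq G^*$.} The claim $C(f)/d_{\mathrm{eff}}(G,f) > C(f)/\kappa$ follows from $d_{\mathrm{eff}}(G,f)<\kappa$. The argument runs through the lattice. If $G\subset G^*$ with a proper commutant inclusion, strictness comes from Supergroup Dominance (Theorem~\ref{thm:supergroup}) under absolutely continuous observations. If $G$ is not subgroup-related to $G^*$, strictness holds only when $d_{\mathrm{eff}}(G,f)\ne\kappa$, so ties must be excluded: I would read ``matched group'' as the unique maximizer, per Definition~\ref{def:matched}, and treat groups with equal $d_{\mathrm{eff}}$ and equal commutant as the same code, since they give identical estimators.

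\textbf{Main obstacle.} The hardest step is reconciling the variance actually delivered with $d_{\mathrm{eff}}$. Theorem~\ref{thm:outer} shows that the realized factor is the participation ratio $g\le d_{\mathrm{eff}}$, with equality only for flat matched-basis spectra. So ``$\Var = C(f)/d_{\mathrm{eff}}$'' is exact only in the flat case, and the lower bound must be phrased with $C(f)$ interpreted as the flat-profile per-coordinate variance. For concentrated spectra I would instead prove the inequality in the spectral form~\eqref{eq:kappa_spec}, using~\eqref{eq:g_realized}. I would attempt the proof first in the Gaussian Abelian case, where~\eqref{eq:g_realized} and Parseval give everything explicitly, and then extend blockwise via Schur's lemma.
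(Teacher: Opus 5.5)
\textbf{The converse is where the proposal has a gap.} The paper gives no separate proof of Theorem~\ref{thm:source}. What it supports is the group-code reading that your achievability half reproduces: each $G$ delivers $C(f)/d_{\mathrm{eff}}(G,f)$, and $\kappa$ is by definition $\max_G d_{\mathrm{eff}}$. The bound over \emph{all} estimators is recorded immediately afterwards as Conjecture~\ref{conj:converse_kappa}. It is established only for the Gaussian outer product via Theorem~\ref{thm:converse} (with elliptical and asymptotic extensions) and is listed among the open problems.

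Your converse is an attempt at that conjecture, and step~(i) cannot carry it. Rao--Blackwell says that orbit-averaging an unbiased estimator $T$ does not increase its variance. But $\frac{1}{|G|}\sum_{g} T(\pi_g\mathbf{x})$ is a $G$-invariant function built from $T$, not the orbit average of $f$. The invariant unbiased class is much larger than $\{\hat\theta_G\}$, so nothing has been shown about estimators outside the form~\eqref{eq:gaat}. Excluding them needs completeness (Lehmann--Scheff\'{e}) or an information bound. In the Gaussian outer-product case that is available: since $\mathbf{R}^{-1}\in\mathrm{Comm}(G)$, the likelihood depends on $\mathbf{x}$ only through $\Tr(\mathbf{R}^{-1}\mathbf{x}\mathbf{x}^H)=\Tr\bigl(\mathbf{R}^{-1}\mathcal{P}_G(\mathbf{x}\mathbf{x}^H)\bigr)$. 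So $\hat{\mathbf{R}}_G$ is a complete sufficient statistic of a full-rank exponential family, hence UMVU under the model $\mathbf{R}\in\mathrm{Comm}(G)$. For general $f$ and non-Gaussian noise nothing comparable is at hand.

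Step~(iii) compounds this in two ways. First, $\hat\theta_G$ is unbiased only for its own target, the $G$-invariant projection of $\theta_f$. That equals $\theta_f$ only when $G$ is a symmetry of the law, which is the same condition Rao--Blackwell over $G$ needs. Minimizing variance over $G$ therefore compares estimators of different estimands. Second, a minimum over the PASE-admissible library $|G|\le M$ says nothing about orbit averages over larger matched groups, which Rao--Blackwell covers equally well.

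\textbf{Step~(ii) is false as written.} For multiplicity-one blocks, Theorem~\ref{thm:converse} gives aggregate Frobenius MSE $\sum_i d_i\cdot\lambda_i^2/d_i=\sum_i\lambda_i^2$. A single evaluation has $(\Tr\mathbf{R})^2=\bigl(\sum_i d_i\lambda_i\bigr)^2$, so the reduction factor is $\bigl(\sum_i d_i\lambda_i\bigr)^2/\sum_i\lambda_i^2$. By Cauchy--Schwarz this is at most $\sum_i d_i^2$, with equality only when $\lambda_i\propto d_i$. In the Abelian case it is the participation ratio~\eqref{eq:g_realized}, and for a flat spectrum it is $M^2/\dim\mathrm{Comm}(G)$.

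The $S_3$ $\mathrm{triv}\oplus$2-dim row of Table~\ref{tab:gl-nonab} gives $4.5$, not $5$. Worse, at $\mathbf{R}=\mathbf{I}_3$ the unbiased $\hat{\mathbf{R}}_{S_3}$ has aggregate MSE $2$ against $C(f)=9$. Yet $C(f)/\kappa\geq 9/4$, because (S3) caps $\kappa$ at $M+1=4$ in either form. So with $\kappa$ capped this way, the all-estimator inequality you set out to prove fails outright rather than merely lacking a proof. Among group codes with $|G|\le M$ the inequality does survive, since the factor is at most $\sum_i d_i^2\le|G|$. But ``achieves'' and ``$\Var=C(f)/d_{\mathrm{eff}}(G,f)$'' hold only on the Cauchy--Schwarz equality set.

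\textbf{Your spectral-form fallback does not repair this.} In the Gaussian Abelian case, \eqref{eq:kappa_spec} is one plus the realized factor~\eqref{eq:g_realized}. The matched group therefore delivers $C(f)/(\kappa-1)>C(f)/\kappa$ for every spectrum; white noise gives $M$ from~\eqref{eq:kappa_op} but $M+1$ from~\eqref{eq:kappa_spec}. Achievability is lost. Extending blockwise via Schur fails too, because the realized factor then exceeds the participation ratio: the flat $S_3$ case has $4.5$ against $3$.

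\textbf{One smaller point in the achievability half.} Setting $G=\{e\}$ in GAAT yields only $C(f)\leq C(f,\mathbf{s})$, not equality. Use instead the exact relation $\Var(\hat{R}_{G,L,ij})=\Var(\hat{R}_{\{e\},ij})/(L\,g(G,\mathbf{x}))$ of Theorem~\ref{thm:outer}. That relation attains $C(f)/\kappa$ only for a flat spectrum.
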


In Shannon's framework, suboptimal codes waste bits by transmitting redundancy. In the structural framework, mismatched groups waste structure: they leave exploitable algebraic regularity untouched and compensate by requiring more observations. The trivial group $\{e\}$ is the maximally wasteful structural code: it achieves $d_{\mathrm{eff}} = 1$ regardless of how much structure is available, and compensates with $N$ independent observations to reach variance $C(f)/N$. Classical temporal averaging is the structural equivalent of repeating every bit $N$ times for redundancy.

\begin{conjecture}[Converse: Structural Capacity Equals the CRB]\label{conj:converse_kappa}
For the matched group $G^*$, the variance bound $C(f)/\kappa$ equals the Cram\'{e}r-Rao lower bound for the $G^*$-invariant parameters from a single observation. The matched group achieves the information-theoretic limit among all estimators of these parameters, not merely the best group-averaged estimator.
\end{conjecture}

The Converse Theorem of Section~\ref{sec:gaat} establishes Conjecture~\ref{conj:converse_kappa} for the outer-product statistic in the Gaussian case, both Abelian (with $\Var(\hat{\lambda}_k) = \lambda_k^2 = \mathrm{CRB}(\lambda_k)$) and non-Abelian (with $\Var(\hat{\lambda}_i) = \lambda_i^2/d_i = \mathrm{CRB}(\lambda_i)$ via Schur's lemma). The non-Gaussian case is covered in two practically important ways. First, for any distribution with finite fourth moments, the group-averaged estimator with $L$ observations is asymptotically efficient as $L \to \infty$ by the central limit theorem applied to the $L \cdot |G|$ orbit elements; this extends the $(G, L)$ continuum to arbitrary non-Gaussian distributions with only an asymptotic qualifier. Second, for complex elliptical distributions of the form $\mathbf{x} = \sqrt{\tau}\,\mathbf{R}^{1/2}\mathbf{z}$ with $\mathbf{z}$ uniform on the unit sphere and $\tau$ a positive scalar texture (a family that includes Gaussian, compound Gaussian, and complex $t$-distributions), the spectral coefficients remain independent in the character basis, the Gaussian proof applies verbatim, and the group-averaged estimator again attains the CRB with a texture-dependent prefactor $\mathbb{E}[\tau^2]$ in place of unity. Between the CLT-based asymptotic result for any finite-fourth-moment distribution and the exact single-observation result for complex elliptical distributions, the Converse covers a wide class of signal models of practical relevance; a fully unified proof across arbitrary non-Gaussian distributions would be of theoretical interest but is unlikely to change the operational picture.

\subsection{Theorem 3: Structural Channel Capacity}

The third theorem is the structural counterpart of Shannon's channel capacity: it identifies the data structure itself as the object that plays the role of a communication channel, with the group and its effective order corresponding to the code and rate, and with structural capacity $\kappa$ acting as the upper bound on single-observation estimation quality.

\textbf{Shannon (1948).} A channel with capacity $C = \max_{p(x)} I(X; Y)$ supports reliable transmission at any rate $R < C$ and not at any rate $R > C$.

\begin{theorem}[Structural Channel Capacity]\label{thm:channel}
For an observation of tensor rank $r$ with dimension $M$ along each index, the structural channel capacity is
\begin{equation}
\mathcal{C}_{\mathrm{struct}}(r, M, f) \;=\; \max_{G} d_{\mathrm{eff}}(G, f) \;=\; \kappa(f, \mathbf{x}).
\end{equation}
The data structure (tensor rank, dimension, symmetry class) plays the role of the channel; the group $G$ plays the role of the code; $d_{\mathrm{eff}}(G, f)$ plays the role of the rate. The $(G, L)$ continuum extends the budget: total estimation quality from $L$ independent observations under group $G$ is $d_{\mathrm{eff}}(G, f) \cdot L$, capped at $\kappa \cdot L$.
\end{theorem}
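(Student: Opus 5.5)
The equality $\mathcal{C}_{\mathrm{struct}}(r, M, f) = \kappa(f, \mathbf{x})$ holds by definition: both sides are the maximum of $d_{\mathrm{eff}}(G, f)$ over finite groups acting unitarily, subject to the PASE constraint, as in~\eqref{eq:kappa_op}. The content of the theorem therefore lies in two places. The first is the claim that the maximum is attained and is finite. The second is the achievability/converse pair that justifies reading $\kappa$ as a capacity. The continuum statement is a third, separable piece.

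I will proceed in three steps.

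\emph{Step one (well-posedness).} Definition~\ref{def:deff} gives $d_{\mathrm{eff}}(G, f) \leq \dim_{\mathbb{C}} \mathcal{V}$. It also gives $d_{\mathrm{eff}}(G, f) \leq |G| \leq M$ under the PASE constraint, because the span is generated by at most $|G|$ vectors. So the set of attainable values is a bounded set of integers, the maximum exists, and it is attained by some $G^*$. This $G^*$ is the matched group of Theorem~\ref{thm:source}. Property (S3) of Theorem~\ref{thm:fundamental} gives the same bound independently.

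\emph{Step two (achievability and converse).} For achievability, at any rate $d \leq \kappa$ I will exhibit a group realizing it. One option is to pass to the subgroup of $G^*$ whose orbit statistics span a $d$-dimensional subspace. Another is to apply Theorem~\ref{thm:gaat} directly, which gives variance $C(f)/d_{\mathrm{eff}}(G,f)$. For the converse, no group exceeds $\kappa$ by the definition of the maximum. Theorem~\ref{thm:source} upgrades this count to a variance statement: $\Var \geq C(f)/\kappa$ for all group-averaged estimators. In the Gaussian outer-product case, Theorem~\ref{thm:converse} further upgrades it to all estimators. The dependence on tensor rank $r$ and dimension $M$ enters only through the admissible unitary representations on $\mathbb{C}^{M^r}$. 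For the outer product I will use the Peter--Weyl count $d_{\mathrm{eff}} = \sum_i d_i^2$ from the discussion after Definition~\ref{def:deff} to make the maximization explicit.

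\emph{Step three (the $(G,L)$ extension).} $L$ independent observations each yield an orbit average with $d_{\mathrm{eff}}(G,f)$ free coordinates. Independence across $\ell$ makes the variances add as in~\eqref{eq:gae_L}, giving total quality $d_{\mathrm{eff}}(G,f)\cdot L$. This is exactly property (S5), and the cap $\kappa L$ follows from step one.

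I expect the main obstacle to be making the converse rigorous beyond the Gaussian outer-product setting. Theorem~\ref{thm:gaat} is only conditional for degree $k \geq 3$. Theorem~\ref{thm:outer} shows that the realized factor is a participation ratio bounded by $d_{\mathrm{eff}}$, not equal to it. So for non-flat spectra, ``rate'' must be read as the degrees-of-freedom ceiling rather than the realized variance factor. I will state the theorem in that ceiling sense, and defer the estimator-level converse to Conjecture~\ref{conj:converse_kappa} outside the cases covered by Theorem~\ref{thm:converse}.
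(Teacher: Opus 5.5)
Your core argument matches the paper's treatment. Theorem~\ref{thm:channel} is given no separate proof: the first equality only names $\max_G d_{\mathrm{eff}}(G,f)$, the second is the operational definition~\eqref{eq:kappa_op}, and the ``plays the role of'' clauses are interpretive. The $(G,L)$ sentence is (S5) of Theorem~\ref{thm:fundamental} together with~\eqref{eq:gae_L}, and the cap $d_{\mathrm{eff}}(G,f)\,L \le \kappa L$ is immediate from the definition of the maximum. Your Step one is correct: the maximum is attained because $d_{\mathrm{eff}}$ is an integer bounded by $|G|$ and by $\dim_{\mathbb{C}}\mathcal{V}$. Your Step three is also correct, and together these two steps are all the statement requires.

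The genuine problem is the achievability claim in Step two. The statement does not make this claim, and it fails as you propose it. Subgroups of $G^*$ have orders dividing $|G^*|$, so subgroups cannot realize every $d \le \kappa$. For example, take $G^* = \mathbb{Z}_p$ with $p$ prime acting freely by cyclic shifts: its only subgroups give $d_{\mathrm{eff}} \in \{1, p\}$, so no intermediate rate is reached. Theorem~\ref{thm:gaat} does not rescue this either, since it bounds the variance for a given group rather than producing a group of prescribed $d_{\mathrm{eff}}$. Drop this step. Also drop the variance converse via Theorem~\ref{thm:source}: the statement asserts a count, not a variance bound.

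Two consistency points also need fixing.
\begin{enumerate}
\item Citing (S3) as giving ``the same bound'' is wrong, because (S3) gives $M+1$, not $M$. The discrepancy comes from the paper's ``$1+$ participation ratio'' convention in~\eqref{eq:kappa_spec} and Table~\ref{tab:structures}. There $\kappa = M+1$ for $\mathbb{Z}_M$, even though $d_{\mathrm{eff}}(\mathbb{Z}_M, f_{\mathrm{outer}}) = M$. The equality $\max_G d_{\mathrm{eff}} = \kappa$ therefore holds only for the operational $\kappa$ of~\eqref{eq:kappa_op}. Say explicitly that this is the reading you adopt, rather than invoking the spectral form as corroboration.
\item For tensor rank $r$, the admissible class must be groups acting on $\mathbb{C}^{M^r}$ with PASE bound $|G| \le M^r$. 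The $\mathbb{Z}_M\times\mathbb{Z}_N$ row of Table~\ref{tab:structures} already violates $|G|\le M$. Your Step-one bound $|G| \le M$ therefore contradicts your own remark about $\mathbb{C}^{M^r}$. Likewise, the cap $d_{\mathrm{eff}}(G,f)\,L \le \kappa L$ holds only for $G$ drawn from the same admissible class over which the maximum is taken.
\end{enumerate}
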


\begin{table}[t]
\centering
\caption{Structural capacity of common data structures, with the matched group achieving $\kappa$.}
\label{tab:structures}
\renewcommand{\arraystretch}{1.2}
\begin{tabular}{@{}lcccl@{}}
\toprule
\textbf{Structure} & $f$ & $\kappa$ & \textbf{Matched group} \\
\midrule
Scalar                & any        & 1   & $\{e\}$              \\
Vector $\mathbb{C}^M$ & outer prod & $M+1$ & $\mathbb{Z}_M$       \\
Vector $\mathbb{C}^M$ & scalar mean &  1 & $\{e\}$              \\
Matrix $\mathbb{C}^{M \times N}$ & outer prod & $MN+1$ & $\mathbb{Z}_M \times \mathbb{Z}_N$ \\
Graph (n vertices)    & adj.\ outer & $|\Aut(\mathcal{G})|+1$ & $\Aut(\mathcal{G})$ \\
Quantum state ($d$)   & density mat & $d^2+1$ & Heisenberg-Weyl \\
\bottomrule
\end{tabular}
\end{table}

Two signals with identical Shannon entropy but different data structures (one stored as a vector, the other as a graph) have different structural capacities and require different numbers of observations for the same estimation quality. Shannon's capacity is a property of the channel, not the source; structural capacity is a property of the data structure, not the data content.

\subsection{Theorem 4: The Structural Processing Inequality}

The fourth theorem is the structural counterpart of the data processing inequality: it asserts that no operation performed on an observation can increase its structural capacity, mirroring Shannon's statement that processing cannot create information.

\textbf{Shannon (1948).} For any Markov chain $X \to Y \to Z$, the data processing inequality $I(X; Z) \leq I(X; Y)$ holds: processing cannot create information.

\begin{theorem}[Structural Processing Inequality]\label{thm:processing}
Let $h: \mathbb{C}^M \to \mathbb{C}^{M'}$ be any measurable function applied to the observation. For any statistic $f$,
\begin{equation}
\kappa(f \circ h^{-1},\, h(\mathbf{x})) \;\leq\; \kappa(f, \mathbf{x}).
\end{equation}
Processing cannot increase structural capacity.
\end{theorem}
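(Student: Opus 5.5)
The plan is to work from the operational form \eqref{eq:kappa_op} and reduce the inequality to a comparison of spans. Here $\kappa(f,\mathbf{x}) = \max_G d_{\mathrm{eff}}(G,f)$, and $d_{\mathrm{eff}}$ is the dimension of $\mathrm{span}\{f(\pi_g\mathbf{x})\}$ as in Definition~\ref{def:deff}. The statistic $f\circ h^{-1}$ on the processed observation $\mathbf{y}=h(\mathbf{x})$ is interpreted as any statistic $\tilde f$ on $\mathbb{C}^{M'}$ whose values factor through $h$, that is, $\tilde f(\mathbf{y}) = f(\mathbf{x})$ on the image of $h$. I read $f\circ h^{-1}$ in this sense, or as the pushforward of $f$ when $h$ is not injective.

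\textbf{Transport groups backward.} Fix a finite group $H'$ with unitary representation $\rho$ on $\mathbb{C}^{M'}$, subject to the PASE constraint. Each orbit point $\rho_{g}\mathbf{y}$ that lies in $h(\mathbb{C}^M)$ has some preimage $\mathbf{x}_g$ with $h(\mathbf{x}_g)=\rho_g\mathbf{y}$, and then $\tilde f(\rho_g\mathbf{y}) = f(\mathbf{x}_g)$. So the span counted by $d_{\mathrm{eff}}(H',\tilde f)$ is a span of values of $f$ at at most $|H'|$ points of $\mathbb{C}^M$. If the orbit leaves the image of $h$, the pushforward statistic is undefined or constant off the image, and those points contribute nothing new.

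\textbf{Realize those points by a group on $\mathbb{C}^M$.} I would construct a group $G$ acting on $\mathbb{C}^M$ whose orbit of $\mathbf{x}$ contains the preimages $\mathbf{x}_g$. A quicker alternative is to sidestep groups and note that by \eqref{eq:kappa_spec} and Theorem~\ref{thm:outer}, $\kappa$ is bounded by the maximal linear span $f$ can achieve over any finite point set compatible with the structure. I would then argue that any span achievable downstream is achievable upstream, because $\tilde f$ only ever sees values of $f$: $\dim\mathrm{span}\{\tilde f(\cdot)\}\le\dim\mathrm{span}\{f(\cdot)\}$. This is the Rao--Blackwell flavor of the statement, in which conditioning on $\sigma(h(\mathbf{x}))\subseteq\sigma(\mathbf{x})$ cannot enlarge the information set. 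The Rao--Blackwell step listed under Theorem~\ref{thm:gaat} supplies the variance-side analog: the $G$-invariant sigma-field of $h(\mathbf{x})$ is contained in that of $\mathbf{x}$.

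\textbf{Conclude and state the hard step.} Taking the maximum over $H'$ on both sides gives $\kappa(\tilde f, h(\mathbf{x}))\le\kappa(f,\mathbf{x})$, with the cap $M+1$ of (S3) preserved. The step I expect to be the main obstacle is the transport itself. An arbitrary measurable $h$ does not intertwine unitary actions, so the preimages $\mathbf{x}_g$ need not form an orbit of any unitary representation on $\mathbb{C}^M$, and genericity in $\mathbf{x}$ need not pass to genericity in $\mathbf{y}$. I would first prove the inequality for equivariant $h$, meaning $h\circ\pi_g=\rho_g\circ h$; there the orbit maps onto the orbit and the span comparison is immediate. For general $h$, I would try to define $\kappa$ in the proof via the spectral or participation-ratio form and invoke Theorem~\ref{thm:outer}'s variance interpretation together with Rao--Blackwell. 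If that fails, I would restrict the theorem's scope to equivariant processing.
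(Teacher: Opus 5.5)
The step you flag as the main obstacle is a genuine gap, and none of your fallbacks closes it. Downstream, $\tilde f(\rho_g\mathbf{y}) = f(h^{-1}\rho_g h\,\mathbf{x})$. So $d_{\mathrm{eff}}(H',\tilde f)$ is the span of $f$ over the orbit of $\mathbf{x}$ under the pulled-back action $g \mapsto h^{-1}\rho_g h$. That action is in general non-unitary, and for nonlinear $h$ not even linear, so it is not among the actions over which $\kappa(f,\mathbf{x})$ maximizes. (For non-injective $h$, $f\circ h^{-1}$ is not even a function unless $f$ is constant on the fibres of $h$.)

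Your ``quicker alternative'' bounds $\kappa$ from \emph{above} by spans over arbitrary point sets. The inequality needs the reverse on the upstream side: some admissible upstream orbit must realize at least the downstream span, and that is exactly the transport you have not supplied. The Rao--Blackwell remark does not help either, because $d_{\mathrm{eff}}$ is a dimension count fixed by which actions are unitary on each space, not a functional of the $\sigma$-field. The equivariant case does not bound the downstream maximum either. That maximum ranges over all admissible $\rho$ on $\mathbb{C}^{M'}$, with PASE budget $|G|\leq M'$ rather than $M$, not only over those that $h$ intertwines with an admissible $\pi$.

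In fact the gap cannot be closed at the stated generality. Take $M = M' = 2$, $f(\mathbf{x}) = (\|\mathbf{x}\|^2, \|\mathbf{x}\|^4)$, and $h(\mathbf{x}) = \diag(1,2)\,\mathbf{x}$.

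\begin{itemize}
\item Every unitary orbit fixes $f$, so $\kappa(f,\mathbf{x}) = 1$.
\item Downstream, $\tilde f(\mathbf{y}) = (a,a^2)$ with $a = |y_1|^2 + |y_2|^2/4$. Under the coordinate swap, the orbit values $(a,a^2)$ and $(b,b^2)$, with $b = |y_2|^2 + |y_1|^2/4$, are linearly independent whenever $|y_1|\neq|y_2|$. Hence $\kappa(\tilde f, h(\mathbf{x})) = 2$.
\end{itemize}

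Since $h$ is invertible, $\sigma(h(\mathbf{x})) = \sigma(\mathbf{x})$, which rules out any information-set mechanism. Your spectral fallback fails too: the invertible whitening map $h = \mathbf{R}^{-1/2}$ raises $1 + (\Tr\mathbf{R})^2/\|\mathbf{R}\|_F^2$ from near $2$ (for nearly rank-one $\mathbf{R}$) to $M+1$.

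So your instinct to restrict scope is the right one. The paper's own argument does not attempt a transport at all. It is a three-case analysis with examples: dimension reduction shrinks the set of admissible groups, symmetrizing maps collapse the orbit, and injective dimension-preserving linear maps preserve $\kappa$. The counterexample above is exactly such a linear map, so even the paper's third case needs a restriction on $f$. It is sound for the outer product, because $\tilde f(\rho_g\mathbf{y}) = \mathbf{A}^{-1}\rho_g\mathbf{y}\mathbf{y}^H\rho_g^H\mathbf{A}^{-H}$ is an invertible linear image of the downstream outer-product orbit.

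The cleanest provable version of your approach assumes that $h$ intertwines every admissible downstream action with an admissible unitary upstream action, $h\circ\pi_g = \rho_g\circ h$, as happens for unitary $h$ with $M' = M$. Under that hypothesis your orbit-to-orbit span comparison is the whole proof.
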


The proof has three cases. If $h$ reduces dimensionality ($M' < M$), the set of available group actions shrinks ($G$ acts on $M'$ elements rather than $M$) and $\kappa$ can only decrease. If $h$ symmetrizes (sorting, taking the norm), the group orbit collapses and $d_{\mathrm{eff}}$ falls. If $h$ is an injective linear map preserving dimension, $\kappa$ is preserved but not increased. Examples of structure-destroying operations include scalar projection $h(\mathbf{x}) = \|\mathbf{x}\|^2$ (which collapses $\kappa$ from $M+1$ to $2$), matrix flattening (which discards row-column joint structure), sorting (which makes the observation $S_M$-invariant), and subsampling (which caps $\kappa$ at the retained dimension).

The structural processing inequality concerns operations on a single observation. The eigentensor hierarchy of Section~\ref{sec:eigentensor}, which combines multiple observations, is not a violation: it creates new index structure by combining independent observations rather than processing a single one.

\subsection{A Conjugate Capacity Bound for Non-Commuting Generators}
\label{sec:conjugate}

A complementary statement to the four theorems above arises when one considers two candidate groups acting on the same observation space whose infinitesimal generators do not commute. In this setting the structural capacities of the two groups are constrained jointly: allocating capacity to one places a limit on what is simultaneously available for the other. We present this as a finite-dimensional structural-capacity statement on $\mathbb{C}^M$; the extension to infinite-dimensional Lie-group actions on function spaces such as $L^2(\mathbb{R})$ is outside the scope of this paper.

\begin{definition}[Generator pair on $\mathbb{C}^M$]
Let $A, B$ be Hermitian matrices on $\mathbb{C}^M$. The associated unitary one-parameter groups are $U_A(\alpha) = e^{i\alpha A}$ and $U_B(\beta) = e^{i\beta B}$.
\end{definition}

\begin{definition}[Single-vector structural capacity for a generator]
For a unit vector $\mathbf{x} \in \mathbb{C}^M$ and a Hermitian generator $A$, the \emph{generator-specific structural capacity} of $\mathbf{x}$ for $A$-estimation is
\begin{equation}
\kappa_A(\mathbf{x}) \;=\; \frac{1}{(\Delta_{\mathbf{x}} A)^2}, \qquad (\Delta_{\mathbf{x}} A)^2 = \mathbf{x}^H A^2 \mathbf{x} - (\mathbf{x}^H A \mathbf{x})^2.
\end{equation}
A small uncertainty $\Delta_{\mathbf{x}} A$ corresponds to high structural capacity: $\mathbf{x}$ is well-localized in the $A$-eigenbasis, which is what enables efficient single-snapshot estimation under the group generated by $A$.
\end{definition}

\begin{theorem}[Conjugate Capacity Bound]\label{thm:conjugate}
Let $A, B$ be Hermitian matrices on $\mathbb{C}^M$ with $[A, B] = ic\,\mathbf{I} + C$ where $c \in \mathbb{R}$ and $C$ is a Hermitian residual. For any unit vector $\mathbf{x}$,
\begin{equation}\label{eq:conj_bound}
\kappa_A(\mathbf{x}) \cdot \kappa_B(\mathbf{x}) \;\leq\; \frac{4}{|c + \mathbf{x}^H C \mathbf{x}|^2}.
\end{equation}
When $C = 0$ (so the commutator is a scalar multiple of the identity), the bound simplifies to $\kappa_A \cdot \kappa_B \leq 4/c^2$, independent of $\mathbf{x}$.
\end{theorem}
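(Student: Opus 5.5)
The bound is the Robertson--Schr\"{o}dinger uncertainty relation, restated in terms of the generator-specific capacities. I would prove the variance inequality $(\Delta_{\mathbf{x}} A)^2 (\Delta_{\mathbf{x}} B)^2 \geq \tfrac14 |\mathbf{x}^H [A,B] \mathbf{x}|^2$ first. Taking reciprocals of both sides then gives~\eqref{eq:conj_bound} directly, since $\kappa_A \kappa_B = 1/\bigl((\Delta_{\mathbf{x}} A)^2 (\Delta_{\mathbf{x}} B)^2\bigr)$ by definition.

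\textbf{Variance inequality.} Set $a = \mathbf{x}^H A \mathbf{x}$ and $b = \mathbf{x}^H B \mathbf{x}$, which are real because $A$ and $B$ are Hermitian. Define the centred operators $A_0 = A - a\mathbf{I}$ and $B_0 = B - b\mathbf{I}$. These are still Hermitian and satisfy $[A_0, B_0] = [A, B]$. Since $\|\mathbf{x}\| = 1$, we have $(\Delta_{\mathbf{x}} A)^2 = \|A_0 \mathbf{x}\|^2$ and $(\Delta_{\mathbf{x}} B)^2 = \|B_0 \mathbf{x}\|^2$. Let $z = \langle A_0\mathbf{x}, B_0\mathbf{x}\rangle = \mathbf{x}^H A_0 B_0 \mathbf{x}$. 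Cauchy--Schwarz gives $|z|^2 \leq \|A_0\mathbf{x}\|^2 \|B_0\mathbf{x}\|^2$. Using Hermiticity, $\bar z = \mathbf{x}^H B_0 A_0 \mathbf{x}$, so $z - \bar z = \mathbf{x}^H [A,B] \mathbf{x}$. Therefore
\[
|z|^2 \;\geq\; (\operatorname{Im} z)^2 \;=\; \tfrac14\,\bigl|\mathbf{x}^H [A,B] \mathbf{x}\bigr|^2 ,
\]
which establishes the variance inequality.

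\textbf{Inserting the commutator.} Next I substitute the hypothesis on $[A,B]$. The commutator of two Hermitian matrices is skew-Hermitian, so the decomposition must be read as $[A,B] = i(c\,\mathbf{I} + C)$ with $C$ Hermitian. The literal form $ic\mathbf{I} + C$ with $C$ Hermitian would force $C = 0$. Under this reading, $\mathbf{x}^H [A,B] \mathbf{x} = i\,(c + \mathbf{x}^H C \mathbf{x})$, whose modulus is $|c + \mathbf{x}^H C \mathbf{x}|$ because $c$ and $\mathbf{x}^H C\mathbf{x}$ are both real. Substituting into the variance inequality and inverting yields~\eqref{eq:conj_bound}.

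\textbf{Degenerate cases.} The main obstacle is bookkeeping rather than analysis. First, if $\Delta_{\mathbf{x}} A = 0$, then $\mathbf{x}$ is an eigenvector of $A$, so $A_0\mathbf{x} = 0$. Hence $z = 0$, and the inequality forces $c + \mathbf{x}^H C \mathbf{x} = 0$. Both sides of~\eqref{eq:conj_bound} are then $+\infty$, so the statement holds in the extended reals, and I would state this convention explicitly. Second, in finite dimension $\Tr [A,B] = 0$, which forces $cM + \Tr C = 0$. Consequently the simplified case $C = 0$ can occur only with $c = 0$, where the bound $4/c^2$ is vacuous. I would note this as a remark: the $\mathbf{x}$-independent form is the formal analogue of the canonical commutation relation and is attained only in the infinite-dimensional limit. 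In genuinely finite dimensions the informative content lies in the state-dependent denominator $|c + \mathbf{x}^H C \mathbf{x}|$.
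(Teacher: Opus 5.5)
Your proof is correct and takes the same route as the paper. The paper cites the Robertson inequality $\Delta_{\mathbf{x}}A\,\Delta_{\mathbf{x}}B \geq |\langle[A,B]\rangle_{\mathbf{x}}|/2$ and then squares and reciprocates, while you derive that inequality from Cauchy--Schwarz. Your side remarks also hold: the paper's step $|\langle[A,B]\rangle_{\mathbf{x}}| = |c+\mathbf{x}^H C\mathbf{x}|$ is valid under your reading $[A,B]=i(c\mathbf{I}+C)$ (the literal hypothesis forces $C=0$), and $\Tr[A,B]=0$ forces $c=0$ when $C=0$, so the state-independent $4/c^2$ form is vacuous on $\mathbb{C}^M$.
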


\begin{proof}
The Robertson inequality applied to Hermitian $A, B$ with commutator $[A, B] = ic\mathbf{I} + C$ gives $\Delta_{\mathbf{x}} A \cdot \Delta_{\mathbf{x}} B \geq |\langle [A, B] \rangle_{\mathbf{x}}|/2 = |c + \mathbf{x}^H C \mathbf{x}|/2$. Squaring and reciprocating yields the stated bound. When $C = 0$, the second term vanishes and the right-hand side is the state-independent constant $4/c^2$.
\end{proof}

The bound is the structural-capacity restatement of the long-standing Robertson inequality~\cite{robertson1929}, recast here in the language of $\kappa$ rather than $\Delta$. Its purpose in the present framework is to identify the finite-dimensional setting in which complementary group choices on the same observation are subject to a joint capacity budget. The classical time-frequency uncertainty relation $\Delta\omega \cdot \Delta t \geq 1/2$ on $L^2(\mathbb{R})$ is a well-known infinite-dimensional instance of the same Robertson construction with a canonical commutator; we record this connection but do not pursue the infinite-dimensional Lie-group treatment here. The corresponding double-commutator structure $[A, [A, \mathbf{R}]]$ enters the blind-matching machinery of Section~\ref{sec:dadcad}: when $[A, B] = ic\mathbf{I}$ exactly, the AD double-commutator GEVP for $A$ and the GEVP for $B$ have decoupled Lie-algebra residuals; otherwise they couple through the residual $C$.

\subsection{Table of Shannon-AD correspondences}

\begin{table}[t]
\centering
\caption{Shannon's content theory (1948) and the AD structure theory developed here, with both functionals applying to the same eigenvalue distribution through different Rényi orders.}
\label{tab:correspondences}
\renewcommand{\arraystretch}{1.2}
\begin{tabular}{@{}p{0.46\columnwidth} p{0.46\columnwidth}@{}}
\toprule
\textbf{Shannon (Content, R\'{e}nyi-1)} & \textbf{AD (Structure, R\'{e}nyi-2)} \\
\midrule
Entropy $H = -\sum p_k \log p_k$ & Structural capacity $\kappa = \max_G d_{\mathrm{eff}}$ \\
Bits per symbol & Estimation units per observation \\
Source coding: rate $\geq H$ & Var bound: $\Var \geq C/\kappa$ \\
Channel capacity $C = \max I(X;Y)$ & Structural capacity $\mathcal{C}_{\mathrm{struct}}$ \\
Optimal code & Matched group $G^*$ \\
Suboptimal code & Mismatched group ($d_{\mathrm{eff}} < \kappa$) \\
Data processing inequality & Structural processing inequality \\
Mutual information $I(X;Y)$ & Spectral concentration $\psi$ \\
Deterministic source ($H = 0$) & Scalar observation ($\kappa = 1$) \\
Maximum-entropy source & Full-rank tensor (max structure) \\
Complementary channels & Conjugate generators (Robertson) \\
\textemdash{} & Conjugate capacity bound on $\mathbb{C}^M$ \\
\textemdash{} & Law of large numbers ($\kappa = 1$) \\
\bottomrule
\end{tabular}
\end{table}

The last two rows of Table~\ref{tab:correspondences} have no Shannon analog: both the conjugate capacity bound and the law of large numbers are structural phenomena invisible from the content perspective. The LLN is the structural-capacity bound for rank-0 data with $\kappa = 1$, recovered from Theorem~\ref{thm:source} with $C(f) = \sigma^2$ and $\kappa = 1$ giving $\Var \geq \sigma^2 / N$ from $N$ observations. This is a one-line consequence of the four theorems above.

\section{Data versus Structure: the Effective-Dimension--Snapshot Residue Surface}
\label{sec:gl-surface}

The framework offers a practitioner two distinct ways to sharpen a second-order estimate. One can collect more data, increasing the number of snapshots $L$, or one can impose more algebraic structure, averaging the single-observation estimate over a larger matched group $G$. The PASE principle quantifies the second route through the group gain, and the $(G, L)$ continuum of Section~\ref{sec:fourthms} places the two routes on a common footing. This section makes that footing explicit by treating the algebraic residue energy as a surface over the two coordinates and reading its two slopes as marginal values. The surface turns out to be a bias--variance object whose variance term carries the group-gain law, and its shape tells a practitioner whether the right next move is to find structure or to collect data. A non-Abelian experiment then shows that the correct horizontal axis is not the group order but the effective dimension, refining the $d_{\mathrm{eff}}$ used in Section~\ref{sec:fourthms}.

\subsection{The residue surface is a bias--variance decomposition}

Let $\mathbf{R}$ be the $M \times M$ population covariance, let $\hat{\mathbf{R}}$ be the sample covariance formed from $L$ snapshots, and let $\rho$ be a unitary representation of a finite group $G$. The group-averaged estimator is
\begin{equation}
\hat{\mathbf{R}}_G \;=\; \frac{1}{|G|}\sum_{g \in G} \rho(g)\, \hat{\mathbf{R}}\, \rho(g)^{\dagger},
\end{equation}
which is the orthogonal projection of $\hat{\mathbf{R}}$ onto the commutant $\mathcal{C}(\rho) = \{X : \rho(g) X = X \rho(g)\ \forall g\}$. Define the normalized residue energy $\mathcal{E}(|G|, L) = \mathbb{E}\,\|\hat{\mathbf{R}}_G - \mathbf{R}\|_F^2 / \|\mathbf{R}\|_F^2$. Because group averaging is a projection, the error splits cleanly into a structural term and a sampling term,
\begin{equation}
\label{eq:gl-bv}
\mathbb{E}\,\|\hat{\mathbf{R}}_G - \mathbf{R}\|_F^2
=
\underbrace{\|\Pi_{\mathcal C}\mathbf{R} - \mathbf{R}\|_F^2}_{\text{bias (depends on }G\text{)}}
+
\underbrace{\mathbb{E}\,\|\Pi_{\mathcal C}(\hat{\mathbf{R}} - \mathbf{R})\|_F^2}_{\text{variance}\,\propto\,1/(d_{\mathrm{eff}} L)},
\end{equation}
where $\Pi_{\mathcal C}$ is the projection onto the commutant. The first term is the energy of $\mathbf{R}$ that does not lie in the commutant of $G$; it is zero when $\mathbf{R}$ respects the symmetry $G$ and otherwise is a floor that no amount of data can lower. The second term is the sampling fluctuation surviving the projection. For a matched signal, the surface is pure variance and falls in both coordinates; for a mismatched signal, the bias term lifts part of the surface above the floor where data cannot reach.

\subsection{The matched surface: a substitution law between data and structure}

When $\mathbf{R}$ lies in the commutant of every group in a subgroup chain, the bias term in \eqref{eq:gl-bv} vanishes and the residue is governed entirely by the variance term. Taking $\mathbf{R}$ in the commutant and isotropic sampling, the surviving fluctuation occupies the commutant subspace, so
\begin{equation}
\label{eq:gl-var}
\mathcal{E}(|G|, L) \;\approx\; \frac{c}{d_{\mathrm{eff}}\, L},
\qquad
d_{\mathrm{eff}} \;=\; \frac{M^2}{\dim \mathcal{C}(\rho)} ,
\end{equation}
with $d_{\mathrm{eff}}$ the variance-reduction factor of the group-averaged estimator, that is, the group gain expressed as a dimension. The variance therefore depends on $|G|$ and $L$ only through the product $d_{\mathrm{eff}}\, L$: in the matched regime, structure and data are substitutes, and the level sets of the residue surface are hyperbolas in the $(d_{\mathrm{eff}}, L)$ plane. The left panel of Figure~\ref{fig:gl-surface} shows this for a circulant signal with full cyclic symmetry. The contours are the predicted hyperbolas, and the product $\mathcal{E} \cdot d_{\mathrm{eff}} L$ is approximately constant across the grid.

The substitution is exact in the white-signal limit. For a scalar matched signal, $\mathbf{R} \propto \mathbf{I}$, the sampling fluctuation is unitarily invariant and \eqref{eq:gl-var} holds with equality, $d_{\mathrm{eff}} = M^2/\dim \mathcal{C}(\rho)$; the residue then depends only on the product $d_{\mathrm{eff}} L$ and the iso-contours are exact hyperbolas. A numerical sweep confirms this: the coefficient of variation of $\mathcal{E} \cdot d_{\mathrm{eff}} L$ across the entire grid is $0.005$ for a white matched signal, and along a fixed product $d_{\mathrm{eff}} L = 16$ the residue is flat to within one percent as snapshots are traded for group order. For a colored matched signal the two routes are no longer perfectly interchangeable. The sampling fluctuation inherits the anisotropy of $\mathbf{R}$, so the commutant directions retained by a larger group carry, on average, a different share of the fluctuation than independent snapshots remove; along a fixed product the data route slightly outperforms the structure route. For the circulant signal of Figure~\ref{fig:gl-surface} this deviation is modest but real, persisting at high trial count, with the product coefficient of variation rising to $0.35$. The exchange rate between the two routes is therefore set by $d_{\mathrm{eff}}$, exactly for white signals and with an $\mathbf{R}$-dependent correction otherwise.

\subsection{The mismatched surface: the valley and the sign of the structure slope}

When $\mathbf{R}$ respects only a proper subgroup, increasing the group order past the matched order forces invariances the signal does not have. The bias term in \eqref{eq:gl-bv} then grows with $|G|$, and the slope of the surface in the structure coordinate changes sign. The center panel of Figure~\ref{fig:gl-surface} shows a signal whose true symmetry is $\mathbb{Z}_8$. Residue falls in $|G|$ only up to the matched order, marked by the dashed line, and rises above it; the minimum-residue locus is a valley along the matched order rather than the corner of largest group. The right panel isolates the mechanism: at and below the matched order the residue rides the $\mathcal{E} \propto 1/L$ line of pure variance, while above the matched order it flattens to a nonzero bias floor and, at large $L$, lies above the matched curve. Past the matched order, more data buys nothing and the correct response is a different group, not a larger one.

\begin{figure*}[t]
\centering
\includegraphics[width=\textwidth]{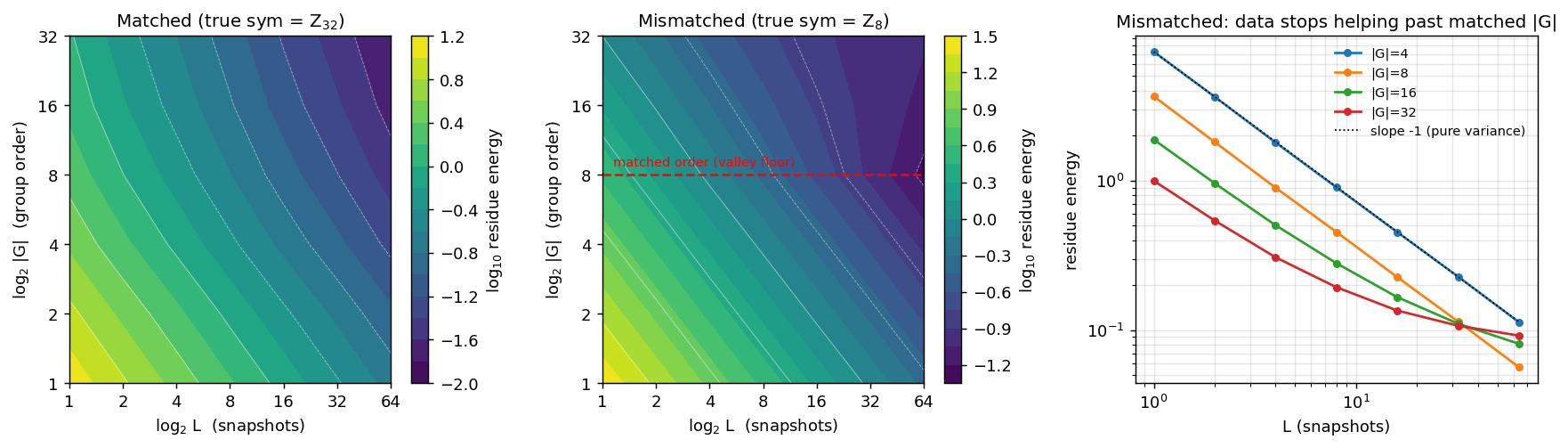}
\caption{Residue energy over group order and snapshot count, $M = 32$. \textbf{Left:} matched signal (full cyclic symmetry); contours are hyperbolas, so structure and data substitute. \textbf{Center:} mismatched signal (true symmetry $\mathbb{Z}_8$); the residue valley lies along the matched order (dashed line) and the structure slope turns positive above it. \textbf{Right:} for the mismatched signal, group orders at or below the matched order follow the pure-variance slope $-1$, while larger group orders flatten to bias floors that more data cannot lower.}
\label{fig:gl-surface}
\end{figure*}

\subsection{A ground-truth-free diagnostic}

Equations \eqref{eq:gl-bv} and \eqref{eq:gl-var} suggest a procedure that needs no knowledge of the true covariance. At a fixed candidate group, measure the residue at several snapshot counts and fit $\mathcal{E}(L) \approx B + c/L$. The intercept $B$ estimates the structural bias floor, the wall that more data cannot breach, and the slope term $c/L$ estimates the data-limited part. Their ratio at the operating point, $(c/L)/B$, is a single actionable number: when it is large the estimate is data-limited and the right move is to collect more snapshots; when it is small the estimate is structure-limited and the right move is to change the group. Applied to the mismatched signal of Figure~\ref{fig:gl-surface}, the fit returns $B \approx 0$ for every group order at or below the matched order and a positive, growing $B$ above it, and the ratio correctly flags the over-large groups as structure-limited while reading everything at or below the matched order as data-limited. Sweeping the group order at the largest affordable $L$ and locating the minimum recovers the matched order itself, and the upturn beyond it is the observable signal that the symmetry has been over-imposed.

The decomposition has a model-selection reading. The intercept $B$ is the approximation (bias) floor and $c/L$ the estimation (variance) term, so the fit is a structured-estimation analogue of the two-part code of minimum description length \cite{rissanen1978} and of the learning curve assessed by cross-validation \cite{stone1974}. What differs is the role of model complexity: a richer matched group lowers the variance term while adding no bias at all until the true symmetry order is crossed, so within the matched regime structure behaves as a variance-reduction resource interchangeable with data rather than as the usual bias-variance dial.

\subsection{The correct axis is effective dimension, not group order}

Equation \eqref{eq:gl-var} identifies the variance-reduction factor as $d_{\mathrm{eff}} = M^2/\dim \mathcal{C}(\rho)$ rather than the group order $|G|$. For an Abelian group in its regular representation the two coincide, so a cyclic study cannot separate them. To separate them, consider non-Abelian groups and representations in which the group order, the sum of squared irreducible dimensions, and the commutant dimension are deliberately decorrelated. For a representation with isotypic decomposition into irreducibles of dimension $d_i$ and multiplicity $m_i$, one has $M = \sum_i m_i d_i$ and $\dim \mathcal{C}(\rho) = \sum_i m_i^2$, so $d_{\mathrm{eff}} = M^2 / \sum_i m_i^2$. This reduces to $|G|$ for the Abelian regular representation and to $\sum_i d_i^2$ for the regular representation and for any single-isotypic representation, recovering the framework's effective-dimension formula of Section~\ref{sec:fourthms} in those canonical cases, but it is the more general invariant.

Table~\ref{tab:gl-nonab} and Figure~\ref{fig:gl-nonab} report the measured group gain, the ratio $\|\hat{\mathbf{R}} - \mathbf{I}\|_F^2 / \|\hat{\mathbf{R}}_G - \mathbf{I}\|_F^2$ for a matched signal $\mathbf{R} = \mathbf{I}$, against the three candidate predictors. The measured gain tracks $M^2/\dim \mathcal{C}(\rho)$ across every case. Two cases are decisive. The two-dimensional irreducible representation of the quaternion group $Q_8$ has gain near $4$, the square of the irreducible dimension, not its group order $8$, so the axis is not $|G|$. The unbalanced representation $\mathrm{triv} \oplus (\text{2-dim})$ of $S_3$ has gain near $4.5 = M^2/\dim \mathcal{C}(\rho)$, not $\sum_i d_i^2 = 5$, so the operative invariant is the commutant dimension and the sum-of-squares formula is its special case. The single-isotypic representations of $S_3$ at multiplicities $1, 2, 3$ all give gain near $4$ independent of multiplicity, confirming that the gain depends on the representation type and not on how many copies are stacked.

\begin{table*}[t]
\centering
\caption{Measured group gain for a matched white signal ($\mathbf{R} = \mathbf{I}$) against three candidate predictors, $L = 64$, $6000$ Monte Carlo trials. The commutant dimension is computed exactly from the characters, $\dim \mathcal{C} = \frac{1}{|G|}\sum_g |\chi(g)|^2$, and $M^2/\dim \mathcal{C}$ is the exact predicted gain. The measured gain matches it to within one percent in every case; it equals $\sum_i d_i^2$ only for the regular and single-isotypic representations, and it never equals $|G|$ except where the two coincide.}
\label{tab:gl-nonab}
\small
\renewcommand{\arraystretch}{1.15}
\begin{tabular}{lrrrrr}
\toprule
representation & $M$ & $|G|$ & $\sum_i d_i^2$ & $M^2/\dim \mathcal{C}$ & measured \\
\midrule
$S_3$ 2-dim, $m=1$            & 2 & 6 & 4 & 4.0 & 4.04 \\
$S_3$ 2-dim, $m=2$            & 4 & 6 & 4 & 4.0 & 4.02 \\
$S_3$ 2-dim, $m=3$            & 6 & 6 & 4 & 4.0 & 3.99 \\
$S_3$ regular                & 6 & 6 & 6 & 6.0 & 6.00 \\
$S_3$ $\mathrm{triv}\oplus$2-dim & 3 & 6 & 5 & 4.5 & 4.50 \\
$Q_8$ 2-dim, $m=1$           & 2 & 8 & 4 & 4.0 & 4.00 \\
$Q_8$ 2-dim, $m=2$           & 4 & 8 & 4 & 4.0 & 4.02 \\
$\mathbb{Z}_8$ regular        & 8 & 8 & 8 & 8.0 & 8.06 \\
\bottomrule
\end{tabular}
\end{table*}

\begin{figure}[t]
\centering
\includegraphics[width=0.95\linewidth]{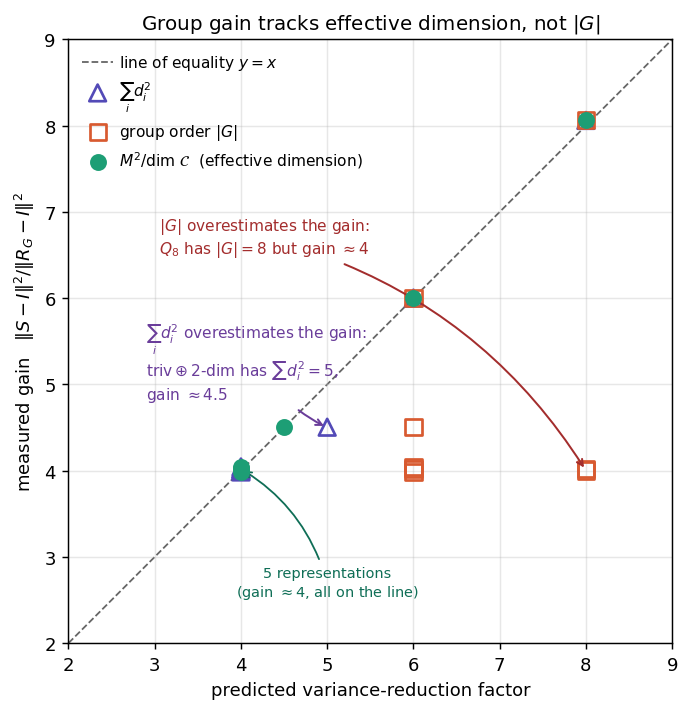}
\caption{Measured group gain against three candidate predictors. Only $M^2/\dim \mathcal{C}(\rho)$, the effective dimension, lies on the line of equality for all representations. The group order and the sum of squared irreducible dimensions both fall off the line in the decisive cases ($Q_8$ for $|G|$, the unbalanced $S_3$ representation for $\sum_i d_i^2$).}
\label{fig:gl-nonab}
\end{figure}

\subsection{Practitioner recipe}

The surface yields a compact rule. Replace the group-order axis by the effective dimension $d_{\mathrm{eff}} = M^2/\dim \mathcal{C}(\rho)$. In the matched regime, structure and data substitute along $d_{\mathrm{eff}}\, L$, exactly for a white signal and with an $\mathbf{R}$-dependent correction otherwise. To act without ground truth, sweep $L$ at a fixed candidate group, fit $B + c/L$, and read the data-versus-structure ratio; sweep $d_{\mathrm{eff}}$ at the largest affordable $L$ to locate the matched order as the valley floor and to detect over-imposed symmetry as the upturn beyond it. The single message is that more data and a richer group are interchangeable only up to the structural floor set by the signal's true symmetry, and the floor, not the group order, is what a diagnostic must find.

\section{Erlangen Reading: Three Information-Theoretic Regimes}
\label{sec:erlangen}

The four theorems of Section~\ref{sec:fourthms} admit a higher-level interpretation in the style of Klein's Erlangen program~\cite{klein1872}. The reading is an interpretation, not itself a theorem.

\subsection{Klein's program and the analogue here}

In Klein's Erlangen program, a geometry is defined by the group that preserves its structure: Euclidean geometry by the rigid-motion group, projective geometry by the projective group, affine geometry by the affine group. The properties studied within each geometry are exactly those invariant under the defining group. The analogue suggested by the present material is that the information extractable from an observation is determined by the symmetry group brought to bear on it: the covariance imposes a bilinear structure on the observation space, and the group-averaged estimator imposes a representation-theoretic structure on top of it. Different groups reveal different aspects of the same observation, and the symmetric group via the Cayley graph construction reveals the most, recovering the full KL spectrum.

\subsection{Shannon information content (R\'{e}nyi-1)}

Shannon's theory~\cite{shannon1948} measures information content via the entropy of order 1,
\begin{equation}\label{eq:shannon_ent}
H_1(p) \;=\; -\sum_k p_k \log p_k,
\end{equation}
a functional of the probability distribution of a source. It is observer-independent: it depends neither on the instrument used nor on any extraction mechanism. Shannon's strength is precisely this universality; his boundary is the same property. The entropy measures how many bits a source contains, not how many can be extracted from one observation. The classical answer to the extraction question is the law of large numbers: to extract the $H_1$ bits from a noisy channel, average many samples. This is the trivial-group case ($\kappa = 1$) of Theorem~\ref{thm:source} (structural source coding), recovered explicitly in Section~\ref{sec:fourthms}.

\subsection{AD information structure (R\'{e}nyi-2)}

The structural capacity $\kappa$ of equation~\eqref{eq:kappa_spec} is a functional of the Rényi-2 entropy of the eigenvalue spectrum of $\mathbf{R}$, in direct analogy to Shannon's content as a functional of Rényi-1. It quantifies the extent to which signal modes stand out from their environment, which determines how much of the Shannon content is accessible from a single observation. For a rank-one signal, $\kappa = 2$ (the floor: a concentrated spectrum admits the trivial-group baseline only and no additional exploitable degeneracy). For white noise on $\mathbb{C}^M$, $\kappa = M+1$ (the ceiling: all $M$ spectral modes are equally weighted, giving the maximum participation ratio and the maximum exploitable degeneracy). For a rank-one signal in noise, $\kappa$ decreases from $M+1$ toward $2$ as SNR grows and the covariance becomes increasingly rank-one-dominated. As established in Section~\ref{sec:fourthms}, $\kappa$ is the operational fundamental measure of single-observation estimation efficiency.

A sinusoid in silence has low Shannon entropy but high structural capacity: predictable, few bits, but those few bits are maximally separated from the background. White noise has maximum Shannon entropy but minimum structural capacity: every mode equally present, no mode distinguished, no single-observation extraction possible. Shannon theory is the limit for data communication; structural capacity is the limit for single-observation estimation. They are not competitors; they measure different aspects of the same reality.

\subsection{Von Neumann entropy and the quantum measurement setting}
\label{sec:vn}

A third regime is the quantum measurement setting, governed by the \emph{von Neumann entropy}
\begin{equation}\label{eq:vn}
S(\rho) \;=\; -\Tr(\rho \log \rho) \;=\; -\sum_k \lambda_k(\rho) \log \lambda_k(\rho),
\end{equation}
where $\rho$ is a density matrix and $\{\lambda_k(\rho)\}$ are its eigenvalues. The von Neumann entropy is Shannon's entropy applied to the spectrum of the density matrix: it measures information content in the quantum setting in the Shannon sense.

The quantum measurement problem is that a density matrix cannot be measured twice: the no-cloning theorem and the measurement postulate together forbid repeated access to a single copy of an unknown state. Classical quantum state tomography requires the preparation of many identical copies, which is the quantum analog of the trivial-group approach of averaging many independent classical observations: the Shannon-style $H_1$ content of the state is accessible through $L$ independent copies via the law of large numbers applied to measurement outcomes.

Algebraic diversity in the quantum setting~\cite{thornton2026quantum} replaces the many-copy protocol with a single-copy protocol that applies a group-structured set of measurement bases to one copy, then averages the rank-1 outcomes over the group. The result is a full-rank density-matrix estimator from one copy, at the cost of requiring the measurement bases themselves to be group-structured. The information-theoretic interpretation is that quantum AD extracts the structural capacity of $\rho$, $\kappa(\rho)$, from one copy, while the conventional multi-copy tomography extracts the von Neumann content $S(\rho)$ at asymptotic rate $1/\sqrt{L}$. The two quantities are distinct. The von Neumann entropy measures the informational content of the state for purposes of channel capacity and data transmission; the structural capacity measures the single-copy extractable structure of the state. Both are functionals of the same eigenvalue distribution through different R\'{e}nyi orders.

\subsection{Three regimes, one principle}

The three information measures, Shannon's $H_1$, the AD structural capacity (a functional of $H_2$), and the von Neumann $S$, correspond to three complementary perspectives on information. Shannon and von Neumann measure \emph{content} (classical and quantum respectively); AD measures \emph{structure}. The structural capacity is the R\'{e}nyi-2 analogue of the Shannon and von Neumann R\'{e}nyi-1 entropies, and it is the natural information measure for single-observation algebraic estimation. Whether this R\'{e}nyi-order separation ultimately proves productive beyond the concrete results reported here and in~\cite{thornton2026ad, thornton2026quantum} is for subsequent work to establish.

\section{Related Work}
\label{sec:related}

Four streams of prior work have substantial conceptual overlap with the algebraic diversity framework, and a proper assessment of what AD contributes requires situating it against each of them. We treat the four streams in historical order: classical invariant estimation and group-invariant hypothesis testing as developed from the mid-twentieth century onward, minimax and convex-optimization approaches to robust estimation developed primarily over the past two decades, algebraic signal processing developed by P\"uschel, Moura, and their co-authors starting in the mid-2000s together with its graph signal processing descendants, and compressed sensing developed by Cand\`{e}s, Romberg, Tao, and Donoho in the mid-2000s. In each case we identify the overlap with AD, state what AD adds that the prior framework did not, and note where AD and the prior framework can be composed rather than competed. No claim of subsumption is made in either direction; the goal is to clarify relationships rather than to rank frameworks.

\subsection{Invariant estimation and hypothesis testing}
\label{sec:invariant}

The use of group-invariance as a principled restriction on statistical procedures has a long history, running through the invariant hypothesis testing framework of Lehmann and subsequent authors~\cite{lehmann_romano2005} and entering signal processing through the detection-and-estimation textbook tradition represented by Poor~\cite{poor1994}. In this framework, a decision problem or estimation problem is said to be \emph{invariant} under a group $\mathcal{G}$ of transformations if the problem itself is unchanged under the action of $\mathcal{G}$ on the observation space, in the sense that the hypothesis structure, the parameter space, and the loss function all commute with the group action. Under such a symmetry, the principle of invariance asserts that a statistical procedure should also be invariant under the same group, and this restriction to invariant procedures gives rise to a sharp theory of optimality within the invariant class: the uniformly most powerful invariant (UMPI) test, the equivariant Bayes estimator, and the constant-false-alarm-rate (CFAR) detector are all consequences of the invariance principle applied in the appropriate setting. For signal processing applications, the framework has been particularly productive in the design of detectors whose false alarm rate does not depend on unknown nuisance parameters, and in the derivation of maximal-invariant statistics whose distribution is free of the transformation group under which the problem is invariant.

The relationship to the algebraic diversity framework is substantive, and the distinction between the two settings is worth stating precisely. In the classical invariant-estimation framework, the group under which the problem is invariant is part of the problem's specification: it is known in advance and enters as a structural constraint on the space of admissible procedures. The AD framework, in contrast, treats the signal's algebraic symmetry as data: the matched group is identified from a single observation (or a small number of observations) by the blind group-matching methodology of Section~\ref{sec:blindmatching}, and the resulting group is then used to construct a group-averaged estimator that is invariant under the identified symmetry. In this sense AD is the data-driven counterpart of the classical invariant estimation framework. Where classical invariance analysis asks ``given the symmetry group of this problem, what is the best procedure within the class of procedures respecting that symmetry?'', AD asks ``given this observation, what symmetry group does the signal carry, and what variance reduction does the corresponding group-averaged estimator provide?''. The two viewpoints compose cleanly: AD supplies the identification of the group from data, and classical invariant-estimation theory supplies the optimality framework once the group is known.

A second distinction is in the role the group plays. In classical invariant hypothesis testing, the group acts on the problem setup (the null and alternative distributions) and the practitioner's interest is in procedures whose operating characteristics are free of nuisance parameters absorbed by the group. In AD, the group acts on the signal itself through a unitary representation, and the practitioner's interest is in the variance reduction delivered by averaging a statistic over the orbit of that action. Nothing in the invariant-estimation literature is inconsistent with AD, and the two viewpoints can operate on the same problem in complementary modes: classical invariance shapes the procedure's nuisance-parameter robustness, while AD supplies the single-observation variance reduction through group orbit averaging.

\subsection{Minimax and convex-optimization approaches to robust estimation}
\label{sec:minimax}

A distinct modern tradition in signal processing and estimation, developed extensively by Eldar and co-authors over the past two decades, treats estimator design as a convex optimization problem under uncertainty~\cite{eldar_merhav2004, eldar_bental2005, palomar_eldar2010}. In the minimax formulation, the estimator is designed to minimize the worst-case mean-squared error across a specified uncertainty set in the signal covariance, the observation matrix, or the noise statistics; when the uncertainty set is represented by linear matrix inequalities, the resulting problem is a semidefinite program that can be solved efficiently by interior-point methods. A related line of work considers minimax \emph{regret} formulations~\cite{eldar_merhav2004}, in which the estimator minimizes the worst-case gap between its own mean-squared error and the mean-squared error of the best estimator that would have known the uncertain quantity, again under a specified uncertainty set. The framework applies broadly across signal processing and communications, and has produced substantive advances in robust beamforming, radar waveform design, and signal recovery under model uncertainty. A comprehensive treatment of the modern convex-optimization approach to signal processing is given in the Palomar-Eldar edited volume~\cite{palomar_eldar2010}.

The algebraic diversity framework shares with this tradition the use of a tractable optimization to drive estimator design, but the two frameworks approach the design problem from orthogonal directions. The minimax-and-convex tradition proceeds from a specified uncertainty set on the signal or the channel and designs the estimator to be robust across that set, producing estimators whose form often resembles a regularized least-squares or a generalized-inverse solution with a specific weighting. The algebraic diversity framework proceeds from an identified algebraic symmetry of the signal and designs the estimator to exploit that symmetry through group orbit averaging, producing estimators of the form given in equation~\eqref{eq:gae}. The specific optimization that arises in AD, namely the continuous relaxation via the double-commutator generalized eigenvalue problem on the Lie algebra $\mathfrak{u}(M)$ of Section~\ref{sec:dadcad}, is a polynomial-time eigenvalue problem on a fixed-dimensional vector space rather than a semidefinite program on a cone of positive matrices, and it returns a distinguished direction in a Lie algebra rather than a robust-optimal estimator across an uncertainty set. In principle the two frameworks can be composed: a minimax-regret robust estimator could be computed downstream of an AD-identified matched group, thereby combining the symmetry-exploitation benefit of AD with the uncertainty-robustness benefit of the minimax formulation. Whether such a composition delivers practical advantage in any particular application is a matter for empirical investigation and is not addressed here.

\subsection{Algebraic signal processing}
\label{sec:asp}

Algebraic signal processing (ASP), developed by P\"uschel and Moura in a foundational series of papers~\cite{puschel2008, puschel2008space} and extended by Sandryhaila and co-authors in subsequent work~\cite{sandryhaila2012nn, sandryhaila2013graphs, sandryhaila2014frequency}, assigns to each signal model a polynomial algebra on a finite group, from which the appropriate transform is derived by representation theory. The correspondence at the center of the framework is as follows. The cyclic group gives shift-invariant discrete-time signals with Fourier transforms; the dihedral group gives symmetric boundary conditions and DCT; the symmetric group gives graph signals on complete graphs; compound groups give compound models. The original two P\"uschel-Moura papers~\cite{puschel2008, puschel2008space} cover 1-D time (discrete-time signals with directional shift) and 1-D space (signals on a line with undirected shift and symmetric boundary conditions, a setting that covers the 16 distinct discrete trigonometric transforms including all DCT and DST variants). Subsequent work extends the framework in two directions of particular relevance here. Sandryhaila, Kova\v{c}evi\'{c}, and P\"uschel~\cite{sandryhaila2012nn} develop 1-D nearest-neighbor models in which the shift is symmetric between adjacent indices, yielding a framework in which orthogonal polynomial bases on the real line arise naturally as the corresponding Fourier transforms; Sandryhaila and Moura~\cite{sandryhaila2013graphs, sandryhaila2014frequency} then extend the shift-based framework to signals on arbitrary graphs, establishing the graph signal processing field by taking the weighted adjacency matrix of the graph as the elementary shift operator. The resulting graph signal processing framework has become a substantial subfield in its own right, with its own frequency notion, its own filters, and its own sampling theory.

The relationship between ASP and AD is complementary, and it is worth stating the division of labor precisely. ASP starts from a \emph{known} signal model, specified as a polynomial algebra together with an associated module and shift, and constructs the matched transform by the representation theory of the corresponding algebra. AD supplies two ingredients that ASP as formulated does not directly address: a single-observation estimator whose variance is reduced by $d_{\mathrm{eff}}(G, f)$ relative to the trivial-group case, and a model-selection criterion that identifies the group (and thence the ASP transform) from data when the model is not known in advance. We expand on each of these contributions.

The first contribution is the estimator. Given a single observation $\mathbf{x}$ and a hypothesized matched group $G$, the group-averaged estimator of equation~\eqref{eq:gae} produces a full-rank spectral estimate whose variance decreases by a factor of $d_{\mathrm{eff}}(G, f)$ relative to the sample-covariance estimator of the same observation. ASP, which is fundamentally a transform-theoretic framework organized around the algebraic structure of shifts and filters, does not directly supply a variance-reducing estimator at the single-observation level; the transform is computed, and the estimation problem is then passed to whatever statistical procedure the practitioner selects.

The second contribution is the model-selection criterion. When the signal model is not known in advance, the blind group-matching methodology of Section~\ref{sec:blindmatching} selects the group (and thence the ASP transform matched to that group) from data by evaluating the cross-validation residual $D_{CV}$ across a library of candidate groups, or, via the continuous relaxation of Section~\ref{sec:dadcad}, by a polynomial-time eigenvalue computation on the Lie algebra $\mathfrak{u}(M)$. ASP as formulated assumes the model is given, and passes the model-selection question to the designer; AD supplies the data-driven selection step that closes this loop.

Taken together, ASP and the AD viewpoint form a natural two-step pipeline. Blind group matching identifies the model from observations, and ASP assigns the transform matched to the identified model. The pipeline inherits ASP's full representation-theoretic machinery for the transform step, together with the AD diagnostic suite (coloring index $\alpha$, commutativity residual $\delta$, spectral coherence index $\chi$, structural capacity $\kappa$, cross-validation $D_{CV}$, effective group order $d_{\mathrm{eff}}$) for the identification step. The graph-automorphism theorem of Section~\ref{sec:dadcad} is particularly relevant to the ASP graph-signal-processing descendant~\cite{sandryhaila2013graphs}: the double-commutator residual is zero if and only if the permutation is a graph automorphism, which gives an exact algebraic oracle for symmetry detection on graph signals without requiring spectral-signature comparison across candidate automorphisms.

\subsection{Complete reconstruction: the structured part and the algebraic residue}
\label{sec:residue}

The development to this point has treated the matched group as a means of variance reduction in second-order estimation: the group-averaged estimator of equation~\eqref{eq:gae} concentrates the signal's energy into the irreducible subspaces of the matched group, and the resulting transform is the natural basis for analysis. A question left open by that development is what becomes of the part of the signal that the group does \emph{not} concentrate. Conventional signal processing answers the analogous question for the trivial group with the model $\mathbf{x} = \mathbf{s} + \mathbf{n}$, structure plus noise, but algebraic diversity has not so far said how a signal is reconstructed in full from its group-theoretic parts. This subsection closes that gap, and in doing so places algebraic diversity in exact parallel with the energy-and-structure picture of compressed sensing developed in the subsection that follows.

Fix a matched group $G$ with group-averaged covariance $\mathbf{R}_G$ from equation~\eqref{eq:gae}, and let $\mathbf{U}$ be the group-adapted Karhunen--Lo\`{e}ve basis that block-diagonalizes $\mathbf{R}_G$ into its irreducible components. Partition the coordinates of $\mathbf{U}$ into the \emph{coherent} subspace, the eigenvectors of $\mathbf{R}_G$ whose eigenvalues exceed the isotropic floor, and its orthogonal complement. Let $\mathbf{P}$ be the orthogonal projector onto the coherent subspace. Then any observation $\mathbf{x}$ decomposes as
\begin{equation}\label{eq:sym_plus_res}
\mathbf{x} \;=\; \mathbf{x}_{\mathrm{str}} + \mathbf{x}_{\mathrm{res}}, \qquad
\mathbf{x}_{\mathrm{str}} = \mathbf{P}\mathbf{x}, \quad
\mathbf{x}_{\mathrm{res}} = (\mathbf{I} - \mathbf{P})\mathbf{x},
\end{equation}
the \emph{structured part} carried by the coherent subspace of the matched group, and the \emph{algebraic residue} carried by its complement. The decomposition is exact and orthogonal by construction, $\mathbf{x}_{\mathrm{str}}^H \mathbf{x}_{\mathrm{res}} = 0$ and $\|\mathbf{x}\|^2 = \|\mathbf{x}_{\mathrm{str}}\|^2 + \|\mathbf{x}_{\mathrm{res}}\|^2$, so it loses no information and introduces no cross term. This is the algebraic-diversity counterpart of the conventional $\mathbf{s} + \mathbf{n}$ split, with the matched group, rather than an assumed noise model, determining which part is which.

\begin{theorem}[Complementary decomposition]\label{thm:complementary}
Let $G$ be a matched group with group-averaged covariance $\mathbf{R}_G$ and coherent projector $\mathbf{P}$. The decomposition~\eqref{eq:sym_plus_res} is the unique orthogonal decomposition of $\mathbf{x}$ into a component in the coherent subspace of $\mathbf{R}_G$ and a component in its complement. The structured part is the best $k$-term approximation of $\mathbf{x}$ in the group-adapted basis in the mean-square sense, where $k$ is the dimension of the coherent subspace, and the algebraic residue is the corresponding approximation error.
\end{theorem}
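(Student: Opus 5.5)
The argument has two parts: uniqueness of the orthogonal split, then the best $k$-term approximation property. Let $\mathcal{S} = \mathrm{range}(\mathbf{P})$ be the coherent subspace. It is spanned by eigenvectors of the Hermitian matrix $\mathbf{R}_G$, so it is a well-defined subspace of $\mathbb{C}^M$. It is also $G$-invariant, because $\mathbf{R}_G$ commutes with every $\pi_g$ and each $\pi_g$ therefore preserves the eigenspaces of $\mathbf{R}_G$. We assume the coherent/isotropic threshold does not split an eigenvalue, so that $\mathcal{S}$ is a union of full eigenspaces and $\mathbf{P}$ is unambiguous. The orthogonal projector $\mathbf{P}$ is then a polynomial in $\mathbf{R}_G$ and hence lies in the commutant of $G$.

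For uniqueness, suppose $\mathbf{x} = \mathbf{a} + \mathbf{b} = \mathbf{a}' + \mathbf{b}'$ with $\mathbf{a}, \mathbf{a}' \in \mathcal{S}$ and $\mathbf{b}, \mathbf{b}' \in \mathcal{S}^\perp$. Then $\mathbf{a} - \mathbf{a}' = \mathbf{b}' - \mathbf{b}$ lies in $\mathcal{S} \cap \mathcal{S}^\perp = \{0\}$. Since $\mathbf{P}\mathbf{x} \in \mathcal{S}$ and $(\mathbf{I} - \mathbf{P})\mathbf{x} \in \mathcal{S}^\perp$, the pair in~\eqref{eq:sym_plus_res} is that unique decomposition. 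The Pythagorean identity follows from $\mathbf{P}^H = \mathbf{P} = \mathbf{P}^2$.

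For the approximation claim, we take the natural formalization. A $k$-term approximation in the group-adapted basis $\mathbf{U}$ is any vector $\mathbf{U}_{\mathcal{T}}\mathbf{c}$ with $|\mathcal{T}| = k$, and the mean-square criterion is $\mathbb{E}\|\mathbf{x} - \mathbf{Q}_{\mathcal{T}}\mathbf{x}\|^2$ with $\mathbf{Q}_{\mathcal{T}}$ the coordinate projector onto the columns indexed by $\mathcal{T}$. For each fixed $\mathcal{T}$, the optimal coefficients are $\mathbf{c} = \mathbf{U}_{\mathcal{T}}^H\mathbf{x}$, because the columns of $\mathbf{U}$ are orthonormal. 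The expected error is then
\begin{equation}
\mathbb{E}\|\mathbf{x} - \mathbf{Q}_{\mathcal{T}}\mathbf{x}\|^2 = \Tr(\mathbf{R}) - \sum_{j\in\mathcal{T}} \mathbf{u}_j^H\mathbf{R}\,\mathbf{u}_j .
\end{equation}
When $G$ is matched, $\mathbf{R}$ lies in the commutant, so $\mathcal{P}_G(\mathbf{R}) = \mathbf{R}$ and the population covariance coincides with the group average. Each $\mathbf{u}_j^H\mathbf{R}\mathbf{u}_j$ is therefore an eigenvalue $\lambda_j$ of $\mathbf{R}_G$. Minimizing the error is the same as maximizing $\sum_{j\in\mathcal{T}}\lambda_j$ over $k$-subsets, and the maximum is attained by the $k$ largest eigenvalues, which are exactly those above the isotropic floor. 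The minimizer is $\mathcal{T} = $ coherent index set, giving $\mathbf{Q}_{\mathcal{T}} = \mathbf{P}$. Its error vector is $\mathbf{x} - \mathbf{P}\mathbf{x} = \mathbf{x}_{\mathrm{res}}$, with mean-square value $\sum_{j\notin\mathcal{T}}\lambda_j$.

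The main obstacle is making the approximation claim precise, and it has two parts. First, $\mathbf{R}_G$ may be only the single-observation estimate rather than $\mathbf{R}$. In that case I would restrict the statement to the population (matched) covariance, or read \emph{mean-square} with respect to the empirical $\mathbf{R}_G$. The same calculation goes through with $\mathbf{R}_G$ in place of $\mathbf{R}$ by unbiasedness from Theorem~\ref{thm:outer}. Second, for non-Abelian $G$ the basis $\mathbf{U}$ is only determined up to unitary rotation within each $d_i$-dimensional isotypic block. Here Schur's lemma (as in Theorem~\ref{thm:converse}) makes $\mathbf{R}_G$ scalar on each block. The coherent set is then a union of whole blocks, so $\mathbf{P}$, and hence the optimal $k$-term approximation, is independent of the choice of $\mathbf{U}$ within those blocks. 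The statement should specify "generic threshold" to exclude ties at the floor.
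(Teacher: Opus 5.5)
\paragraph{Assessment.} Your uniqueness argument and your treatment of the population case ($\mathbf{R}_G=\mathcal{P}_G(\mathbf{R})=\mathbf{R}$) are correct. They follow the route the paper indicates: the split is orthogonal by construction, and the approximation claim is Eckart--Young applied in the group-adapted basis. Your minimization over coordinate index sets $\mathcal{T}$ is the coordinate special case of the Ky Fan principle. That principle says the maximum of $\Tr(\mathbf{Q}\mathbf{R}_G)$ over rank-$k$ orthogonal projectors $\mathbf{Q}$ is $\sum_{j\le k}\lambda_j(\mathbf{R}_G)$. Citing it instead would give optimality of $\mathbf{P}$ against every $k$-dimensional subspace, not only coordinate subspaces of $\mathbf{U}$. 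It would also make the choice of $\mathbf{U}$ inside eigenspaces irrelevant without further argument.

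\paragraph{The step that fails.} Your closing claim does not hold: that when $\mathbf{R}_G$ is the single-observation estimator of~\eqref{eq:gae} (which is how the paper defines it), the calculation carries over \emph{by unbiasedness}. Unbiasedness is not the mechanism and does not deliver the claim. The eigenbasis $\mathbf{U}$ and the coherent index set are themselves functions of $\mathbf{x}$. So $\mathbb{E}[\hat{\mathbf{R}}_G]=\mathbf{R}$ gives no control over a population mean-square criterion evaluated at a data-dependent projector. Moreover, the coherent set of $\hat{\mathbf{R}}_G$ need not coincide with the top-$k$ set of $\mathbf{R}$.

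\paragraph{The fix.} What makes the estimator case work is that $\hat{\mathbf{R}}_G=\frac{1}{|G|}\sum_{g}(\pi_g\mathbf{x})(\pi_g\mathbf{x})^H$ is exactly the second-moment matrix of the uniform distribution on the orbit $\mathcal{O}_G(\mathbf{x})$. Read \emph{mean-square} as the orbit average. Then the identity $\frac{1}{|G|}\sum_{g}\|\pi_g\mathbf{x}-\mathbf{Q}_{\mathcal{T}}\pi_g\mathbf{x}\|^2=\Tr\hat{\mathbf{R}}_G-\sum_{j\in\mathcal{T}}\mathbf{u}_j^H\hat{\mathbf{R}}_G\mathbf{u}_j$ holds deterministically. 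Your sorting argument then applies verbatim, with no expectation over the data.

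Your own observation that $\mathbf{P}$ lies in the commutant ties this back to the single observation. Since $(\mathbf{I}-\mathbf{P})\pi_g\mathbf{x}=\pi_g\mathbf{x}_{\mathrm{res}}$, every orbit point has residual energy $\|\mathbf{x}_{\mathrm{res}}\|^2$ under $\mathbf{P}$. Hence the optimal orbit-mean-square error is exactly $\|\mathbf{x}_{\mathrm{res}}\|^2=\sum_{j\notin\mathcal{T}}\hat{\lambda}_j$. This is precisely the sense in which the algebraic residue is the approximation error.

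\paragraph{Smaller corrections.} Because the coherent set is defined by a strict threshold on eigenvalues, it automatically consists of whole eigenspaces. Your no-splitting and generic-threshold provisos are therefore not extra assumptions. Independence of $\mathbf{P}$ from the choice of $\mathbf{U}$ follows from $\mathbf{P}$ being a spectral projector, not from Schur's lemma. When an irreducible type has multiplicity $m_i>1$, $\mathbf{R}_G$ acts on the $m_i d_i$-dimensional isotypic component as $\mathbf{A}_i\otimes\mathbf{I}_{d_i}$, for some Hermitian $m_i\times m_i$ matrix $\mathbf{A}_i$. It is then scalar only on irreducible copies aligned with eigenvectors of $\mathbf{A}_i$.
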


The structured part is, by the Eckart--Young theorem applied in the group-adapted basis, the most compressible representation of the signal that the matched group affords: its energy is concentrated into the $k$ coherent modes, and the participation ratio
\begin{equation}\label{eq:participation}
D_2 \;=\; \frac{(\Tr \mathbf{R}_G)^2}{\|\mathbf{R}_G\|_F^2}
\end{equation}
measures the effective number of those modes, with the structural capacity $\kappa = 1 + D_2$ of Section~\ref{sec:fourthms} as its information-structure form. The algebraic residue, by contrast, is the component on which the matched group acts isotropically: its group-averaged covariance is a scalar multiple of the identity, so no nontrivial group concentrates it, and it carries the maximum-entropy, flat-spectrum part of the signal. The residue is therefore not noise in the conventional sense but the structureless remainder relative to the chosen family of groups; what is residue under one matched group can become structured under a larger group that concentrates a component the smaller group leaves isotropic.

\begin{theorem}[Maximum-entropy characterization of the residue]\label{thm:maxent}
Among all decompositions of $\mathbf{x}$ into a $k$-dimensional structured component and a complementary residue with the energy budget fixed by~\eqref{eq:sym_plus_res}, the decomposition induced by the matched group's coherent projector maximizes the R\'{e}nyi-2 entropy of the residue's modal energy distribution. The residue's group-averaged covariance is isotropic, and its coefficients in the group-adapted basis follow the flat, maximum-entropy law.
\end{theorem}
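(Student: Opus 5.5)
The plan is to make the statement precise first and then reduce it to a one-line extremal property of the Rényi-2 functional. Write $\mathbf{R}_G = \mathbf{U}\,\mathbf{\Lambda}\,\mathbf{U}^H$ in the group-adapted basis of Theorem~\ref{thm:complementary}, ordered so that the first $k$ columns span the coherent subspace. The \emph{isotropic floor} will be modelled as the matched-group version of the structure-plus-white-noise model: every eigenvalue outside the coherent subspace equals a common value $\sigma^2$. For any $k$-dimensional candidate structured subspace $\mathcal{S}$ with complement $\mathcal{S}^\perp$ and orthonormal basis $\{\mathbf{v}_j\}_{j=1}^{M-k}$ of $\mathcal{S}^\perp$, I define the residue modal energies $e_j = \mathbf{v}_j^H \mathbf{R}_G \mathbf{v}_j$. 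The normalized distribution is $q_j = e_j / \sum_i e_i$, and its Rényi-2 entropy is $H_2(q) = -\log \sum_j q_j^2$. The energy budget referred to in the statement is the fixed residue energy $\sum_j e_j = \Tr(\mathbf{P}^\perp \mathbf{R}_G)$ prescribed by~\eqref{eq:sym_plus_res}. The competing decompositions are the $k$-dimensional subspaces $\mathcal{S}$, with orthonormal bases of $\mathcal{S}^\perp$, that meet this budget.

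The second step establishes isotropy of the matched residue. By Remark~\ref{rem:commutation} and the Schur/isotypic argument behind Theorem~\ref{thm:converse}, $\mathbf{R}_G$ lies in the commutant of $G$. It is therefore block-scalar on each irreducible block, so the coherent projector $\mathbf{P}$ is a sum of isotypic projectors and commutes with every $\pi_g$. Under the floor model, the compression $\mathbf{P}^\perp \mathbf{R}_G \mathbf{P}^\perp$ equals $\sigma^2 \mathbf{P}^\perp$. Hence the residue's group-averaged covariance is isotropic, which is the second sentence of the theorem. Consequently, for the matched decomposition every $e_j = \sigma^2$ in any basis of the complement, and $q_j = 1/(M-k)$ is flat.

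The third step is the extremal bound. For any probability vector on $M-k$ modes, Cauchy--Schwarz gives $\sum_j q_j^2 \ge 1/(M-k)$. Hence $H_2(q) \le \log(M-k)$, with equality if and only if $q$ is uniform. The matched decomposition attains this ceiling, so no admissible competitor can exceed it. This is the same participation-ratio maximality that makes $\kappa = 1 + D_2$ largest on flat spectra in Section~\ref{sec:fourthms}. To show the matched choice is also essentially distinguished, I would add Cauchy interlacing, $\lambda_{j+k} \le \mu_j \le \lambda_j$, for the eigenvalues $\mu_j$ of the compression of $\mathbf{R}_G$ to any $\mathcal{S}^\perp$. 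Combined with the budget constraint and Schur-concavity of $H_2$, interlacing shows that a different $\mathcal{S}$ can only tie by also being flat. When the coherent eigenvalues are separated from the floor, a flat tie forces $\mathcal{S}^\perp$ to coincide with the floor eigenspace.

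The main obstacle is the floor-isotropy hypothesis in the second step. Taken literally, without a model, the eigenvalues of $\mathbf{R}_G$ below the coherent threshold need not be exactly equal: with a finite-sample $\hat{\mathbf{R}}_G$, or a coloured background, they are only approximately equal. I would first state the theorem under the exact floor model $\mathbf{R}_G = \mathbf{R}_{\mathrm{coh}} + \sigma^2 \mathbf{I}$ with $\mathbf{R}_{\mathrm{coh}}$ of rank $k$ in the commutant, where the argument above is complete. For the general case I would try a perturbative statement. If the floor eigenvalues lie in $[\sigma^2(1-\epsilon), \sigma^2(1+\epsilon)]$, the matched residue satisfies $H_2 \ge \log(M-k) - O(\epsilon^2)$, and interlacing bounds every competitor by the same ceiling, so the matched decomposition is optimal up to $O(\epsilon^2)$. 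The exact statement is recovered as $\epsilon \to 0$.
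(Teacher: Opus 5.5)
The paper gives no proof of this theorem; it defers proofs to a companion treatment. So there is no route to compare against, and your proposal has to be judged on its own. On those terms it is correct in the setting where you place it, and that setting is forced. The setting is the exact floor model $\mathbf{R}_G=\mathbf{R}_{\mathrm{coh}}+\sigma^2\mathbf{I}$, with $\mathbf{R}_{\mathrm{coh}}$ of rank $k$ in the commutant and modal energies read as expected energies $\mathbb{E}|\mathbf{v}_j^H\mathbf{x}|^2$. There the result is immediate: every residue energy equals $\sigma^2$, and Cauchy--Schwarz caps $H_2$ at $\log(M-k)$.

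Be explicit, though, that the floor model is a hypothesis; Step~2 does not establish it. Schur's lemma gives only $\mathbf{R}_G\cong\bigoplus_i A_i\otimes\mathbf{I}_{d_i}$ with $A_i$ an arbitrary positive $m_i\times m_i$ matrix.
\begin{itemize}
\item This is not scalar on an isotypic block when $m_i>1$, so $\mathbf{P}$ need not be a sum of isotypic projectors.
\item $\mathbf{P}$ does still commute with every $\pi_g$, because it is a spectral projector of a commutant element. That commutation is what identifies the residue's group-averaged covariance with $\mathbf{P}^\perp\mathbf{R}_G\mathbf{P}^\perp$.
\item Nothing in Schur's lemma equalizes the scalars carried by different irreducibles in the complement.
\end{itemize}

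Note also that your trace budget already pins the subspace. By Ky Fan, $\Tr(\mathbf{P}_{\mathcal{S}}^\perp\mathbf{R}_G)\ge\sum_{j>k}\lambda_j$, with equality iff $\mathcal{S}^\perp$ is the floor eigenspace (when $\lambda_k>\lambda_{k+1}$). Interlacing and Schur-concavity are therefore unnecessary. The only freedom left among competitors is the choice of orthonormal basis inside that single subspace.

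That remaining freedom is why your perturbative extension cannot work: off the floor model the statement is false, not approximately true. Write $n=M-k$, and let $\mathbf{u}_{k+1},\dots,\mathbf{u}_M$ be the group-adapted eigenvectors spanning the complement. Define $\mathbf{v}_j=n^{-1/2}\sum_{l=1}^{n}e^{2\pi i jl/n}\mathbf{u}_{k+l}$.
\begin{itemize}
\item This basis spans the same complement and meets the same budget.
\item It gives $\mathbf{v}_j^H\mathbf{R}_G\mathbf{v}_j=n^{-1}\sum_{l=1}^{n}\lambda_{k+l}$ for every $j$, hence $H_2=\log n$ exactly.
\item The group-adapted basis gives strictly less as soon as two floor eigenvalues differ.
\end{itemize}
More generally, by Schur--Horn the diagonal of a Hermitian matrix in any orthonormal basis is majorized by its eigenvalues. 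So the group-adapted eigenbasis is the R\'{e}nyi-2 \emph{minimizer} among bases of its own complement.

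Your $O(\epsilon^2)$ statement is true only in the vacuous sense that every candidate lies within $O(\epsilon^2)$ of the universal ceiling $\log n$. That ceiling comes from Cauchy--Schwarz alone; interlacing plays no role in it. The exact claim is not recovered as $\epsilon\to0$; it fails whenever the floor eigenvalues are not all equal. Drop the perturbative plan and state the exact isotropic-floor model as a hypothesis of the theorem. That model is a faithful reading of the paper's phrase ``isotropic floor,'' it is the only regime in which the claim holds, and within it your proof is complete.
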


These two theorems complete the reconstruction story: an arbitrary signal is recovered exactly as a structured part, which is the compressible head of the group-adapted spectrum and the portion that the variance-reduction machinery of the preceding sections exploits, plus an algebraic residue, which is the incompressible, isotropic floor. The structured part is what algebraic diversity can analyze, estimate, and compress; the residue is what it cannot, and the decomposition makes the boundary between them precise and, importantly, relative to the matched group rather than absolute. Proofs and a detailed empirical validation, including the exact orthogonality to machine precision, the energy-compaction knee at $D_2$, and the maximum-entropy coefficient law, are developed in a companion treatment; here the decomposition serves to connect the variance-reduction view of the preceding sections to the energy-structure view of compressed sensing that follows.

\subsection{Compressed sensing}
\label{sec:cs}

Compressed sensing (CS), developed by Cand\`{e}s, Romberg, and Tao~\cite{candes2006} and by Donoho~\cite{donoho2006}, recovers signals sparse in a dictionary $\boldsymbol{\Phi}$ from undersampled linear measurements under suitable incoherence conditions. CS is one of the central advances of the last two decades in signal processing and statistics, and has reorganized both fields. Applying CS in practice involves several interconnected challenges, among them the choice or learning of a suitable sparsifying dictionary, the design of the measurement operator, the choice of reconstruction algorithm, and robustness to model mismatch when real signals deviate from ideal sparsity. Each of these challenges has received substantial attention in the CS literature, and each remains fundamental in its own right.

Dictionary construction is the challenge to which AD contributes most directly, and we describe the contribution here in the form of an observation together with a proposed recipe; the observation is rigorous, but the recipe is at present conjectural and awaits experimental validation. The observation is the following. When a class of signals admits a common algebraic symmetry group $G$, the orbit of a seed vector $\mathbf{v}$ under the group action,
\begin{equation}\label{eq:orbit_dict}
\boldsymbol{\Phi}_{G, \mathbf{v}} \;=\; \{\pi_g \mathbf{v} : g \in G\},
\end{equation}
constitutes a dictionary whose atoms inherit the algebraic structure of $G$ by construction. The dictionary is specified by a small amount of information, namely the group $G$ together with a seed vector $\mathbf{v}$, rather than by the much larger training set that a general learned dictionary would require. The blind group-matching methodology of Section~\ref{sec:blindmatching} suggests a recipe for using this observation in practice: identify $G$ via $D_{CV}$ (or its continuous relaxation) from a small number of observations, estimate a seed vector $\mathbf{v}$ from the dominant eigenvector of the group-averaged estimator, and construct the orbit dictionary from the identified group-seed pair. Whether the AD viewpoint contributes usefully to the other CS challenges enumerated above (measurement design, reconstruction, and robustness to sparsity mismatch) is not addressed here and is a separate question for future work.

The proposed recipe is at present \emph{conjectural} in its empirical claim, though the structural foundation under it is now established rather than assumed. The complementary decomposition of Section~\ref{sec:residue} makes precise what the orbit dictionary can and cannot represent: the structured part $\mathbf{x}_{\mathrm{str}}$ is the compressible head of the group-adapted spectrum, sparsely representable in the orbit dictionary up to the effective dimension measured by the participation ratio $D_2$, while the algebraic residue $\mathbf{x}_{\mathrm{res}}$ is the incompressible, maximum-entropy floor that no orbit dictionary of the matched group concentrates. This is the algebraic counterpart of the sparse-plus-error decomposition that underlies compressed sensing: it identifies, before any reconstruction is attempted, the irreducible error floor against which any AD-CS recovery must be measured, and it does so as a theorem rather than an empirical observation. What remains conjectural is therefore narrower than before: not whether a signal decomposes into a structured part and a residue, which Section~\ref{sec:residue} establishes, but whether the orbit dictionary built from the blindly identified group yields a practical reconstruction advantage over an unstructured learned dictionary on sparse-recovery problems. That comparison has not been performed at the time of writing, and no theoretical result yet establishes the superiority of orbit dictionaries over learned dictionaries in reconstruction quality; the structural argument, that a dictionary determined by a finite group action is specified by $|G|$ parameters rather than the much larger number a general learned dictionary requires, is suggestive of reduced sample complexity in the regime where the orbit hypothesis holds, but the practical payoff remains a matter for empirical work. Figures~\ref{fig:adcs_dense}, \ref{fig:adcs_sparse}, and \ref{fig:adcs_mixed} illustrate the conjectural combined AD-CS recovery on three test signals of different structural character: dense-but-symmetric, sparse-but-asymmetric, and sparse-and-symmetric.

\begin{figure}[t]
\centering
\includegraphics[width=\columnwidth]{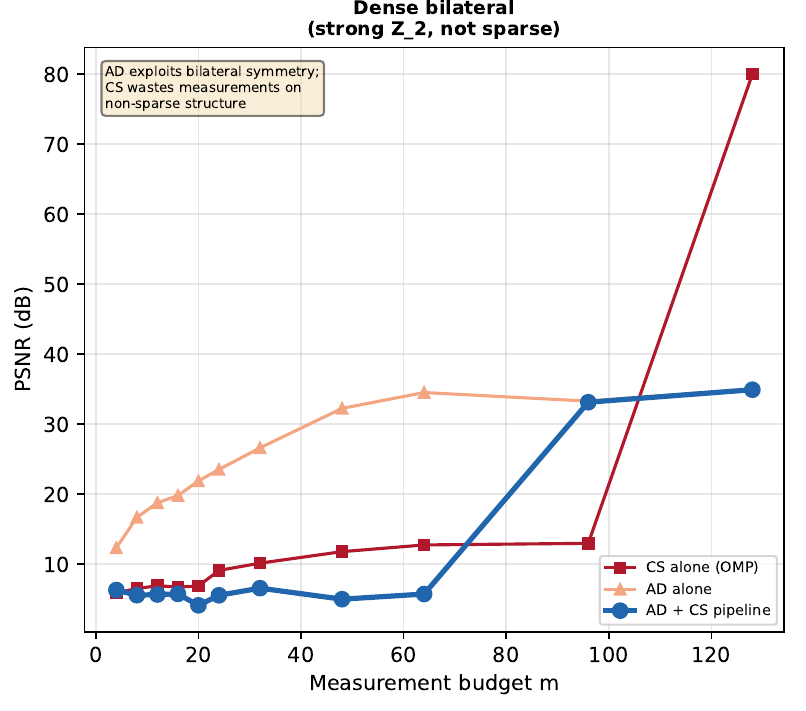}
\caption{Conjectural AD-CS recovery, dense bilateral signal at 25~dB SNR (M=128, strong $\mathbb{Z}_2$ symmetry, not sparse). AD exploits bilateral symmetry; CS wastes measurements on non-sparse structure; the AD+CS pipeline tracks the better of the two across the measurement budget.}
\label{fig:adcs_dense}
\end{figure}

\begin{figure}[t]
\centering
\includegraphics[width=\columnwidth]{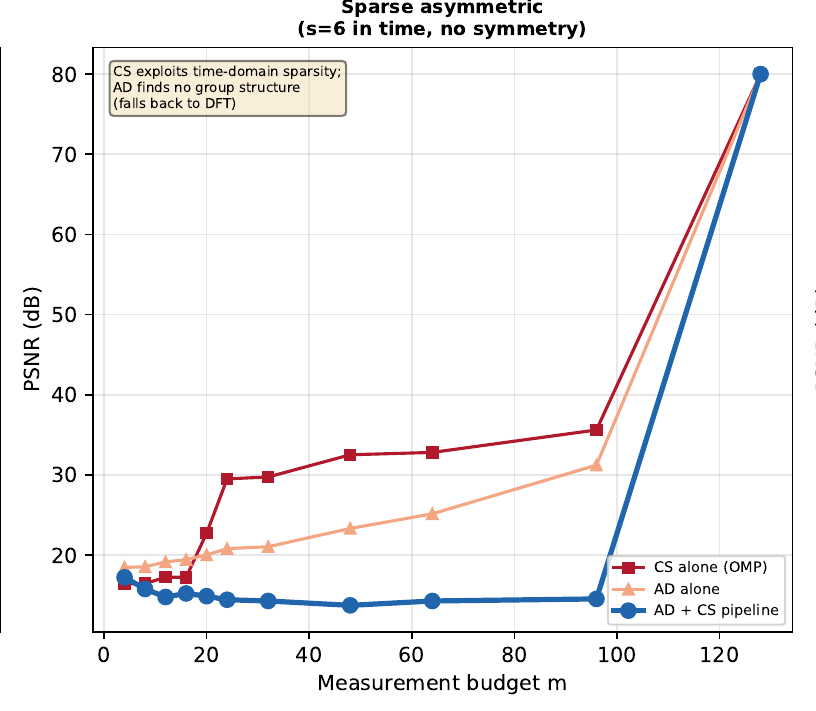}
\caption{Conjectural AD-CS recovery, sparse asymmetric signal at 25~dB SNR (M=128, $s = 6$ in time, no group symmetry). CS exploits time-domain sparsity; AD finds no group structure and falls back to the DFT; the AD+CS pipeline matches CS performance.}
\label{fig:adcs_sparse}
\end{figure}

\begin{figure}[t]
\centering
\includegraphics[width=\columnwidth]{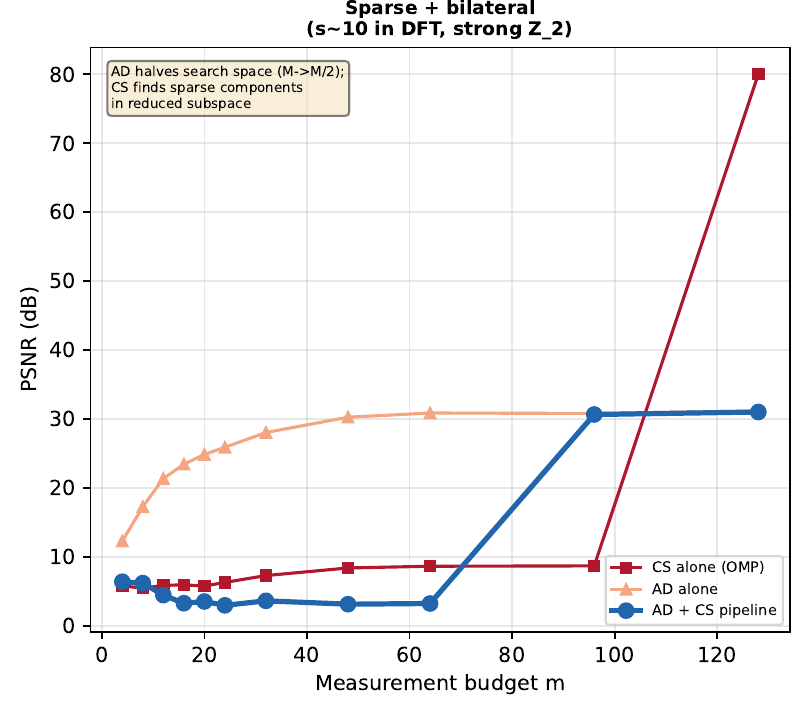}
\caption{Conjectural AD-CS recovery, sparse-and-bilateral signal at 25~dB SNR (M=128, $s \approx 10$ in DFT domain, strong $\mathbb{Z}_2$ symmetry). AD halves the search space ($M \to M/2$); CS finds the sparse components in the reduced subspace; the AD+CS pipeline obtains the best performance of the three settings.}
\label{fig:adcs_mixed}
\end{figure}

\section{Conclusion}
\label{sec:conclusion}

We have taken the algebraic diversity viewpoint developed for single-observation second-order measurement in~\cite{thornton2026ad} and broadened its reach in five directions.

\emph{Rank promotion} (Section~\ref{sec:rankpromote}) stratifies scalar data into tensorial form admitting nontrivial group action, allowing the $(G, L)$ continuum to apply outside the vector-observation setting; the blood-pressure example illustrates the mechanism, and the Trivial Group Embedding Theorem together with the General Algebraic Averaging Theorem (proved for the outer product, scalar symmetric statistics, and via Rao--Blackwell, conditional for higher-degree statistics pending the Clebsch--Gordan tensor lemma) establishes that the law of large numbers is the trivial-group instance of a broader algebraic averaging principle. The structural coding rate conjecture $n^* \approx \lceil 2^{H_{\mathrm{struct}}} \rceil$ provides the source-coding analog within this rank-promoted setting.

The \emph{eigentensor hierarchy} (Section~\ref{sec:eigentensor}) handles nested symmetry via wreath products and admits an $O(M \log M)$ tensor-FFT computation in the Abelian case; the abelianization trick extends this to non-Abelian groups with a quantified loss.

\emph{Blind group matching} (Section~\ref{sec:blindmatching}) evolves from an early spectral-concentration criterion (shown to have orbit-size bias) through a cross-validation criterion $D_{CV}$ to a polynomial-time continuous relaxation via a double-commutator generalized eigenvalue problem on the finite-dimensional Lie algebra $\mathfrak{u}(M)$. A closed-form commutator decomposition gives the eigenvalue-difference formula $\|[\mathbf{P}_\sigma, \mathbf{R}]\|_F^2 = \sum_k(\lambda_k - \lambda_{\sigma(k)})^2$ for permutation residuals, and the Sequential GEVP with group-theoretic deflation produces a subgroup $G_K \subseteq \Aut(\mathbf{R})$ in at most $\lceil \log_2 |G_K| \rceil$ accepted iterations; full recovery $G_K = \Aut(\mathbf{R})$ is established when $\Aut(\mathbf{R}) = S_M$, the proper-subgroup case is recovered by a span-search form whenever the candidate basis spans a generating set (leaving only the a~priori basis-richness question open), and the degenerate-spectrum case is partially resolved, with the removable degeneracy handled by a null-space tiebreaker that selects generators by symmetry defect and the intrinsic case (distinct generating sets indistinguishable at second order) remaining open. The overall methodology is organized as library discovery, in which the double-commutator GEVP proposes candidate groups directly from data, followed by a two-tier selection that prefilters candidates by effective dimension and then scores the survivors for consistency. Signals carrying several symmetries at once, which a single group cannot match, are addressed by iterative stripping, an algebraic-diversity analog of the CLEAN algorithm that peels one structured component at a time and isolates the algebraic residue; this partially solves the mixed-structure case, separating components with distinct matched groups above a soft signal-to-noise-ratio floor while leaving the same-group and subspace-overlap cases open. The empirical reliability of the methodology is supported by a variance-scaling dichotomy that distinguishes matched (convergent power-law variance) from mismatched (constant, SNR-insensitive variance) groups across the AD diagnostic suite.

\emph{Blind signal processing} (Section~\ref{sec:blindproblems}) uses blind group matching as the bridge that extends the AD toolkit from measurement to blind and adaptive problems: the cost-symmetry matching principle predicts the residual ambiguity of a blind algorithm with a well-defined cost symmetry group $G_{\mathrm{cost}}$ as the coset space $G_{\mathrm{cost}} / G_{\mathrm{sig}}$, and the CMA residual phase standard deviation $45^\circ/\sqrt{3}$ is matched within $1.6^\circ$ on 3GPP TDL multipath. A further set of blind problems covered by the framework (BSS, carrier recovery, timing recovery, blind channel identification, blind dictionary discovery) is briefly indicated in Section~\ref{sec:conjectures}, with detailed treatments left to forthcoming companion papers.

\emph{Information structure theory} (Section~\ref{sec:fourthms}) formalizes the structural capacity $\kappa$ as the operational measure of single-observation estimation efficiency through four theorems paralleling Shannon's foundational results, complemented by a finite-dimensional conjugate capacity bound $\kappa_A \cdot \kappa_B \leq 4/c^2$ for non-commuting Hermitian generators on $\mathbb{C}^M$. The law of large numbers is recovered as the $\kappa = 1$ degenerate case.

An expanded Erlangen-program reading (Section~\ref{sec:erlangen}) distinguishes three information-theoretic regimes: Shannon's information content (R\'{e}nyi-1), the AD structural capacity (R\'{e}nyi-2), and the von Neumann entropy of quantum measurement (R\'{e}nyi-1 on the density-matrix spectrum). Content and structure are complementary functionals of the same eigenvalue distribution. The relationship to four streams of prior work (Section~\ref{sec:related}) is treated without claim of subsumption: AD is the data-driven counterpart of classical invariant estimation (Section~\ref{sec:invariant}) in which the matched group is identified from observations rather than supplied by the problem specification; AD is orthogonal to and composable with the minimax and convex-optimization tradition (Section~\ref{sec:minimax}) in which estimators are designed to be robust across uncertainty sets; AD supplies both a single-observation estimator and a data-driven model-selection step that algebraic signal processing (Section~\ref{sec:asp}) leaves to the designer; and AD contributes a complementary decomposition of a signal into a group-structured part and an algebraic residue, which grounds an orbit-based dictionary construction (Section~\ref{sec:cs}) whose structural foundation is established while its empirical advantage over learned dictionaries remains conjectural on the compressed-sensing side.

Open problems noted in the development include: a formal proof of the structural coding rate conjecture; a formal proof of the GAAT for general statistics beyond the outer product; a proof of the converse for the structural source coding theorem (that $\kappa$ equals the CRB universally rather than only in the established Gaussian outer-product case); efficient non-Abelian fast transforms generalizing the Abelian tensor-FFT; the uniqueness and order-independence of the iterative-stripping decomposition of mixed-structure signals when component subspaces overlap, and its extension to components sharing a matched group (the distinct-group case now being partially resolved, Section~\ref{sec:mixed}); characterization of which bases are rich enough for the span-search Sequential GEVP to recover $\Aut(\mathbf{R})$ in full when $\Aut(\mathbf{R})$ is a proper subgroup of $S_M$ (Conjecture~\ref{conj:practical}), the recovery procedure itself being established given a sufficiently rich basis; and the experimental validation of the AD orbit-dictionary conjecture for compressed sensing. Several of these admit concrete next-step experiments, and progress on them will determine how much further the AD viewpoint extends.


\end{document}